\newtheorem{theorem}{Theorem}[section]
\newtheorem{proposition}[theorem]{Proposition}
\newtheorem{corollary}[theorem]{Corollary}
\newtheorem{definition}[theorem]{Definition}
\DeclareMathOperator*{\argmin}{arg\,min}
\DeclareMathOperator*{\E}{\mathrm{E}}
\newcommand{\A}{\mathcal{A}}
\newcommand{\F}{\mathcal{F}}
\newcommand{\C}{\mathcal{C}}
\newcommand{\D}{\mathcal{D}}
\renewcommand{\S}{\mathcal{S}}
\newcommand{\I}{\mathcal{I}}
\newcommand{\T}{\mathcal{T}}
\newcommand{\U}{\mathcal{U}}
\newcommand{\R}{\mathcal{R}}
\newcommand{\M}{\mathcal{M}}
\renewcommand{\O}{\mathcal{O}}
\renewcommand{\L}{\mathcal{L}}
\renewcommand{\P}{\mathcal{P}}
\newcommand{\e}{\mathcal{E}}
\newcommand{\RBall}{\mathrm{RBall}}
\begin{document}
\title{A Lottery Model for Center-type Problems With Outliers \thanks{Research supported in part by NSF Awards CNS-1010789 and CCF-1422569, and by a research award from Adobe, Inc.  A preliminary version of this work appeared in the \emph{Proc. International Workshop on Approximation Algorithms for Combinatorial Optimization Problems} (APPROX), 2017.}}

\author{  
David G. Harris\thanks{Department of Computer Science, University of Maryland,
College Park, MD 20742. Email:
\texttt{davidgharris29@gmail.com}}
\and
Thomas Pensyl\thanks{Department of Computer Science, University of Maryland, College Park, MD 20742. Email: \texttt{tpensyl@cs.umd.edu}}
\and
Aravind Srinivasan\thanks{Department of Computer Science and Institute for Advanced Computer Studies, University of Maryland, College Park, MD 20742.  Email: 
\texttt{srin@cs.umd.edu}} 
\and Khoa Trinh\thanks{Department of Computer Science, University of Maryland, College Park, MD 20742. Email: \texttt{khoa@cs.umd.edu}}}

\date{}
\maketitle
\begin{abstract}
In this paper, we give tight approximation algorithms for the $k$-center and matroid center problems with outliers. Unfairness arises naturally in this setting: certain clients could always be considered as outliers. To address this issue, we introduce a lottery model in which each client $j$ is allowed to submit a parameter $p_j \in [0,1]$ and we look for a random solution that covers every client $j$ with probability at least $p_j$. Our techniques include a randomized rounding procedure to round a point inside a matroid intersection polytope to a basis plus at most one extra item such that all marginal probabilities are preserved and such that a certain linear function of the variables does not decrease in the process with probability one. 
\end{abstract}

\section{Introduction}

The classic $k$-center and Knapsack Center problems are known to be approximable to within factors of $2$ and $3$ respectively \cite{hoch_shmoys_kc}. These results are best possible unless P=NP \cite{Hsu, hoch_shmoys_kc}. In these problems, we are given a metric graph $G$ and want to find a subset $\S$ of vertices of $G$ subject to either a cardinality constraint or a knapsack constraint such that the maximum distance from any vertex to the nearest vertex in $\S$ is as small as possible. We shall refer to vertices in $G$ as \emph{clients}. Vertices in $\S$ are also called \emph{centers}.

It is not difficult to see that a few \emph{outliers} (i.e., very distant clients) may result in a very large optimal radius in the center-type problems. This issue was raised by Charikar et. al. \cite{charikar_robustkc}, who proposed a \emph{robust} model in which we are given a parameter $t$ and only need to serve $t$ out of given $n$ clients (i.e. $n-t$ \emph{outliers} may be ignored in the solution). Here we consider three robust center-type problems: the Robust $k$-Center (\textsf{RkCenter}) problem, the Robust Knapsack Center (\textsf{RKnapCenter}) problem, and the Robust Matroid Center (\textsf{RMatCenter}) problem.

Formally, an instance $\I$ of the \textsf{RkCenter} problem consists of a set $V$ of vertices, a metric distance $d$ on $V$, an integer $k$, and an integer $t$. Let $n = |V|$ denote the number of vertices (clients). The goal is to choose a set $\S \subseteq V$ of centers (facilities) such that (i) $|\S| \leq k$, (ii) there is a set of \emph{covered} vertices (clients) $\C \subseteq V$ of size at least $t$, and (iii) the objective function
$$ R := \max_{j \in \C} \min_{i \in \S} d(i,j)$$
is minimized. 

In the \textsf{RKnapCenter} problem, each vertex $i \in V$ has a weight $w_i \in [0,1]$, and the cardinality constraint (i) is replaced by the knapsack constraint: $\sum_{i \in \S}w_i \leq 1$. Similarly, in the \textsf{RMatCenter} problem, the constraint (i) is replaced by a matroid constraint: $\S$ must be an independent set of a given matroid $\M$. Here we assume that we have access to the rank oracle of $\M$.

In \cite{charikar_robustkc}, Charikar et. al. introduced a greedy algorithm for the \textsf{RkCenter} problem that achieves an approximation ratio of $3$. Recently, Chakrabarty et. al. \cite{chakrabarty_et_al} give a $2$-approximation algorithm for this problem. Since the $k$-center problem is a special case of the \textsf{RkCenter} problem, this ratio is best possible unless P=NP.

The \textsf{RKnapCenter} problem was first studied by Chen et. al. \cite{jianli_kc}, which showed that one can achieve an approximation ratio of $3$ if allowed to slightly violate the knapsack constraint by a factor of $(1+\epsilon)$. It is still unknown whether there exists a true approximation algorithm for this problem. The current inapproximability bound is still $3$ due to the hardness of the Knapsack Center problem. The current best approximation guarantee for the \textsf{RMatCenter} problem is $7$ by Chen et. al. \cite{jianli_kc}. This problem has a hardness of $(3-\epsilon)$ via a reduction from the $k$-supplier problem. 

From a practical viewpoint, unfairness arises inevitably in the robust model: some clients might always be considered as outliers and hence not covered within the guaranteed radius. To address this issue, we introduce a \emph{lottery model} for these problems. The idea is to randomly pick a solution from a \emph{public list} such that each client $j \in V$ is guaranteed to be covered with probability at least $p_j$, where $p_j \in [0,1]$ is the success rate requested by $j$. This is also motivated by the fact that different clients may have different tolerances of getting connected to their closest facility. One possible way of determining the $p_j$ values is to let each client $j$ pay for the chance of being served. Note that the robust model is a special case when $p_j = 0$ for all $j \in V$. Similarly, when all $p_j$'s are equal to $1$, it becomes the standard model where all clients must be connected.

The lottery model might also be useful in the context of clustering. Recall that clustering is a fundamental task in unsupervised machine learning. Basically, we want to partition a set of data points to clusters in such a way that the points in the same cluster are ``similar'' to each other. The $k$-center clustering is one of the popular approaches to this task. (See also $k$-means clustering \cite{svensson, kanungo} and $k$-median clustering \cite{byrka_kmed, li_svensson, book:ws}.) Naturally, the robust model can be applied to get rid of some ``bad data points'' or ``noise'' and hence improve the overall quality of all clusters. Here over-fitting may occur when this model misclassifies some good points as outliers. The lottery model offers the flexibility to decide whether a point should be included in the solution via the $p_j$ values.

In this paper, we introduce new approximation algorithms for these robust center problems under the lottery model. (Note that this model has been used recently for the Knapsack Center problems  \cite{srdr}, although the techniques and problems in that paper are different from ours.) We also propose improved approximation algorithms for the \textsf{RkCenter} problem and the \textsf{RMatCenter} problem.

\subsection{The Lottery Model}
In this section, we formally define our lottery model for the above-mentioned problems. First, the \emph{Fair} Robust $k$-Center (\textsf{FRkCenter}) problem is formulated as follows. Besides the parameters $V,d,k$ and $t$, each vertex $j \in V$ has a ``target'' probability $p_j \in [0,1]$. We are interested in the minimum radius $R$ for which there exists a distribution $\D$ on subsets of $V$ such that a set $\S$ drawn from $\D$ satisfies the following constraints:
\begin{description}
	\item[Coverage constraint:] $|\C| \geq t$ with probability one, where $\C$ is the set of all clients in $V$ that are within radius $R$ from some center $\S$,
	\item[Fairness constraint:] $\Pr[j \in \C] \geq p_j$ for all $j \in V$, where $\mathcal{C}$ is as in the coverage constraint,
	\item[Cardinality constraint:] $|\S| \leq k$ with probability one.
\end{description}
Here we aim for a polynomial-time, randomized algorithm that can sample from $\D$. Note that the \textsf{RkCenter} is a special of this variant in which all $p_j$'s are set to be zero.

The \emph{Fair} Robust Knapsack Center (\textsf{FRKnapCenter}) problem and \emph{Fair} Robust Matroid Center (\textsf{FRMatCenter}) problem are defined similarly except that we replace the cardinality constraint by a knapsack constraint and a matroid constraint, respectively. More formally, in the \textsf{FRKnapCenter} problem, each vertex $i$ has a weight $w_i \in [0,1]$, and we require the total weight of centers in $\S$ to be at most $1$ with probability one. Similarly, in the \textsf{FRMatCenter} problem, we are given a matroid $\M$ and we require the solution $\S$ to be an independent set of $\M$ with probability one.

\subsection{Our contributions and techniques}
First of all, we give tight approximation algorithms for the \textsf{RkCenter} and \textsf{RMatCenter} problems.

\begin{theorem}
There exist a $2$-approximation algorithm for the \textsf{RkCenter} problem \footnote{A 2-approximation algorithm has also been found independently by Chakrabarty et. al. \cite{chakrabarty_et_al}, and in a private discussion between Marek Cygan and Samir Khuller. Our algorithm here is different from the algorithm in \cite{chakrabarty_et_al}.} and a $3$-approximation algorithm for the \textsf{RMatCenter} problem. 
\label{theorem:standard_models}
\end{theorem}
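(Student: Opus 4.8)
\quad The plan is to treat both problems within a single LP-based filtering framework, differing only in the final rounding step. Since the optimal radius $R^*$ is one of the $O(n^2)$ pairwise distances, guess it (try all candidates and keep the best feasible outcome); so fix $R^*$ and assume a feasible integral solution of radius $R^*$ exists. Write the natural LP relaxation with an opening variable $x_i\in[0,1]$ for every vertex $i$ and a coverage variable $y_j\in[0,1]$ for every client $j$, subject to: $x$ lies in the independence polytope of $\M$ (resp.\ $\sum_i x_i\le k$ for \textsf{RkCenter}); $y_j\le\sum_{i:\,d(i,j)\le R^*}x_i$ for all $j$; and $\sum_j y_j\ge t$. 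The integral optimum of radius $R^*$ certifies feasibility, and the LP is solvable in polynomial time (the rank oracle of $\M$ gives separation over its polytope).

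Now filter. Process the clients in nonincreasing order of $y_j$, greedily building a set $M$ of ``monarchs'': add the current client to $M$ iff it is at distance more than $2R^*$ from every monarch chosen so far. Assign each other client $j$ to the \emph{first} monarch within distance $2R^*$ of $j$ (one exists, else $j$ would itself be a monarch), obtaining clusters $\{C_m\}_{m\in M}$ that partition the client set. Two facts drive the argument: (i) the balls $B(m,R^*)$ for $m\in M$ are pairwise disjoint, since monarchs are more than $2R^*$ apart; and (ii) for every $j\in C_m$ one has $y_j\le y_m$ — indeed $j$, being within $2R^*$ of no monarch chosen before $m$, would itself have become a monarch when processed if it had been processed before $m$, so $j$ is processed after $m$ and $y_j\le y_m$. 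Put $z_m:=\min\{1,\ \sum_{i\in B(m,R^*)}x_i\}$; the LP constraint at $m$ gives $z_m\ge y_m$, hence $z_m|C_m|\ge\sum_{j\in C_m}y_j$, and summing over $m\in M$ yields $\sum_{m\in M}z_m|C_m|\ge\sum_j y_j\ge t$. For \textsf{RkCenter}, disjointness and feasibility of $x$ also give $\sum_{m\in M}z_m\le k$.

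For \textsf{RkCenter} the rounding is then immediate: open facilities at the (at most) $k$ monarchs whose clusters are largest. Since $0\le z_m\le 1$ and $\sum_m z_m\le k$, the quantity $\sum_m z_m|C_m|$ is at most the sum of the $k$ largest values $|C_m|$; as the former is $\ge t$, the $k$ chosen clusters already contain $\ge t$ clients, each within $2R^*$ of its opened monarch. This gives the claimed $2$-approximation.

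For \textsf{RMatCenter} we cannot open facilities at the monarch vertices, so we instead open one genuine facility inside a chosen ball $B(m,R^*)$, which places every client of $C_m$ within $R^*+2R^*=3R^*$ of it. Restrict $x$ to the disjoint union $U=\bigcup_{m\in M}B(m,R^*)$ and, on any ball whose total $x$-mass exceeds $1$, scale that ball's entries down so its mass becomes exactly $z_m$; the resulting point $x'$ is coordinatewise at most $x$, hence still lies in the restricted independence polytope of $\M$ by down-closedness, and it satisfies $x'(B(m,R^*))=z_m\le 1$, i.e.\ it lies in the polytope of the partition matroid $\P$ permitting at most one element per ball. The matroid-intersection polytope $P(\M|_U)\cap P(\P)$ is integral, so maximizing the linear function $\sum_{i\in U}x_i\,|C_{m(i)}|$ over it (where $m(i)$ is the monarch owning $i$'s ball) is attained, in polynomial time, at an integral common independent set $\hat\S$ of value at least that at $x'$, namely $\sum_{m}z_m|C_m|\ge t$. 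Since $\hat\S$ contains at most one facility per ball, it serves $\ge t$ distinct clients within radius $3R^*$ and is independent in $\M$, giving the $3$-approximation. I expect the crux to be this last rounding: the ``first blocking monarch'' assignment is exactly what makes $y_j\le y_m$ (hence the $t$-coverage bound) go through, and one must check carefully that $x'$ really lands in $P(\M|_U)\cap P(\P)$ and that optimizing the coverage objective there — rather than taking an arbitrary vertex — is what preserves the ``$\ge t$ served'' guarantee.
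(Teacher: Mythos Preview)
Your proposal is correct and follows essentially the same route as the paper: solve the natural LP, filter greedily in decreasing order of fractional coverage to obtain pairwise-disjoint radius-$R^*$ balls with cluster sizes $|C_m|$, then for \textsf{RkCenter} open the $k$ monarchs with largest clusters (using $\sum_m z_m\le k$, $z_m\le 1$ to conclude the top-$k$ sum is $\ge t$), and for \textsf{RMatCenter} optimize the coverage-weighted objective over the intersection of $\M$ with the one-per-ball partition matroid. The only cosmetic differences are that the paper carries explicit assignment variables $x_{ij}$ (so its ``$s_j$'' plays the role of your $y_j$ and automatically satisfies $s_j\le 1$, avoiding the $\min\{1,\cdot\}$ truncation), and its filtering criterion is ``$F_j\cap F_k\neq\emptyset$'' rather than ``$d(j,k)\le 2R^*$''; neither changes the argument.
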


Our main results for the lottery model are summarized in the following theorems.

\begin{theorem}
For any given constant $\epsilon>0$ and any instance $\I = (V, d, k, t, \vec{p})$ of the \textsf{FRkCenter} problem, there is a randomized polynomial-time algorithm $\A$  which can compute a random solution $\S$ such that
\begin{itemize}
	\item $|\S| \leq k$ with probability one,
	\item $|\C| \geq (1-\epsilon)t$, where $\C$ is the set of all clients within radius $2R$ from some center in $\S$ and $R$ is the optimal radius,
	\item $\Pr[j \in \C] \geq (1-\epsilon) p_j$ for all $j \in V$.
\end{itemize}

\label{thm:FRkCenter}
\end{theorem}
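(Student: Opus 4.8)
The plan is to reduce \textsf{FRkCenter}, via an LP relaxation and a clustering (filtering) step, to an instance of the randomized matroid-rounding primitive advertised in the abstract, which for this problem is the uniform-matroid special case. First I would guess the optimal radius $R$ by trying the $O(n^2)$ pairwise distances; for each guess, solve the natural relaxation with opening variables $x_i\in[0,1]$, covering variables $y_j\in[0,1]$, and constraints $\sum_i x_i\le k$, $y_j\le\sum_{i:\,d(i,j)\le R}x_i$, $\sum_j y_j\ge t$, and $y_j\ge p_j$. This LP is feasible whenever $R$ is at least the true optimum: the optimal distribution $\D$ certifies feasibility via $x_i=\Pr_{\S\sim\D}[i\in\S]$ and $y_j=\Pr_{\S\sim\D}[j\text{ is within }R\text{ of }\S]$. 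Fix an optimal fractional solution $(x^*,y^*)$; the values $y^*_j$, which satisfy $y^*_j\ge p_j$, will play the role of target coverage probabilities, up to the $(1-\epsilon)$ factor.

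Next I would cluster. Processing clients in non-increasing order of $y^*_j$, greedily select \emph{representatives} $j_1,\dots,j_m$ pairwise at distance $>2R$ such that every other client lies within $2R$ of some representative, and assign each client to one; call the clusters $C_1,\dots,C_m$, with sizes $n_v=|C_v|$. Since the representatives are $>2R$ apart, the radius-$R$ balls $F_v=\{i:d(i,j_v)\le R\}$ are pairwise disjoint, so $z_v:=\min\{1,\sum_{i\in F_v}x^*_i\}$ satisfies $\sum_v z_v\le k$. By the processing order and $y^*_j\ge p_j$ we get (i) $z_v\ge y^*_{j_v}=\max_{j\in C_v}y^*_j\ge\max_{j\in C_v}p_j$; and trivially (ii) opening the single center $j_v$ (which is a valid facility, since in these problems $\S\subseteq V$) puts every client of $C_v$ within radius $2R$. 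Moreover $\sum_v n_v z_v\ge\sum_v\sum_{j\in C_v}y^*_j=\sum_j y^*_j\ge t$. It now suffices to sample a random $S\subseteq\{1,\dots,m\}$ with (a) $\Pr[v\in S]=z_v$ for all $v$, (b) $|S|\le k$ with probability one, and (c) $\sum_{v\in S}n_v\ge(1-\epsilon)\sum_v n_v z_v$ with probability one; outputting $\S=\{j_v:v\in S\}$ then meets all three requirements of the theorem, since (ii) gives $|\C|\ge\sum_{v\in S}n_v$ and $\Pr[j\in\C]\ge z_{v(j)}$.

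The vector $(z_v)_v$ lies in the uniform-matroid polytope $\{z\ge 0:\sum_v z_v\le k\}$ (in the matroid-center variant the analogous reduced polytope is an intersection of the given matroid with a partition matroid from the clusters, which is why the abstract states the primitive at the matroid-intersection level of generality). Running that primitive with the linear objective $\sum_v n_v z_v$ yields an integral $S$ that preserves every marginal, never decreases the objective along the process, and is a basis of the matroid plus at most one extra element. To turn ``basis plus one element'' into a genuine $|S|\le k$ I would first scale $(x^*,y^*)$ down by $1-\epsilon$ before rounding, so that $|S|\le\lceil(1-\epsilon)k\rceil+1\le k$ once $k\ge 1/\epsilon$; this single scaling is exactly what produces the $(1-\epsilon)$ factors in both the coverage count and the fairness bound. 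The remaining regime $k=O(1/\epsilon)$, i.e. $k=O(1)$, is handled separately by a polynomial-size configuration LP: enumerate all subsets of at most $k$ centers, keep only those covering at least $t$ clients within radius $R$ (the optimal distribution puts all its mass on such subsets), and find a distribution over them meeting the demands $p_j$.

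The crux is the rounding primitive itself. Standard dependent or pipage rounding preserves marginals, reaches a near-basis, and makes a linear functional concentrate around its mean --- but here the coverage bound $\sum_{v\in S}n_v\ge(1-\epsilon)t$ must hold \emph{with probability one}, not in expectation and not merely with high probability. The technical heart is therefore a rounding that simultaneously (i) keeps the coverage functional non-decreasing on every sample path, (ii) preserves each marginal exactly, and (iii) halts at a basis plus at most one element. I expect to build it by repeatedly moving along alternating tight circuits of the (matroid-intersection) polytope, at each step taking whichever of the two admissible directions does not decrease the coverage and randomizing the step length so that marginals are preserved, and then arguing the process terminates at the claimed near-integral point; showing that a coverage-monotone direction always exists, and that this choice never conflicts with reaching a near-basis, is the step I expect to be delicate.
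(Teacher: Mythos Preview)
Your high-level plan coincides with the paper's: solve the natural LP, apply greedy filtering to obtain disjoint radius-$R$ clusters with representatives $V'$ and cluster sizes $c_j$, scale the fractional opening values by $1-\epsilon$, round while preserving both the marginals and the coverage functional $\sum_j c_j y'_j$, and dispose of small $k$ by enumerating all $O(n^{1/\epsilon})$ solutions and solving an LP for the mixing probabilities. Opening the representatives themselves gives the radius $2R$, exactly as you say.

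Where you diverge is in the rounding, and here you are making it harder than it is. For the uniform matroid the paper does not invoke any matroid-intersection machinery: after filtering, one has a vector $y'\in[0,1]^{V'}$ and the only requirements are that $\mathbf{1}\cdot y'$ and $\vec{c}\cdot y'$ be preserved. As long as at least three coordinates of $y'$ are fractional, the linear system $\delta\cdot\mathbf{1}=0$, $\delta\cdot\vec{c}=0$ (with $\delta$ supported on those coordinates) is underdetermined and has a nonzero solution; one then moves to $y'+a\delta$ with probability $b/(a+b)$ and to $y'-b\delta$ otherwise, where $a,b$ are the maximal steps keeping $y'\in[0,1]^{V'}$. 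This rounds at least one coordinate per iteration, preserves each marginal in expectation, and preserves both linear functionals \emph{deterministically in each direction}. At termination at most two fractional coordinates remain, which is why the paper scales so that $\sum_j y'_j\le(1-\epsilon)k\le k-2$ (threshold $k\ge 2/\epsilon$, giving a slack of two rather than one).

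Your description of the primitive --- ``take whichever of the two admissible directions does not decrease the coverage and randomize the step length so that marginals are preserved'' --- is the one genuine muddle. Preserving marginals requires randomizing over \emph{both} directions with the correct probabilities, not over step lengths within a single chosen direction; so you cannot first commit to the coverage-monotone sign and then randomize. The resolution is precisely the elementary one above: pick $\delta$ orthogonal to $\vec{c}$ so that \emph{both} signs leave the coverage functional unchanged, after which the randomization is free to preserve marginals. Once you see this two-constraints-three-unknowns observation, nothing delicate remains in the $k$-center case; the heavier bipartite/alternating-path rounding you sketch is reserved in the paper for the genuine matroid-constraint problem.
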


\begin{theorem}
For any $\epsilon > 0$ and any instance $\I = (V, d, w, B, t, \vec{p})$ of the \textsf{FRKnapCenter} problem, there is a randomized polynomial-time algorithm $\A$ which can return random solution $\S$ such that 
\begin{itemize}
	\item $\sum_{i \in \S} w_i \leq (1+\epsilon)B$ with probability one,
	\item $|\C| \geq t$, where $\C$ is the set of vertices within distance $3R$ of $\S$.
	\item $\Pr[j \in \C] \geq p_j$ for all $j \in V$.
\end{itemize}  
\label{thm:FRKnapCenter}
\end{theorem}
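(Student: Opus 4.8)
\noindent\textbf{Proof proposal for Theorem~\ref{thm:FRKnapCenter}.}
The plan is the standard ``guess the radius, solve an LP, cluster, round'' template for center problems, with the rounding redesigned to output a \emph{distribution} meeting the fairness marginals. First, guess the optimal radius $R$ by trying the $O(n^{2})$ values $d(i,j)$, and for each guess solve the LP with variables $x_i\in[0,1]$ (open facility $i$) and $y_j\in[0,1]$ (cover client $j$ within $R$), subject to $\sum_i w_i x_i\le B$, $\sum_j y_j\ge t$, $y_j\le\sum_{i:\,d(i,j)\le R}x_i$ for all $j$, and---crucially for fairness---$y_j\ge p_j$ for all $j$. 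Averaging a feasible distribution for the true optimum shows the LP is feasible for every $R\ge R^{*}$; since a larger $R$ only enlarges the feasible set, the \emph{smallest} feasible guess $R$ satisfies $R\le R^{*}$. Let $(x^{*},y^{*})$ be an optimal LP solution for it.

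\emph{Clustering and consolidation.} Scan the clients in non-increasing order of $y^{*}_j$ and greedily mark $c$ a \emph{center} if no marked center lies within $2R$; if $c$ is not marked, attach it to a center $c(j)$ within $2R$ that blocked it, so that $y^{*}_{c(j)}\ge y^{*}_j\ge p_j$. Let $C'$ be the centers; the balls $\{B(c,R)\}_{c\in C'}$ are pairwise disjoint, and every client attached to $c$ is within $2R$ of $c$, hence within $3R$ of every facility in $B(c,R)$. Set $d_c\ge 1$ to be the number of clients attached to $c$ (including $c$), $v_c:=\min_{i\in B(c,R)}w_i$, and $z_c:=\min\!\bigl(1,\sum_{i\in B(c,R)}x^{*}_i\bigr)$, so $z_c\ge y^{*}_c$. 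Disjointness of the balls and the scan order then give
\[
\sum_{c\in C'}d_c z_c\;\ge\;\sum_{c\in C'}d_c y^{*}_c\;\ge\;\sum_{j}y^{*}_j\;\ge\;t
\qquad\text{and}\qquad
\sum_{c\in C'}v_c z_c\;\le\;\sum_{c\in C'}\sum_{i\in B(c,R)}w_i x^{*}_i\;\le\;\sum_i w_i x^{*}_i\;\le\;B .
\]
So it suffices to sample a random set $T\subseteq C'$ with (i)~$\Pr[c\in T]\ge z_c$ for all $c$, (ii)~$\sum_{c\in T}d_c\ge t$ with probability one, (iii)~$\sum_{c\in T}v_c\le(1+\epsilon)B$ with probability one, and to output $\S$ consisting of the minimum-weight facility of $B(c,R)$ for each $c\in T$: then (i) gives $\Pr[j\in\C]\ge\Pr[c(j)\in T]\ge z_{c(j)}\ge p_j$, disjointness of clusters and (ii) give $|\C|\ge t$, and (iii) is the budget bound.

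\emph{Rounding the cluster vector.} Split $C'$ into heavy centers $H=\{c:v_c>\epsilon B\}$ and light centers $L$; since $\sum_{c\in H}z_c\le B/(\epsilon B)=1/\epsilon$, only a constant mass lies on $H$. The plan is: handle the heavy coordinates by a dedicated sub-rounding---e.g.\ enumerate the polynomially many supports of size $\le 1/\epsilon$ (for constant $\epsilon$) and commit to one with probability matched to the $z_c$'s---so that their marginals are met; then round the remaining light coordinates by an iterative/pipage-type procedure that repeatedly transfers probability between two fractional light coordinates keeping the coverage functional $\sum_c d_c z_c$ fixed and each marginal unchanged in expectation, terminating with at most one fractional light coordinate, which is rounded \emph{up} (this only increases $\sum_c d_c z_c$). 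Since each light $v_c\le\epsilon B$ and the initial weight is $\le B$, the single leftover light coordinate inflates the total weight by at most $\epsilon B$, so (iii) holds; (i) and (ii) hold by design; and averaging over the procedure's randomness gives a polynomial-time samplable distribution.

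\emph{Main obstacle.} The crux is the rounding: it must \emph{deterministically} keep both $\sum_c d_c z_c\ge t$ and $\sum_c v_c z_c\le(1+\epsilon)B$ while \emph{preserving every marginal in expectation}. A single pipage move fixes only one linear functional, so the coverage and budget constraints have to be decoupled---precisely what the heavy/light split and the $(1+\epsilon)$ slack accomplish, letting light items absorb all accumulated rounding error while the heavy ones are discretized away. This is a knapsack-flavoured specialization of the paper's general tool for rounding a point inside a matroid-intersection polytope ``to a basis plus at most one extra item, with a designated linear function never decreasing''; the remaining points---LP feasibility for the chosen $R$, and correctness of the within-ball consolidation when $\sum_{i\in B(c,R)}x^{*}_i>1$---are routine.
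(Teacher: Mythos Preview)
Your overall template (LP, filter, reduce to rounding a cluster vector $z$ with $\sum_c d_cz_c\ge t$ and $\sum_c v_cz_c\le B$) matches the paper, but the heavy/light decoupling in the rounding has a genuine gap: it does not deliver the with-probability-one guarantees on coverage and budget. For coverage, you round heavy coordinates merely to match marginals, so the realized heavy contribution $\sum_{c\in S_H}d_c$ equals $\sum_{c\in H}d_cz_c$ only in expectation; since your light pipage preserves only $\sum_{c\in L}d_cz_c$, the total can fall below $t$ whenever the heavy draw is unlucky (a single heavy center with $z_h=\tfrac12$ and large $d_h$ already breaks it). For budget, pipage that preserves $\sum_c d_cz_c$ does \emph{not} preserve $\sum_c v_cz_c$: your sentence ``the initial weight is $\le B$ and the single leftover light coordinate inflates the total by at most $\epsilon B$'' ignores that the integral part of the light vector \emph{after} pipage can already weigh far more than its initial fractional weight. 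With $m=2/\epsilon$ light centers of weight $\epsilon B$, all with $z_c=\tfrac12$ (so initial light weight $=B$) and geometrically varying $d_c$, a positive-probability pipage trajectory rounds every one of them to $1$, giving light weight $2B>(1+\epsilon)B$.

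The paper closes exactly this gap with a \emph{configuration LP}: it enumerates all $n^{O(1/\epsilon)}$ candidate sets $U$ of big vertices (those with $w_i>\epsilon B$), introduces a probability variable $q_U$ and conditional variables $(x^U,y^U)$ for each, samples $U\sim q$, and then applies the basic rounding to the conditional solution $(x^U/q_U,\,y^U/q_U)$. Conditioned on $U$, every big vertex is already forced to $0$ or $1$ by the LP; after filtering, one writes the resulting cluster vector $s$ as a convex combination of extreme points of the two-constraint polytope $\{z:\sum c_iz_i\ge t,\ \sum w_{v_i}z_i\le B\}$ and samples one. Each such extreme point has at most two fractional coordinates, and those now necessarily correspond to light facilities, so rounding them up costs at most $2\epsilon B$; meanwhile $\sum c_iz_i\ge t$ is built into the polytope (hence holds w.p.\ $1$) and marginals are preserved by the convex decomposition. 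The idea you are missing is that the conditioning on the heavy pattern must happen \emph{inside the LP}, so that the residual fractional solution is already consistent with that pattern---a single LP followed by a post-hoc heavy/light split cannot manufacture this consistency.
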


Finally, the \textsf{FRMatCenter} can be reduced to (randomly) rounding a point in a matroid intersection polytope. We design a randomized rounding algorithm which can output a pseudo solution, which consists of a basis plus one extra center. By using a preprocessing step and  a configuration LP, we can satisfy the matroid constraint exactly (respectively, knapsack constraint) while slightly violating the coverage and fairness constraints in the \textsf{FRMatCenter} (respectively, \textsf{FRKnapCenter}) problem. We believe these techniques could be useful in other facility-location problems  (e.g., the matroid median problem \cite{krishnaswamy_km, swamy_km}) as well. 

\begin{theorem}
For any given constant $\gamma > 0$ and any instance $\I = (V, d, \M, t, \vec{p})$ of the \textsf{FRMatCenter} (respectively, \textsf{FRKnapCenter}) problem, there is a randomized polynomial-time algorithm $\A$ which can return a random solution $\S$ such that 
\begin{itemize}
	\item $\S$ is a basis of $\M$ with probability one, (respectively, $w(\S) \leq B$ with probability one)
	\item $|\C| \geq t - \gamma^2 n$, where $\C$ is the set of vertices within distance $3R$ of $\S$.
	\item there exists a set $T \subseteq V$ of size at least $(1 - \gamma)n$, which is deterministic, such that $\Pr[j \in \C] \geq p_j - \gamma$ for all $j \in T$.
\end{itemize} 
\label{thm:FRMatCenter}
\end{theorem}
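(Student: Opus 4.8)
The plan is to follow the familiar ``guess the radius, solve an LP relaxation, build disjoint clusters, and round'' template for center problems with outliers, but with a new matroid-intersection rounding subroutine at its core.

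\textbf{LP relaxation, feasibility, and clustering.} First guess $R$, the optimal radius; there are only $O(n^2)$ candidate values (the pairwise distances), so we may treat $R$ as known. Write the natural LP with a variable $x_i\in[0,1]$ for each potential center $i\in V$ and a coverage variable $y_j\in[0,1]$ for each client $j$, with constraints $y_j\le \sum_{i:\,d(i,j)\le R}x_i$, $\sum_j y_j\ge t$, $y_j\ge p_j$, and $x$ in the base polytope of $\M$ (for the knapsack version, $\sum_i w_i x_i\le B$). This LP is feasible: take the optimal distribution $\D^\ast$ and set $x_i=\Pr_{\D^\ast}[i\in\S]$, $y_j=\Pr_{\D^\ast}[j\in\C]$; fairness holds by definition and $\sum_j y_j=\E[|\C|]\ge t$ because $|\C|\ge t$ with probability one. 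Solve it to get $(x,y)$. Now greedily extract a maximal set of clients $s_1,\dots,s_m$ whose radius-$R$ balls $B_c=\{i:d(i,s_c)\le R\}$ are pairwise disjoint, and attach every other client to a cluster whose ball meets its own radius-$R$ ball. Any two clients assigned to $C_c$ are within $2R$ of $s_c$, so opening a single center of $B_c$ covers all of $C_c$ within $3R$. Hence it suffices to decide, per cluster, whether to ``hit'' $B_c$; the fractional solution hits $B_c$ to extent $\min\{1,\sum_{i\in B_c}x_i\}$, and (after the standard argument that re-routes $y$-mass to cluster representatives) the fractional number of covered clients $\sum_c |C_c|\sum_{i\in B_c}x_i$ is essentially $t$. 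After mildly rescaling the $x$-mass inside each $B_c$ so it totals at most $1$ --- which keeps us inside the (down-closed) independent-set polytope of $\M$ and shrinks coverage only in a controlled way --- we obtain a point in the intersection polytope of $\M$ and the partition matroid ``at most one chosen center per $B_c$'', with per-center marginals (essentially) $x_i$ and with the linear functional ``expected number of covered clients'' at least (close to) $t$.

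\textbf{Preprocessing and a configuration LP.} The rounding below will output a basis of $\M$ plus possibly one extra element, and discarding that extra element can wipe out one cluster's coverage --- ruinous if the cluster is large. So first isolate the at most $1/\gamma^2$ clusters with more than $\gamma^2 n$ clients; enumerate the constantly many ways a solution can interact with these heavy balls, fold these ``configurations'' into a configuration LP together with the matroid polytope on the remaining ground set, solve it, and sample from its (polynomial-size) support. Rounding the configuration LP to a polynomial enumeration introduces an additive $O(\gamma)$ error in the coverage probabilities of the clients whose balls lie near the heavy clusters; we place those $O(\gamma n)$ clients outside the set $T$. Every cluster passed on to the rounding step now has fewer than $\gamma^2 n$ clients, so deleting one center there costs at most $\gamma^2 n$ coverage.

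\textbf{The matroid-intersection rounding and assembly.} The new ingredient is a randomized rounding that maps a point in a matroid-intersection polytope to an integral point that is a basis of one of the two matroids plus at most one extra element, preserving every coordinate's marginal probability and, with probability one, not decreasing a prescribed linear functional. I would prove this by a pipage/swap-rounding argument specialized to matroid intersection: maintain a fractional point, repeatedly locate a direction supported on a cycle of the exchange graph of the two matroids (so moving along it stays in both polytopes) along which the target functional is nondecreasing in at least one orientation, and take a maximal step, which zeroes a coordinate or tightens a constraint; randomize the choice of orientation so that expectations --- hence marginals --- are preserved, and iterate until the point is integral except for at most one unabsorbable element. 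Apply this with the two matroids $\M$ and the partition matroid and with the functional equal to expected coverage; the output is a basis of $\M$ plus at most one extra center, which we discard to get a genuine basis (for the knapsack version, the analogous residual rounding is done in the single partition polytope with the $O(1/\gamma)$ heavy items handled by the configuration LP, so that $w(\S)\le B$ holds exactly). Marginal-preservation gives $\Pr[j\in\C]\ge y_j\ge p_j$ for every client, up to the $O(\gamma)$ loss from the configuration-LP sampling and the discarded center, i.e.\ $\Pr[j\in\C]\ge p_j-\gamma$ on a deterministic $T$ with $|T|\ge(1-\gamma)n$; the nondecreasing-functional property gives $|\C|\ge t-\gamma^2 n$ with probability one; and $\S$ is a basis of $\M$ (resp.\ $w(\S)\le B$) with probability one by construction.

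\textbf{Main obstacle.} The crux is the matroid-intersection rounding lemma. Unlike a single matroid polytope, a matroid-intersection polytope --- though integral --- does not admit marginal-preserving rounding to an exact vertex in general, so one must show (i) that permitting a single slack element is always sufficient to reach an integral point while preserving all marginals, and (ii) that the target linear functional can be kept nondecreasing with probability one rather than merely in expectation. Establishing both simultaneously, and then checking that the lone slack element costs only $O(\gamma^2 n)$ coverage once the heavy clusters have been peeled off into the configuration LP, is where the real work lies; the LP, the clustering, and the $3R$ bound are routine adaptations of known arguments for center problems with outliers.
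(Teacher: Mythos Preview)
Your high-level plan --- LP, filtering into disjoint clusters, a configuration LP to control the damage when one extra center is discarded, then a randomized matroid-intersection rounding that outputs a basis plus at most one extra element --- matches the paper's architecture, and your sketch of the rounding is in the right spirit. The paper's realization is more specific than ``cycles of the exchange graph of the two matroids'': it builds a bipartite multigraph whose left side is the chain of tight sets of the current face of $\P_\M$ and whose right side is the clusters $F_j$, and then does a case analysis (cycle; single maximal path with one endpoint on the left; two maximal paths with both endpoints on the right; a single final path) to argue that all but the last case strictly shrink the fractional support or tighten a rank constraint, and that the final-path case yields, via an integral matroid-intersection extreme point, at most one unhit cluster --- hence at most one extra center.

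The substantive gap is in your preprocessing. You propose to enumerate configurations over the ``heavy clusters'' --- those with more than $\gamma^2 n$ clients. But which clusters exist, and which are heavy, is determined by the fractional solution through the greedy filtering order, while the fractional solution you actually round comes from the configuration LP, which depends on the configurations: a circularity. If you freeze the basic-LP clustering and round the configuration-LP solution against it, the fairness link breaks: the filtering guarantee $s_{j(k)}\ge s_k$ that lets client $k$ inherit coverage probability from its representative $j(k)\in V'$ holds for the basic-LP values of $s$, not for the per-configuration values, so you cannot pass from the LP constraint $\sum_U\sum_{i\in B_k}x^U_{ik}\ge p_k$ to $\Pr[k\in\C]\ge p_k$. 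If instead you re-cluster after sampling a configuration, the heavy clusters may change and the configuration no longer pins them down. The paper sidesteps this by making the configurations purely metric, independent of any clustering: a configuration is an independent set $U$ of at most $\lceil 1/\epsilon\rceil$ centers, and the LP enforces $y^U_i=0$ for every $i\notin U$ with $|\RBall(i,U,R)|\ge\epsilon n$. Feasibility is shown by greedily peeling off, from each $\S$ in the optimal distribution, centers that still cover at least $\epsilon n$ ``red'' vertices (those not yet within $3R$ of the peeled set); this terminates in at most $1/\epsilon$ steps. Because the discarded center $i^*$ was fractional, $i^*\notin U$, so it covers at most $\epsilon n$ red vertices while blue vertices remain covered by $U$; and the filtering can be redone freshly for each sampled $U$ without any circularity. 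Your description of $T$ also does not work as stated: ``clients whose balls lie near the heavy clusters'' can be $\Omega(n)$ vertices. The paper instead lets $X_j$ indicate that $j$ is covered by $\S'$ but not by $\S'\setminus\{i^*\}$, observes $\sum_j X_j\le\epsilon n$ with probability one, and applies Markov's inequality to bound the number of clients with $\E[X_j]\ge\gamma$ by $\epsilon n/\gamma=\gamma n$; the deterministic complement of that set is $T$.
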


\subsection{Organization}
The rest of this paper is organized as follows. In Section 2, we review some basic properties of matroids and discuss a filtering algorithm which is used in later algorithms. Then we develop approximation algorithms for the \textsf{FRkCenter}, \textsf{FRKnapCenter}, and \textsf{FRMatCenter} problems in the next three sections.

\section{Preliminaries}
\subsection{Matroid polytopes}
We first review a few basic facts about matroid polytopes. For any vector $z$ and set $S$, we let $z(S)$ denote the sum $\sum_{i \in S}z_i$. Let $\M$ be any matroid on the ground set $\Omega$ and $r_\M$ be its rank function. The matroid base polytope of $\M$ is defined by 
$$ \P_\M := \left\{x \in \mathbb{R}^{\Omega}: x(S) \leq r_\M(S) ~~ \forall S \subseteq \Omega; ~~~x(\Omega) = r_\M(\Omega);~~~ x_i \geq 0 ~~\forall i \in \Omega \right\}. $$

\begin{definition} Suppose $A x \leq b$ is a valid inequality of $\P_\M$. A face $D$ of $\P_\M$ (corresponding to this valid inequality) is the set $ D := \left\{x \in \P_\M: A x = b \right\}.	$
\end{definition}

The following theorem gives a characterization for any face of $\P_\M$ (See, e.g., \cite{schrijver, Lau2001}). 
\begin{theorem}
Let $D$ be any face of $\P_\M$. Then it can be characterized by
$$ D = \left\{x \in \mathbb{R}^{\Omega}: x(S) = r_\M(S)~~\forall S \in \L; ~~~ x_i = 0 ~~\forall i \in J; ~~~ x \in \P_\M  \right\},$$
where $J \subseteq \Omega$ and $\L$ is a chain family of sets:
$L_1 \subset L_2 \subset \ldots \subset L_m.$ Moreover, it is sufficient to choose $\L$ as any maximal chain $L_1 \subset L_2 \subset \ldots \subset L_m$ such that $x(L_i) = r_\M(L_i)$ for all $i = 1, 2, \ldots, m$.
\label{theorem:decompose_face}
\end{theorem}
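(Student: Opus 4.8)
The plan is to invoke the classical uncrossing technique for matroid base polytopes. First I would reduce the face to a combination of tight-set and coordinate constraints. Since $\P_\M$ is cut out by the inequalities $x(S)\le r_\M(S)$, the equality $x(\Omega)=r_\M(\Omega)$, and $x_i\ge 0$, any face $D$ of $\P_\M$ is obtained by turning some subset of these into equalities, so $D=\{x\in\P_\M : x(S)=r_\M(S)~\forall S\in\F_0;~x_i=0~\forall i\in J\}$ for some family $\F_0$ and some $J\subseteq\Omega$. Fixing a point $x^*\in\mathrm{relint}(D)$ and enlarging $\F_0$ to the full family of tight sets $\F:=\{S\subseteq\Omega : x^*(S)=r_\M(S)\}$ does not change $D$: because $x^*$ lies in the relative interior, a set is tight at $x^*$ exactly when its constraint holds with equality throughout $D$.

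Next I would uncross $\F$. For any $S,T\in\F$, submodularity of $r_\M$ and the modular identity $x^*(S)+x^*(T)=x^*(S\cap T)+x^*(S\cup T)$ give
$$r_\M(S\cap T)+r_\M(S\cup T)\le r_\M(S)+r_\M(T)=x^*(S)+x^*(T)=x^*(S\cap T)+x^*(S\cup T)\le r_\M(S\cap T)+r_\M(S\cup T),$$
forcing equality throughout, so $S\cap T,S\cup T\in\F$; thus $\F$ is closed under unions and intersections.

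The core step is then a linear-algebra lemma: if $\F$ is closed under unions and intersections, then for \emph{any} maximal chain $\L=\{L_1\subsetneq\cdots\subsetneq L_m\}$ contained in $\F$ one has $\mathrm{span}\{\chi_S : S\in\F\}=\mathrm{span}\{\chi_L : L\in\L\}$. I would prove this by contradiction: among all $S\in\F$ with $\chi_S\notin V:=\mathrm{span}\{\chi_L : L\in\L\}$, pick one incomparable to as few members of $\L$ as possible. If $S$ is comparable to every member of $\L$, then $\L\cup\{S\}$ is again a chain in $\F$, so maximality forces $S\in\L$, contradicting $\chi_S\notin V$. Otherwise, fixing some $L\in\L$ incomparable to $S$, the identity $\chi_S+\chi_L=\chi_{S\cap L}+\chi_{S\cup L}$ together with $\chi_L\in V$ and $\chi_S\notin V$ shows that at least one of $\chi_{S\cap L},\chi_{S\cup L}$ lies outside $V$; a short case analysis then shows that this set is incomparable to strictly fewer members of $\L$ than $S$ is, contradicting the choice of $S$. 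Granting the lemma, $D$ and $\{x\in\P_\M : x(L)=r_\M(L)~\forall L\in\L;~x_i=0~\forall i\in J\}$ have equal affine hulls and hence coincide, which gives the statement — and, since the argument works for \emph{any} maximal chain of tight sets, the ``moreover'' clause as well.

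I expect the linear-algebra lemma — specifically, checking that uncrossing $S$ with an incomparable member of $\L$ strictly decreases the incomparability count — to be the only place requiring real care; the relative-interior reduction and the submodular uncrossing are both short and standard.
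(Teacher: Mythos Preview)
The paper does not actually prove this theorem: it is quoted as a known structural fact about matroid polytopes, with a citation to Schrijver and to Lau et~al., and is then used as a black box. So there is no ``paper's own proof'' to compare against.

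Your argument is the standard uncrossing proof and is correct. The reduction to a relative-interior point $x^*$ and the submodular uncrossing showing that the tight-set family $\F$ is a lattice are routine. The span lemma is also fine: once you fix $L=L_k\in\L$ incomparable to $S$, both $S\cap L_k$ and $S\cup L_k$ are comparable to $L_k$, and a two-line check (splitting on $j\le k$ versus $j>k$) shows that every $L_j$ comparable to $S$ remains comparable to each of $S\cap L_k$ and $S\cup L_k$; hence both candidates have strictly smaller incomparability count, and at least one has its indicator outside $V$, giving the contradiction. The final step --- that replacing $\F$ by $\L$ does not enlarge the face --- follows because $\chi_S\in\mathrm{span}\{\chi_L:L\in\L\}$ forces $x(S)=\sum_L a_L\,x(L)=\sum_L a_L\,r_\M(L)=x^*(S)=r_\M(S)$ for every $x$ satisfying the chain equalities. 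One small omission: you should also enlarge $J$ to $\{i:x^*_i=0\}$ for the same relative-interior reason you enlarged $\F_0$, but this is cosmetic.
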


\begin{proposition} Let $x \in \P_\M$ be any point and $I$ be the set of all tight constraints of $\P_\M$ on $x$. Suppose $D$ is the face with respect to $I$. Then one can compute a chain family $\L$ for $D$ as in Theorem \ref{theorem:decompose_face} in polynomial time.
\label{prop:decompose_face}
\end{proposition}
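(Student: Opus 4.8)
The plan is to construct a maximal chain of tight sets directly, using the lattice structure of the tight family together with one submodular minimization per candidate element, everything driven by the rank oracle. Write $g(S) := r_\M(S) - x(S)$; since $x \in \P_\M$ every inequality $x(S) \le r_\M(S)$ holds, so $g \ge 0$ on all of $2^\Omega$, and $g$ is submodular (a submodular minus a modular function). The tight sets $\mathcal{T} := \{S \subseteq \Omega : g(S) = 0\}$ are exactly the minimizers of $g$. Standard uncrossing shows $\mathcal{T}$ is closed under union and intersection: for tight $S,T$,
$$ r_\M(S\cup T) + r_\M(S\cap T) \le r_\M(S) + r_\M(T) = x(S) + x(T) = x(S\cup T) + x(S\cap T) \le r_\M(S\cup T) + r_\M(S\cap T),$$
so equality holds throughout and $S\cup T, S\cap T$ are tight. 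In particular $\emptyset, \Omega \in \mathcal{T}$, and for every $C \subseteq \Omega$ the intersection of all tight supersets of $C$ is itself tight; write $\mathrm{cl}(C)$ for this unique inclusion-minimal tight set containing $C$.

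The crux — and the step I expect to be the main obstacle — is computing $\mathrm{cl}(C)$ in polynomial time from the rank oracle alone. For each $e \in \Omega\setminus C$, the question ``is $e \in \mathrm{cl}(C)$?'' is equivalent to ``does every tight set containing $C$ also contain $e$?'', i.e.\ to whether $\min\{g(T) : C \subseteq T \subseteq \Omega\setminus\{e\}\} > 0$. The map $T \mapsto g(C\cup T)$ is submodular on $2^{\Omega\setminus(C\cup\{e\})}$ and is evaluated with a single rank query plus arithmetic, so this is a submodular function minimization solvable in polynomial time (see, e.g., \cite{schrijver}). The minimum is $0$ iff $e \notin \mathrm{cl}(C)$, so $\mathrm{cl}(C)$ is recovered after at most $|\Omega|$ such minimizations. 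One could instead exploit the special ``rank minus modular'' form of $g$ for a more hands-on combinatorial routine, but invoking general submodular minimization is the cleanest; everything else below is elementary bookkeeping.

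With $\mathrm{cl}(\cdot)$ in hand, assemble the chain greedily. Set $L_0 := \emptyset$. Given $L_i$: if $L_i = \Omega$, stop; otherwise compute $B_e := \mathrm{cl}(L_i \cup \{e\})$ for every $e \in \Omega\setminus L_i$ and let $L_{i+1}$ be any inclusion-minimal member of $\{B_e : e \in \Omega\setminus L_i\}$. Then $L_{i+1} \supsetneq L_i$ is tight, and no tight set lies strictly between $L_i$ and $L_{i+1}$: if $L_i \subsetneq B' \subsetneq L_{i+1}$ with $B'$ tight, pick $f \in B'\setminus L_i$; since $B'$ is a tight superset of $L_i\cup\{f\}$ we get $\mathrm{cl}(L_i\cup\{f\}) = B_f \subseteq B' \subsetneq L_{i+1}$, contradicting the minimality of $L_{i+1}$. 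Because $|L_i|$ strictly increases, after at most $|\Omega|$ steps we obtain $\emptyset = L_0 \subsetneq L_1 \subsetneq \cdots \subsetneq L_m = \Omega$, all tight, with nothing insertable between consecutive levels; and any tight $T$ comparable to all $L_i$ satisfies, for the largest $i$ with $L_i \subseteq T$, either $T = L_i$ or $L_i \subsetneq T \subsetneq L_{i+1}$ (the latter excluded), so $\L := (L_1,\dots,L_m)$ is a maximal chain of tight sets. Output $\L$ (together with $J := \{i \in \Omega : x_i = 0\}$, the tight nonnegativity constraints). By Theorem~\ref{theorem:decompose_face} this $\L$ is a valid choice describing the face $D$, and the computation uses $O(|\Omega|^3)$ submodular minimizations, hence runs in polynomial time.
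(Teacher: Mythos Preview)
Your proof is correct and follows essentially the same approach as the paper: both define the submodular slack $g(S)=r_\M(S)-x(S)$, use submodular minimization to identify tight sets, and iteratively grow a maximal chain by repeatedly finding an inclusion-minimal tight proper superset of the current set. Your write-up is considerably more explicit than the paper's sketch---you formalize the closure operator $\mathrm{cl}(C)$, spell out the per-element constrained minimization that computes it, and actually verify maximality of the resulting chain---whereas the paper simply says ``find some minimal superset $S_1$ of $S_0$ such that $r'_\M(S_1)=0$'' and leaves the reader to fill in how.
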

\begin{proof} Recall that $r_\M$ is a submodular function. Thus the function $r'_\M(S) = r_\M(S) - x(S)$ for $S \subseteq \Omega$ is also submodular. It is well-known that submodular minimization can be done in polynomial time. We solve the following optimization problem: $\min\left\{r'_\M(S): S \subseteq \Omega \right\}$. If there are multiple solutions, we let $S_0$ be any solution of minimal size. (This can be done easily, say, by trying to drop each item from the current solution and resolving the program.) We add $S_0$ to our chain. Then we find some \emph{minimal} superset $S_1$ of $S_0$ such that $r'_\M(S_1) = 0$, add $S_1$ to our chain, and repeat the process.
\end{proof}

\begin{corollary} Let $D$ be any face of $\P_\M$. Then it can be characterized by
$$ D = \left\{x \in \mathbb{R}^{\Omega}: x(S) = b_S~~\forall S \in \O; ~~~ x_i = 0 ~~\forall i \in J; ~~~ x \in \P_\M  \right\},$$
where $J \subseteq \Omega$ and $\O$ is a family of pairwise disjoint sets:
$O_1, O_2, \ldots, O_m,$ and $b_{O_1}, \ldots, b_{O_m}$ are some integer constants. 
\label{cor:decompose_face}
\end{corollary}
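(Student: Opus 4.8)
The plan is to obtain this directly from Theorem~\ref{theorem:decompose_face} by a linear change of coordinates that turns the chain $\L$ into its ``layers''. First I would apply Theorem~\ref{theorem:decompose_face} to write $D = \{x \in \mathbb{R}^{\Omega} : x(L_i) = r_\M(L_i)\ \forall i;\ x_i = 0\ \forall i \in J;\ x \in \P_\M\}$ for some chain $L_1 \subset L_2 \subset \cdots \subset L_m$ and some $J \subseteq \Omega$. Setting $L_0 := \emptyset$, I would then define the pairwise-disjoint sets $O_i := L_i \setminus L_{i-1}$ for $i = 1, \ldots, m$; since the $L_i$ form a chain these are indeed disjoint and satisfy $L_i = O_1 \cup \cdots \cup O_i$. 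Finally I would set $b_{O_i} := r_\M(L_i) - r_\M(L_{i-1})$ and keep the same set $J$.

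The next step is to verify that the system $\{x(O_i) = b_{O_i}\}_i$ and the system $\{x(L_i) = r_\M(L_i)\}_i$ cut out the same affine subspace, so that the described set really equals $D$. One direction: if $x(L_i) = r_\M(L_i)$ for all $i$, then $x(O_i) = x(L_i) - x(L_{i-1}) = r_\M(L_i) - r_\M(L_{i-1}) = b_{O_i}$. The other direction is a telescoping sum: if $x(O_i) = b_{O_i}$ for all $i$, then $x(L_i) = \sum_{j \le i} x(O_j) = \sum_{j \le i} b_{O_j} = r_\M(L_i)$. Combined with $x \in \P_\M$ and $x_i = 0$ for $i \in J$, this shows the two descriptions coincide.

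It then remains only to observe that each $b_{O_i}$ is an integer, which is immediate since the rank function of a matroid is integer-valued, so $b_{O_i}$ is a difference of two integers. If a polynomial-time construction is also wanted, it follows by composing Proposition~\ref{prop:decompose_face} with this substitution, which is trivially computable once the chain $\L$ is in hand.

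Honestly there is no serious obstacle here; the proof is essentially bookkeeping. The one point that deserves care is making sure that replacing the nested constraints by the layer constraints neither enlarges nor shrinks the face — but that is exactly what the two-directional (forward difference / telescoping) check in the second paragraph establishes.
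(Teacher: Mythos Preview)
Your proposal is correct and follows essentially the same approach as the paper: the paper also applies Theorem~\ref{theorem:decompose_face}, defines $O_i := L_i \setminus L_{i-1}$ (with $L_0 = \emptyset$) and $b_{O_i} := r_\M(L_i) - r_\M(L_{i-1})$, and then asserts the equivalence of the two descriptions. If anything, your write-up is slightly more complete, since you spell out the two-directional telescoping check that the paper leaves as ``not difficult to verify.''
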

\begin{proof}
By Theorem \ref{theorem:decompose_face}, we have
$$ D = \left\{x \in \mathbb{R}^{\Omega}: x(S) = r_\M(S)~~\forall S \in \L; ~~~ x_i = 0 ~~\forall i \in J; ~~~ x \in \P_\M  \right\},$$
where $J \subseteq \Omega$ and $\L$ is the chain: $L_1 \subset L_2 \subset \ldots \subset L_m$. Now let us define $O_1 := L_1, O_2 := L_2 \setminus L_1, O_3 := L_3 \setminus L_2, \ldots, O_m := L_m \setminus L_{m-1}$, and $b_{O_1} := r_\M(L_1), b_{O_2} := r_\M(L_2) - r_\M(L_1), \ldots,  b_{O_m} := r_\M(L_m) - r_\M(L_{m-1})$. It is not difficult to verify that
$$ D = \left\{x \in \mathbb{R}^{\Omega}: x(S) = b_S~~\forall S \in \O; ~~~ x_i = 0 ~~\forall i \in J; ~~~ x \in \P_\M  \right\}.$$
\end{proof}

\subsection{Filtering algorithm}
All algorithms in this paper are based on rounding an LP solution. In general, for each vertex $i \in V$, we have a variable $y_i \in [0,1]$ which represents the probability that we want to pick $i$ in our solution. (In the standard model, $y_i$ is the ``extent'' that $i$ is opened.) In addition, for each pair of $i,j \in V$, we have a variable $x_{ij} \in [0,1]$ which represents the probability that $j$ is connected to $i$. 

Note that in all center-type problems, the optimal radius $R$ is always the distance between two vertices. Therefore, we can always ``guess'' the value of $R$ in $O(\log n)$ by a binary search on the sorted entries of the matrix $d$. WLOG, we may assume that we know the correct value of $R$.  For any $j \in V$, we define $B_j := \{ i \in V: d(i,j) \leq R \}$ and we let $F_j := \{i \in V: d(i,j) \leq R \wedge x_{ij} > 0\}$ and $s_j := \sum_{i \in B_j} x_{ij}$. We shall refer to $F_j$ as a cluster with cluster center $j$. Depending on a specific problem, we may have different constraints on $x_{ij}$'s and $y_i$'s. In general, the following constraints are valid in most of the problems here:
{\allowdisplaybreaks
\begin{align}
	\sum_{j \in V} \sum_{i \in B_j} x_{ij} &\geq t,  \label{ineq:cov} \\
	\sum_{i \in B_j} x_{ij} &\leq 1, \quad \forall j \in V, \label{ineq:connect} \\
	x_{ij} &\leq y_i,  \label{ineq:open} \quad \forall i,j \in V, \\
	y_i, x_{ij} &\geq 0,  \quad \forall i,j \in V. \label{ineq:nonneg}
\end{align}
}
For the \emph{fair} variants, we may also require that
\begin{align}
	\sum_{i \in B_j} x_{ij} &\geq p_j, \quad \forall j \in V. \label{ineq:fair} 
\end{align}

Constraint (\ref{ineq:cov}) says that at least $t$ vertices should be covered. Constraint (\ref{ineq:connect}) ensures that each vertex is only connected to at most one center. Constraint (\ref{ineq:open}) means vertex $j$ can only connect to center $i$ if it is open. Constraint (\ref{ineq:fair}) says that the total probability of $j$ being connected should be at least $p_j$. By constraints (\ref{ineq:connect}) and (\ref{ineq:open}), we have $y(F_j) \leq 1$.

The first step of all algorithms in this paper is to use the following \emph{filtering} algorithm to obtain a maximal collection of disjoint clusters. The algorithm will return the set $V'$ of cluster centers of the chosen clusters. In the process, we also keep track of the number $c_j$ of other clusters removed by $F_j$ for each $j \in V'$.

\begin{algorithm}[h]
\caption{$\textsc{RFiltering}\left(x,y \right)$}
\begin{algorithmic}[1]
\STATE $V' \gets \emptyset$
\FOR{\textbf{each} unmarked cluster $F_j$ in \textbf{decreasing order} of $s_j$}
	\STATE $V' \gets V' \cup \{j\}$
	\STATE Set all unmarked clusters $F_k$ (including $F_j$ itself) s.t. $F_k \cap F_j \neq \emptyset$ as marked.
	\STATE Let $c_j$ be the number of marked clusters in this step. 
\ENDFOR
\STATE $\vec{c} \gets (c_j: j \in V')$
\RETURN $(V', \vec{c})$
\end{algorithmic} 
\end{algorithm}

\section{The $k$-center problems with outliers}
In this section, we first give a simple $2$-approximation algorithm for the \textsf{RkCenter} problem. Then, we give an approximation algorithm for the \textsf{FRkCenter} problem, proving Theorem \ref{thm:FRkCenter}.

\subsection{The robust $k$-center problem}
Suppose $\I = (V,d,k,t)$ is an instance the \textsf{RkCenter} problem with the optimal radius $R$. Consider the polytope $\P_\textsf{RkCenter}$ containing points $(x, y)$ satisfying constraints (\ref{ineq:cov})--(\ref{ineq:nonneg}), and the cardinality constraint: 
\begin{align}
	\sum_{i \in V} y_i &\leq k. \label{ineq:cardinality} 
\end{align}
Since $R$ is the optimal radius, it is not difficult to check that $\P_\textsf{RkCenter} \neq \emptyset$. Let us pick any fractional solution $(x,y) \in \P_\textsf{RkCenter}$. The next step is to round $(x,y)$ into an integral solution using the following simple algorithm.

\begin{algorithm}[h]
\caption{$\textsc{RkCenterRound}\left(x,y \right)$}
\begin{algorithmic}[1]
\STATE $(V',\vec{c}) \gets \textsc{RFiltering}\left(x,y \right).$
\STATE $\S \gets $ the top $k$ vertices $i \in V'$ with highest value of $c_i$.
\RETURN $\S$
\end{algorithmic} 
\end{algorithm}

\textbf{Analysis.} By construction, the algorithm returns a set $\S$ of $k$ open centers. Note that, for each $i \in \S$, $c_i$ is the number of distinct clients within radius $2R$ from $i$. Thus, it suffices to show that $\sum_{i \in \S} c_i \geq t$. 
By inequality (\ref{ineq:connect}), we have $s_j \leq 1$ for all $j \in V'$. Thus,
$$  \sum_{i \in V'} c_i s_i \geq \sum_{i \in V} s_i \geq t, $$
where the first inequality is due to the greedy choice of vertices in $V'$ and the second inequality follows from (\ref{ineq:cov}). Now recall that the clusters whose centers in $V'$ are pairwise disjoint. By constraint (\ref{ineq:cardinality}),
$$\sum_{i \in V'} s_i \leq \sum_{i \in V'} y(F_i) \leq \sum_{i \in V} y_i \leq k.$$ 
It follows by the choice of $\S$ that $\sum_{i \in \S} c_i \geq t$. This concludes the first part of Theorem \ref{theorem:standard_models}.

\subsection{The fair robust $k$-center problem}

Assume $\I = (V,d,k,t, \vec{p})$ be an instance of the \textsf{FRkCenter} problem with the optimal radius $R$. Fix any $\epsilon > 0$. If $k \leq 2/\epsilon$, then we can generate all possible $O\left(n^{1/\epsilon}\right)$ solutions and then solve an LP to obtain the corresponding marginal probabilities. So the problem can be solved easily in this case. We will assume that $k \geq 2/\epsilon$ for the rest of this section. Consider the polytope $\P_\textsf{FRkCenter}$  containing points $(x, y)$ satisfying constraints (\ref{ineq:cov})--(\ref{ineq:nonneg}), the fairness constraint (\ref{ineq:fair}), and the cardinality constraint (\ref{ineq:cardinality}). We now show that $\P_\textsf{FRkCenter}$ is actually a valid relaxation polytope.
\begin{proposition} The polytope $\P_\textsf{FRkCenter}$ is non-empty.
\label{prop:frkcenter}
\end{proposition}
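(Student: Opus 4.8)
The plan is to exhibit an explicit feasible point of $\P_\textsf{FRkCenter}$ by averaging the optimal \emph{integral} solution of the \textsf{FRkCenter} instance. Concretely, let $\D$ be the distribution guaranteed by the definition of the optimal radius $R$: a set $\S \sim \D$ satisfies $|\S| \le k$ with probability one, the covered set $\C$ (clients within radius $R$ of $\S$) has $|\C| \ge t$ with probability one, and $\Pr[j \in \C] \ge p_j$ for every $j \in V$. For each outcome $\S$ in the support of $\D$ and each $j \in \C$, fix — deterministically as a function of $\S$ — one center $\sigma_{\S}(j) \in \S \cap B_j$ to ``assign'' $j$ to; such a center exists by the definition of $\C$.

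I would then set, for all $i \in V$,
$$y_i := \Pr_{\S \sim \D}[i \in \S],$$
and, for all $i,j \in V$,
$$x_{ij} := \begin{cases} \Pr_{\S \sim \D}\bigl[\, j \in \C \text{ and } \sigma_{\S}(j) = i \,\bigr] & \text{if } i \in B_j,\\[2pt] 0 & \text{otherwise,}\end{cases}$$
and check that $(x,y) \in \P_\textsf{FRkCenter}$. Constraint (\ref{ineq:nonneg}) is immediate. For (\ref{ineq:open}), $x_{ij} > 0$ only through events in which $i \in \S$, so $x_{ij} \le y_i$. For each fixed $j$, the events $\{\sigma_{\S}(j) = i\}$ over $i \in B_j$ are disjoint with union $\{j \in \C\}$, hence $\sum_{i \in B_j} x_{ij} = \Pr[j \in \C]$, which is $\le 1$ (giving (\ref{ineq:connect})) and $\ge p_j$ (giving the fairness constraint (\ref{ineq:fair})). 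Summing over $j$ and exchanging the order of summation, $\sum_{j \in V}\sum_{i \in B_j} x_{ij} = \sum_{j \in V} \Pr[j \in \C] = \E_{\S \sim \D}[|\C|] \ge t$ since $|\C| \ge t$ almost surely, which is (\ref{ineq:cov}). Finally $\sum_{i \in V} y_i = \E_{\S \sim \D}[|\S|] \le k$ by linearity of expectation and $|\S| \le k$ almost surely, which is (\ref{ineq:cardinality}).

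There is essentially no serious obstacle here; the statement is really just the observation that the LP is a genuine relaxation. The only point requiring a little care is that each client must only be ``charged'' to a center inside its radius-$R$ ball — this is why $\sigma_{\S}$ is constrained to $\S \cap B_j$ and $x_{ij}$ is zeroed out for $i \notin B_j$, which keeps constraint (\ref{ineq:open}) consistent with the one-ball-per-client accounting used in (\ref{ineq:cov})–(\ref{ineq:fair}). (Equivalently, one may phrase the argument without naming $\D$: take any random feasible solution $\S$ to the original problem, note that the indicated linear functionals of its indicator variables always lie in the required ranges, and pass to expectations.)
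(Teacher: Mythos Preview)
Your proof is correct and follows essentially the same approach as the paper: both set $y_i := \Pr[i \in \S]$ for the optimal distribution $\D$ and then manufacture assignment variables $x_{ij}$ that witness all the constraints. The only difference is cosmetic --- you define $x_{ij}$ as the probability of an explicit assignment event $\{\sigma_\S(j)=i\}$, which yields $\sum_{i\in B_j} x_{ij} = \Pr[j\in\C]$ exactly, whereas the paper builds $x$ by a water-filling rule $x_{ij} := \min\{y_i, 1 - z_j\}$ using only the marginals $y_i$, and then needs a union-bound step to show $\sum_{i\in B_j} x_{ij} \ge \Pr[j\in\C]$; your construction is slightly cleaner for this purpose.
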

\begin{proof}
It suffices to point out a solution $(x,y) \in \P_\textsf{FRkCenter}$. Since $R$ is the optimal radius, there exists a distribution $\D$ satisfying the coverage, fairness, and cardinality constraints. Suppose $\S$ is sampled from $\D$ and $\C$ is the set of all clients in $V$ that are within radius $R$ from some center $\S$. We now set $y_i := \Pr[i \in \S]$ for all $i \in V$. Since $|\S| \leq k$ with probability one, we have $\sum_{i \in V} y_i = \E[|\S|] \leq k$, and hence constraint (\ref{ineq:cardinality}) is valid. 

We construct the assignment variable $x$ as follows. For each $j \in V$, set $z_j := 0$. Then for each $i \in B_j$, set $x_{ij} := \min\{y_i, 1-z_j\}$ and update $z_j := z_j + x_{ij}$. It is not hard to see that inequalities (\ref{ineq:connect}) and (\ref{ineq:open}) hold by this construction. Now let us fix any $j \in V$. By fairness guarantee of $\D$ and the union bound, we have
$$p_j \leq \Pr[j \in \C] \leq \sum_{i \in B_j} y_i.$$
Thus, by construction of $x$, we have
$$\sum_{i \in B_j} x_{ij} \geq \Pr[j \in \C] \geq p_j,$$
and hence inequality (\ref{ineq:fair}) is satisfied. Finally, we have
$$ \E[|\C|] = \sum_{j \in V} \Pr[j \in \C] \leq \sum_{j \in V} \sum_{i \in B_j} x_{ij}.$$
Since $|\C| \geq t$ with probability one, $\E[|\C|] \geq t$, implying that inequality (\ref{ineq:cov}) holds.

\end{proof}

Our algorithm is as follows.

\begin{algorithm}[h]
\caption{$\textsc{FRkCenterRound}\left(\epsilon,x,y \right)$}
\begin{algorithmic}[1]
\STATE $(V',\vec{c}) \gets \textsc{RFiltering}\left(x,y \right).$

\STATE{Form vector $y'$ by setting}
$$
y_j' \gets (1 - \epsilon)\sum_{i \in F_j} x_{ij}
$$
for each $j \in V'$.

\WHILE{at least three entries of $y'$ are in the range $(0,1)$}
	\STATE Let $\delta \in \mathbb{R}^{V'}, \delta \neq 0$ be such that $\delta_i = 0 ~~\forall i \in V':y'_i \in \{0,1\}$,  $\delta(V') = 0$, and $\vec{c} \cdot \delta = 0.$
	\STATE Choose scaling factors $a,b > 0$ such that 
		\begin{itemize}[noitemsep,nolistsep] 
			\item $y'+a\delta\in [0,1]^{V'}$ and $y'-b\delta \in [0,1]^{V'}$
			\item there is at least one new entry of $y'+a\delta$ which is equal to zero or one
			\item there is at least one new entry of $y'-b\delta$ which is equal to zero or one
		\end{itemize}
	\STATE With probability $\frac{b}{a+b}$, update $y' \gets y' + a\delta$; else, update $y' \gets y' - b\delta$.	
\ENDWHILE

\RETURN $\S = \{i \in V: y_i' > 0\}$.
\end{algorithmic} 
\label{algo:frkcenter}
\end{algorithm}

\bigskip \textbf{Analysis.} First, note that one can find such a vector $\delta$ in line $4$ as the system of $\delta(V')=0$ and $\vec{c} \cdot \delta = 0$ consists of two constraints and at least three variables (and hence is underdetermined.) Thus, the algorithm rounds at least one fractional variable per iteration, and terminates after $O(n)$ rounds. Let $\S$ denote the (random) solution returned by  $\textsc{FRkCenterRound}$ and $\C$ be the set of all clients within radius $3R$ from some center in $\S$. Theorem \ref{thm:FRkCenter} can be verified by the following propositions.

\begin{proposition}  $|\S| \leq k$ with probability one.
\end{proposition}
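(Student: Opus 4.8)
The plan is to track the quantity $\sum_{i \in V'} y'_i$ throughout the rounding loop and show it never increases, so that the final support size is bounded by $k$. Initially, using the definition $y'_j = (1-\epsilon)\sum_{i \in F_j} x_{ij} = (1-\epsilon) s_j$ (note $F_j \subseteq B_j$ and $x_{ij}>0$ exactly on $F_j$, so these agree), and the fact that the clusters $\{F_j : j \in V'\}$ are pairwise disjoint, I would write
\[
\sum_{i \in V'} y'_i = (1-\epsilon)\sum_{i \in V'} s_i \le (1-\epsilon)\sum_{i \in V'} y(F_i) \le (1-\epsilon)\sum_{i \in V} y_i \le (1-\epsilon)k \le k,
\]
using constraints (\ref{ineq:connect}), (\ref{ineq:open}) to get $s_i \le y(F_i)$, disjointness of the $F_i$, and the cardinality constraint (\ref{ineq:cardinality}).

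Next I would observe that each iteration of the while loop preserves $\sum_{i \in V'} y'_i$ exactly. Indeed, whether we update $y' \gets y' + a\delta$ or $y' \gets y' - b\delta$, the change to the sum is $\pm$ a positive multiple of $\delta(V')$, and $\delta$ is chosen in line 4 precisely so that $\delta(V') = 0$. Hence the sum is invariant with probability one, not merely in expectation. So at termination we still have $\sum_{i \in V'} y'_i \le k$.

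Finally, I would argue that at termination every entry of $y'$ lies in $\{0,1\}$: the loop exits only when fewer than three entries are in $(0,1)$, and I need to rule out the degenerate cases of one or two fractional entries. If there were exactly two fractional entries, say $y'_p, y'_q \in (0,1)$, then since the total $\sum_i y'_i$ is an integer (it equals $k$ minus an integer... actually I should be careful here — the sum need not be an integer a priori), a cleaner route is: the returned set is $\S = \{i : y'_i > 0\}$, and since each nonzero entry is at least... hmm. The honest bound is $|\S| = |\{i : y'_i > 0\}|$; if all entries are in $\{0,1\}$ then $|\S| = \sum_i y'_i \le k$ and we are done. The main obstacle is thus showing the loop genuinely terminates with an integral vector — i.e., that one cannot get stuck with one or two stubborn fractional coordinates. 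I expect the paper handles this either by noting the sum $\sum_i y'_i$ stays equal to its initial value which can be taken WLOG to force integrality of the residual, or by a separate rounding step for $\le 2$ fractional variables (as was done for the small-$k$ case), or simply by observing that with one fractional entry the sum would be non-integral, contradicting... something. I would flag that the two-fractional-variable endgame is the delicate point and resolve it by the constraint structure: with only $\delta(V')=0$ active (the $\vec c\cdot\delta=0$ constraint becomes vacuous or dependent once $|V'|$ is small), two fractional entries $y'_p,y'_q$ with $y'_p+y'_q$ we can still pivot on $\delta_p = 1, \delta_q = -1$ unless rounding one forces the other out of $[0,1]$, which happens exactly when $y'_p + y'_q \in \{0,1,2\}$, i.e. the pair is effectively already "integral as a block" — and in the worst surviving case $y'_p + y'_q = 1$ contributes exactly $1 \le $ its share of the budget, so $|\S| \le k$ still holds. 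That last case analysis is the part I would write out carefully; everything else is bookkeeping.
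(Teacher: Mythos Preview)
Your argument has a genuine gap in the endgame. You correctly show that $\sum_{i \in V'} y'_i \le (1-\epsilon)k$ initially and that this sum is preserved by every iteration since $\delta(V')=0$. But then you only conclude $\sum_i y'_i \le k$, and from there you struggle to bound $|\S| = |\{i : y'_i > 0\}|$ when one or two fractional entries remain. Your attempted case analysis does not close: nothing forces the residual fractional pair to sum to an integer, and the algorithm as written simply returns $\S = \{i : y'_i > 0\}$ without any further pivoting on the last two coordinates.

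The missing ingredient is the standing assumption $k \ge 2/\epsilon$ stated just before the algorithm. With it, $(1-\epsilon)k = k - \epsilon k \le k - 2$, so in fact $\sum_{i \in V'} y'_i \le k-2$ throughout. At termination there are at most two fractional coordinates; the remaining coordinates are $0$ or $1$ and the number of $1$'s is at most $\sum_i y'_i \le k-2$. Hence $|\S|$, which counts the $1$'s plus the at most two fractional (positive) entries, is at most $(k-2)+2 = k$. That slack of $2$ is exactly what absorbs the leftover fractional coordinates, and it is the whole point of the $(1-\epsilon)$ scaling in line~2. Once you use it, no integrality argument for the residual pair is needed.
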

\begin{proof} By definition of $y'$ at line $2$ of \textsc{FRkCenterRound}, we have
\begin{align*}
y'(V') &= (1-\epsilon) \sum_{j \in V'} \sum_{i \in F_j} x_{ij}  \leq (1-\epsilon) \sum_{j \in V'} \sum_{i \in F_j} y_i \leq (1-\epsilon)k \leq k-2,	
\end{align*}
since $k \geq 2/\epsilon$.
Note that the sum $y'(V')$ is never changed in the while loop (lines $4 \ldots 7$) because $\delta(V') = 0$. Then the final vector $y'$ contains at most two fractional values at the end of the while loop. By rounding these two values to one, the size of $\S$ is indeed at most $k$.
\end{proof}

\begin{proposition} $|\C| \geq (1-\epsilon)t$ with probability one.
\end{proposition}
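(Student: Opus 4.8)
The plan is to track the linear functional $\vec{c}\cdot y'$ throughout \textsc{FRkCenterRound}, show it never drops below $(1-\epsilon)t$, and then convert a lower bound on $\sum_{i\in\S}c_i$ into a lower bound on $|\C|$.

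First I would observe that the while loop preserves $\vec{c}\cdot y'$ exactly: in each iteration $y'$ is replaced by $y'\pm$ a scalar multiple of $\delta$, and $\vec{c}\cdot\delta=0$ by the choice of $\delta$ in line 4, so it suffices to bound the value of $\vec{c}\cdot y'$ at line 2. There, $\vec{c}\cdot y' = (1-\epsilon)\sum_{j\in V'}c_j\bigl(\sum_{i\in F_j}x_{ij}\bigr)=(1-\epsilon)\sum_{j\in V'}c_j s_j$, since $F_j=\{i\in B_j:x_{ij}>0\}$. Exactly as in the analysis of \textsc{RkCenterRound}, every cluster $F_k$ ($k\in V$) is marked by some $j\in V'$ with $s_j\ge s_k$ — because \textsc{RFiltering} processes unmarked clusters in decreasing order of $s$ — and each cluster is marked exactly once, so summing over all $k$ gives $\sum_{j\in V'}c_j s_j\ge\sum_{k\in V}s_k\ge t$ by (\ref{ineq:cov}). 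Hence $\vec{c}\cdot y'\ge(1-\epsilon)t$ at line 2, and therefore throughout the loop.

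Next I would handle the at most two surviving fractional entries. As noted in the proof that $|\S|\le k$, at termination $y'$ has at most two entries in $(0,1)$, and these are rounded up to $1$; since $c_i\ge 1>0$ for every $i\in V'$ (a cluster always marks itself), this rounding can only increase $\vec{c}\cdot y'$. So the final $0/1$ vector $y'$, whose support is exactly $\S\subseteq V'$, satisfies $\sum_{i\in\S}c_i=\vec{c}\cdot y'\ge(1-\epsilon)t$. Finally I would convert this to a coverage bound: for each $i\in\S$, the quantity $c_i$ counts the clusters $F_k$ marked when $i$ was selected; each such $F_k$ meets $F_i$, so choosing a common vertex $\ell\in F_k\cap F_i$ gives $d(k,i)\le d(k,\ell)+d(\ell,i)\le 2R\le 3R$, i.e.\ the cluster center $k\in\C$. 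Distinct selected centers mark disjoint collections of clusters (each cluster is marked once) and distinct clusters have distinct centers, so $\C$ contains at least $\sum_{i\in\S}c_i\ge(1-\epsilon)t$ distinct vertices. Since nothing in this argument depends on the random $\pm$ choices, the bound holds with probability one.

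I do not expect a genuine obstacle here. The one point requiring care is the greedy inequality $\sum_{j\in V'}c_j s_j\ge\sum_{k\in V}s_k$, which relies on the processing order in \textsc{RFiltering} (so that whoever marks $F_k$ has $s$-value at least $s_k$) together with the fact that the clusters marked by the various centers of $V'$ partition the collection of all clusters; this is the same reasoning already used in the \textsf{RkCenter} analysis, reused here for the invariant $\vec{c}\cdot y'$.
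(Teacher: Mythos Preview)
Your proposal is correct and follows essentially the same approach as the paper: establish $\vec{c}\cdot y'\ge(1-\epsilon)t$ at initialization via the greedy inequality $\sum_{j\in V'}c_js_j\ge\sum_{k\in V}s_k\ge t$, observe that $\vec{c}\cdot\delta=0$ makes this invariant through the while loop, note that rounding the last two fractional entries up only helps, and interpret $\sum_{i\in\S}c_i$ as a count of distinct clients within radius $2R$. Your write-up is in fact more explicit than the paper's on a couple of points (e.g.\ why $c_i\ge 1$ so rounding up cannot hurt, and the triangle-inequality step through a common vertex of $F_k\cap F_i$), but the argument is the same.
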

\begin{proof}
At the beginning of the while loop, we have
\begin{align*}
	\vec{c} \cdot y' = \sum_{j \in V'} c_j y'_j(F_j) &= (1-\epsilon) \sum_{j \in V'} c_j s_j \geq (1-\epsilon) \sum_{j \in V}  s_j \geq (1-\epsilon)t.
\end{align*}
Again, the quantity $\vec{c} \cdot y'$ is unchanged in the while loop because $\vec{c} \cdot \delta = 0$ implies that $\vec{c} \cdot (y'+a\delta) = \vec{c} \cdot y'$ and $\vec{c} \cdot (y'-b\delta) = \vec{c} \cdot y'$ in each iteration. Note that if $y' \in \{0,1\}^{V'}$, then $\vec{c} \cdot y'$ is the number of clients within radius $2R$ from some center $i$ such that $y'_i = 1$. Basically, we round the two remaining fractional values of $y'$ to one in line $8$; and hence, the dot product should be still at least $(1-\epsilon)t$.
\end{proof}

\begin{proposition} $\Pr[j \in \C] \geq (1-\epsilon)p_j$ for all $j \in V$.
\end{proposition}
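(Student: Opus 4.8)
The plan is to prove the marginal bound by exhibiting, for an arbitrary client $j \in V$, a cluster center $k \in V'$ that (i) is close enough to $j$ that opening $k$ forces $j$ into $\C$, and (ii) ends up in $\S$ with probability at least $(1-\epsilon)p_j$; then $\Pr[j \in \C] \ge \Pr[k \in \S] \ge (1-\epsilon)p_j$.

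\emph{Locating $k$ via the filtering structure.} If $j \in V'$, take $k := j$. Otherwise, $\textsc{RFiltering}$ marked the cluster $F_j$ during the iteration that selected some $F_k$ with $k \in V'$ and $F_k \cap F_j \neq \emptyset$; since clusters are processed in decreasing order of $s$-value, $F_k$ was processed no later than $F_j$ would have been, so $s_k \ge s_j$. In the first case $d(k,j)=0$; in the second, pick any $i^\ast \in F_k \cap F_j$ and use $F_k \subseteq B_k$, $F_j \subseteq B_j$ together with the triangle inequality to get $d(k,j) \le d(k,i^\ast) + d(i^\ast,j) \le 2R$. Either way $d(k,j) \le 2R$, so whenever $k \in \S$ the client $j$ is within distance $2R$ (hence certainly within $3R$) of a center, i.e. $\{k \in \S\} \subseteq \{j \in \C\}$ and $\Pr[j \in \C] \ge \Pr[k \in \S]$.

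\emph{Bounding $\Pr[k \in \S]$.} Since $\S = \{i : y'_i > 0\}$ at termination, $\Pr[k \in \S] = \Pr[y'_k > 0]$. The scaling factors $a,b$ in line 5 are chosen precisely so that $y'$ stays in $[0,1]^{V'}$, so $y'_k \in [0,1]$ throughout and therefore $\Pr[y'_k > 0] \ge \E[y'_k]$ at termination. Each iteration of the while loop preserves the conditional expectation of $y'_k$: given the current state and the chosen $\delta,a,b$, the update adds $+a\delta_k$ with probability $\tfrac{b}{a+b}$ and $-b\delta_k$ with probability $\tfrac{a}{a+b}$, an expected change of $0$. Hence $(y'_k)$ is a bounded martingale and $\E[y'_k]$ at termination equals its initial value $(1-\epsilon)\sum_{i \in F_k} x_{ik} = (1-\epsilon)s_k$, the last equality because $x_{ik}=0$ for $i \in B_k \setminus F_k$. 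Finally $s_k \ge s_j = \sum_{i \in B_j} x_{ij} \ge p_j$ by the fairness constraint (\ref{ineq:fair}). Chaining these gives $\Pr[j \in \C] \ge \Pr[k \in \S] \ge (1-\epsilon)s_k \ge (1-\epsilon)p_j$, as desired.

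\emph{Where the work is.} The only substantive point is the martingale argument: the dependent rounding in the while loop keeps every coordinate in $[0,1]$ while preserving each marginal $\E[y'_k]$ exactly, so the ``open probability'' $\Pr[y'_k > 0]$ is at least the starting fractional value $(1-\epsilon)s_k$. The rest is bookkeeping — the distance estimate through $i^\ast$, the comparison $s_k \ge s_j$ from the greedy order in $\textsc{RFiltering}$, and the appeal to (\ref{ineq:fair}). The degenerate case $s_j = 0$ needs no separate treatment: then (\ref{ineq:fair}) forces $p_j = 0$, and the argument above still applies (one takes $k=j \in V'$ and the bound reads $\Pr[j \in \C] \ge 0$).
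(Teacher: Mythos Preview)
Your proof is correct and follows essentially the same approach as the paper: locate $k \in V'$ with $F_j \cap F_k \neq \emptyset$ and $s_k \ge s_j$ via the filtering step, use the martingale property of the randomized rounding to get $\E[y'_k] = (1-\epsilon)s_k$, and conclude $\Pr[j \in \C] \ge \Pr[k \in \S] \ge (1-\epsilon)s_k \ge (1-\epsilon)p_j$. If anything, your argument is slightly cleaner than the paper's in one spot: you use the valid bound $\Pr[y'_k > 0] \ge \E[y'_k]$ for a $[0,1]$-valued variable, whereas the paper writes $\Pr[y'_k = 1] = \E[y'_k]$, which tacitly assumes $y'_k$ ends integral even though up to two coordinates may remain fractional after the while loop.
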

\begin{proof}
Fix any $j \in V$. The algorithm \textsc{RFiltering} guarantees that there exists $k \in V'$ such that $F_j \cap F_k \neq \emptyset$ and $s_k \geq s_j$. Now we claim that $\E[y'_k] = y'_k$. This is because the expected value of $y'_k$ does not change after any single iteration: 
\begin{align*}
	\E[y'_k] = (y'_k+a\delta)\frac{b}{a+b} + (y'_k-b\delta)\frac{a}{a+b} = y'_k.
\end{align*}
Then
$$\Pr[k \in \S] = \Pr[y'_k > 0] \geq \Pr[y'_k = 1] = \E[y'_k] = y'_k = (1-\epsilon)s_k.$$ 
Therefore,
\begin{align*}
	\Pr[j \in \C] \geq \Pr[k \in \S] \geq (1-\epsilon)s_k \geq (1-\epsilon)s_j \geq (1-\epsilon)p_j,
\end{align*}
by constraint (\ref{ineq:fair}).

\end{proof}

\section{The Knapsack Center problems with outliers}
We study the \textsf{RKnapCenter} and \textsf{FRKnapCenter} problems in this section. Recall that in these problems, each vertex has a weight and we want to make sure that the total weight of the chosen centers does not exceed $1$. We first give a $3$-approximation algorithm for the \textsf{RKnapCenter} problem that slightly violates the knapsack constraint. Although this is not better than the known result by \cite{jianli_kc}, both our algorithm and analysis here are more natural and simpler. It serves as a starting point for the next results. For the \textsf{FRKnapCenter}, we show that it is possible to satisfy the knapsack constraint exactly with small violations in the coverage and fairness constraints.

\subsection{The robust knapsack center problem}
\label{sec:rknapcenter}
Suppose $\I = (V, d, w,  t)$ is an instance the \textsf{RKnapCenter} problem with the optimal radius $R$. Consider the polytope $\P_\textsf{RKnapCenter}$  containing points $(x, y)$ satisfying constraints (\ref{ineq:cov})--(\ref{ineq:nonneg}), and the knapsack constraint: 
\begin{align}
	\sum_{i \in V} w_iy_i \leq 1. \label{ineq:knap} 
\end{align} 

Again, it is not difficult to check that $\P_\textsf{RKnapCenter} \neq \emptyset$. Let us pick any fractional solution $(x,y) \in \P_\textsf{RKnapCenter}$. Our pseudo-approximation algorithm to round $(x,y)$ is as follows.

\begin{algorithm}[h]
\caption{$\textsc{RKnapCenterRound}\left(x,y \right)$}
\begin{algorithmic}[1]
\STATE $(V',\vec{c}) \gets \textsc{RFiltering}\left(x,y \right).$
\STATE For each $i \in V'$, let $v_i = \argmin_{j \in F_i}\{w_j\}$ be the vertex with smallest weight in $F_i$
\STATE Let $\P' := \left\{z \in [0,1]^{V'}: \sum_{i\in V'}c_i z_i \geq t ~~\wedge ~~ \sum_{i \in V'} w_{v_i} z_i \leq 1 \right\}$
\STATE Compute an extreme point $Y$ of $\P'$
\RETURN $\S = \{v_i: i \in V,~ Y_i > 0\}$
\end{algorithmic} 
\label{algo:rknapcenter}
\end{algorithm}

\bigskip \noindent \textbf{Analysis.} We first claim that $\P' \neq \emptyset$ which implies that the extreme point $Y$ of $\P'$ (in line $4$) does exist. To see this, we claim that the vector $s$ lies in $\P'$. For, we have:
$$
\sum_{i\in V'} c_i s_i \geq \sum_{i\in V} s_i \geq t. $$
Also,
\begin{align*}
	\sum_{i \in V'} w_{v_i} s_i &= \sum_{i \in V'} w_{v_i} \sum_{j \in F_i} x_{ji} \leq \sum_{i \in V'} w_{v_i} \sum_{j \in F_i} y_j \leq \sum_{i \in V'}  \sum_{j \in F_i} w_j y_j  \leq \sum_{i \in V} w_i y_i \leq 1. 
\end{align*}
All the inequalities follow from LP constraints and definitions of $s_i, c_i, $ and $v_i$.

\begin{proposition}
The solution $\S$ returned by \textsc{RKnapCenterRound} satisfies $w(\S) \leq 1 + 2w_{\text{max}}$ and $|\C| \geq t$, where $\C$ is the set of vertices within distance $3R$ of $\S$ and $w_{\text{max}}$ is the maximum weight of any vertex in $V$.
\label{prop:RKnapCenterRound}
\end{proposition}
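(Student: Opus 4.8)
The plan is to exploit the fact that the auxiliary polytope $\P'$ sits inside the unit cube $[0,1]^{V'}$ and is carved out by only two further inequalities, so that an extreme point is almost integral. Concretely, I would first establish the structural claim that the extreme point $Y$ computed in line~4 has at most two coordinates in $(0,1)$: at a vertex of a polytope in $\mathbb{R}^{V'}$ there are $|V'|$ linearly independent tight constraints, at most two of which can be the coverage inequality $\sum_i c_i z_i \ge t$ or the knapsack inequality $\sum_i w_{v_i} z_i \le 1$, so at least $|V'|-2$ of the box constraints $0 \le z_i \le 1$ are tight, and since $z_i \ge 0$ and $z_i \le 1$ cannot both be tight, all but at most two of the $Y_i$ lie in $\{0,1\}$. (That $\P' \neq \emptyset$, hence that $Y$ exists, was already checked above via $s \in \P'$.) Everything else is a matter of reading off the two guarantees from this near-integral $Y$.

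For the weight bound, I would first note that the chosen centers $v_i$, $i \in V'$, are pairwise distinct: the clusters $F_i$ with $i \in V'$ are pairwise disjoint by construction of \textsc{RFiltering}, and $v_i \in F_i$. Hence $w(\S) = \sum_{i : Y_i > 0} w_{v_i}$, and I would split this over the integral and the (at most two) fractional coordinates. The integral part satisfies $\sum_{i : Y_i = 1} w_{v_i} = \sum_{i : Y_i = 1} w_{v_i} Y_i \le \sum_{i \in V'} w_{v_i} Y_i \le 1$ by the knapsack constraint of $\P'$, while the fractional part contributes at most $2 w_{\text{max}}$, giving $w(\S) \le 1 + 2 w_{\text{max}}$.

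For the coverage bound, the first step is a distance calculation: if $Y_i > 0$ then every cluster $F_k$ that was marked in the iteration of \textsc{RFiltering} that added $i$ to $V'$ meets $F_i$, so a common vertex $\ell \in F_k \cap F_i$ gives $d(k,\ell) \le R$ and $d(i,\ell) \le R$, whence $d(i,k) \le 2R$; since $v_i \in F_i$ we also have $d(v_i, i) \le R$, so $d(v_i,k) \le 3R$ and $k \in \C$. Because each cluster is marked in exactly one iteration, the sets of centers marked in distinct iterations are disjoint, so $|\C| \ge \sum_{i : Y_i > 0} c_i$; and since $Y_i \le 1$, $\sum_{i : Y_i > 0} c_i \ge \sum_{i \in V'} c_i Y_i \ge t$ by the coverage constraint of $\P'$. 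I do not expect a serious obstacle here; the only points requiring care are stating the "at most two fractional coordinates" property correctly and verifying that each client counted by some $c_i$ is counted only once, which is exactly the disjointness of the marking iterations.
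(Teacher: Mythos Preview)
Your proposal is correct and follows essentially the same approach as the paper: both use the fact that an extreme point of $\P'$ has at most two fractional coordinates (since only two non-box constraints are present), then split the weight and coverage sums over the integral and the at most two fractional coordinates. Your write-up is in fact slightly more careful than the paper's, since you explicitly justify that the $v_i$ are pairwise distinct and spell out the triangle-inequality step giving $d(v_i,k)\le 3R$.
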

\begin{proof}
First, observe that any extreme point of $\P'$ has at most two fractional values. (In the worst case, an extreme point $z$ is fully determined by $|V'|-2$ tight constraints of the form $z_i = 0$ or $z_i = 1$, $\sum_{i\in V'}c_i z_i = t$, and $ \sum_{i \in V'} w_{v_i} z_i = 1$.) By construction of $\S$, there are at most two vertices $i^*, i^{**}$ such that $Y_{i^*}, Y_{i^{**}}$ are fractional. Thus, 
$$w(\S) = \sum_{i \in \S \setminus\{i^*, i^{**}\}} w_{v_i}Y_i + w_{i^*} + w_{i^{**}} \leq 1 + 2w_{\text{max}}.$$

Recall that each $i \in V'$ has $c_i$ clients within distance $2 R$ (and each client is counted only one time). By the triangle inequality, these clients are within distance $3R$ from $v_i$. Thus, $\S$  covers at least 
$$ \sum_{i \in \S \setminus\{i^*, i^{**}\}} c_i Y_i + c_{i^*} + c_{i^{**}} \geq \sum_{i \in \S} c_i Y_i \geq t,$$
clients within radius $3R$.
\end{proof}

\subsection{The fair robust knapsack center problem}
In this section, we will first consider a simple algorithm that only violates the knapsack constraint by two times the maximum weight of any vertex. Then using a configuration polytope to ``condition'' on the set of ``big'' vertices, we show that it is possible to either violate the budget by $(1+\epsilon)$ or to preserve the knapsack constraint while slightly violating the coverage and fairness constraints.

\subsubsection{Basic algorithm}

Suppose $\I = (V, d, w,  t, \vec{p})$ is an instance the \textsf{FRKnapCenter} problem with the optimal radius $R$. Consider the polytope $\P_\textsf{FRKnapCenter}$  containing points $(x, y)$ satisfying constraints (\ref{ineq:cov})--(\ref{ineq:nonneg}), the fairness constraint (\ref{ineq:fair}), and the knapsack constraint (\ref{ineq:knap}). The proof that $\P_\textsf{FRKnapCenter} \neq \emptyset$ is very similar to that of Proposition \ref{prop:frkcenter} and is omitted here.

The following algorithm is a randomized version of 
\textsc{RKnapCenterRound}.

\begin{algorithm}[h]
\caption{$\textsc{BasicFRKnapCenterRound}\left(x,y \right)$}
\begin{algorithmic}[1]
\STATE $(V',\vec{c}) \gets \textsc{RFiltering}\left(x,y \right).$
\STATE For each $i \in V'$ let $v_i := \argmin_{j \in F_i}\{w_j\}$ be the vertex with smallest weight in $F_i$
\STATE Let $\P' := \left\{z \in [0,1]^{V'}: \sum_{i\in V'}c_i z_i \geq t ~~\wedge ~~ \sum_{i \in V'} w_{v_i} z_i \leq 1 \right\}$
\STATE Decompose the vector $s$ as a convex combination of extreme points $z^{(1)}, \ldots, z^{(n+1)}$ of $\P'$:
$$ s = p_1 z^{(1)} + \ldots + p_{n+1} z^{(n+1)},$$
where $\sum_\ell p_\ell = 1$ and $p_\ell \geq 0$ for all $\ell \in [n+1]$. 
\STATE Randomly choose $Y \gets z^{(\ell)}$ with probability $p_\ell$.
\RETURN $\S = \{v_i: i \in V,~ Y_i > 0\}$
\end{algorithmic} 
\label{algo:basicrfknapcenter}
\end{algorithm}

Note that $\P' \neq \emptyset$, and so the decomposition at line 4 is well-defined (see the analysis in Section \ref{sec:rknapcenter}).

\begin{proposition}
The algorithm \textsc{BasicFRKnapCenterRound} returns a random solution $\S$ such that $w(\S) \leq 1 + 2w_{\text{max}}$, $|\C| \geq t$, and $\Pr[j \in \C] \geq p_j$ for all $j \in V$, where $\C$ is the set of vertices within distance $3R$ of $\S$ and $w_{\text{max}}$ is the maximum weight of any vertex in $V$.
\label{prop:basicrfknapcenter}
\end{proposition}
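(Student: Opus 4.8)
The plan is to verify the three claimed properties of the random solution $\S$ separately, exploiting the fact that \textsc{BasicFRKnapCenterRound} is a randomized version of \textsc{RKnapCenterRound} obtained by sampling an extreme point of $\P'$ according to a convex decomposition of the vector $s$. For the \emph{knapsack violation} $w(\S) \le 1 + 2w_{\text{max}}$ and the \emph{coverage} bound $|\C| \ge t$, the argument is essentially identical to the deterministic case: whichever extreme point $z^{(\ell)}$ is chosen as $Y$, it lies in $\P'$ and (as noted in Proposition~\ref{prop:RKnapCenterRound}) has at most two fractional coordinates; rounding those up to $1$ costs at most $2w_{\text{max}}$ extra weight, while $\sum_{i \in \S} c_i Y_i \ge \sum_{i \in V'} c_i z^{(\ell)}_i \ge t$ gives the coverage guarantee via the triangle inequality (the $c_i$ clients within $2R$ of $i$ are within $3R$ of $v_i$). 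So these two parts hold with probability one regardless of the random choice.

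The interesting part is the \emph{fairness} bound $\Pr[j \in \C] \ge p_j$. First I would compute the marginal probability that each cluster center $i \in V'$ is retained: since $Y = z^{(\ell)}$ with probability $p_\ell$ and $\sum_\ell p_\ell z^{(\ell)} = s$, we get $\E[Y_i] = s_i$, and hence $\Pr[v_i \in \S] = \Pr[Y_i > 0] \ge \E[Y_i] = s_i$ (using $Y_i \in [0,1]$). Then I would fix an arbitrary client $j \in V$ and invoke the \textsc{RFiltering} guarantee: there is some $i \in V'$ with $F_j \cap F_i \neq \emptyset$ and $s_i \ge s_j$. Whenever $v_i \in \S$, the client $j$ is within distance $d(j, v_i) \le d(j, i') + d(i', i) + \dots$; more precisely $j$ and $v_i$ both lie within $2R$ of the common point of $F_j$ and $F_i$ together with $i$, so $d(j, v_i) \le 3R$ and thus $j \in \C$. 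Therefore
$$\Pr[j \in \C] \ge \Pr[v_i \in \S] \ge s_i \ge s_j = \sum_{i' \in B_j} x_{i'j} \ge p_j,$$
the last inequality being constraint~(\ref{ineq:fair}).

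I expect the main (minor) obstacle to be pinning down the distance bookkeeping for fairness: one must check carefully that if $F_j$ and $F_i$ share a vertex $u$, then $d(j,u) \le R$ (since $u \in F_j \subseteq B_j$), $d(u,i) \le R$, and $d(i, v_i) \le 2R$ (both $i$ and $v_i$ lie in clusters of radius $R$ around... actually $v_i \in F_i$ so $d(i,v_i)\le R$; wait $v_i \in F_i$ and $F_i \subseteq B_i$ means $d(i,v_i) \le R$), giving $d(j,v_i) \le d(j,u) + d(u,i) + d(i,v_i) \le 3R$, so indeed $j \in \C$. A secondary point worth stating explicitly is that the convex decomposition in line~4 into at most $n+1$ extreme points exists and is computable in polynomial time, which follows from Carathéodory's theorem applied in the $2$-dimensional affine hull of $\P'$'s description beyond the box constraints (or simply the standard algorithmic version of Carathéodory for polytopes given by a separation oracle); since the paper elsewhere treats such decompositions as routine, I would merely reference that $\P' \neq \emptyset$ and state the decomposition is well-defined.
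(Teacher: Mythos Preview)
Your proposal is correct and follows essentially the same approach as the paper: the weight and coverage bounds are inherited verbatim from Proposition~\ref{prop:RKnapCenterRound} since every sampled extreme point lies in $\P'$, and the fairness bound comes from $\Pr[v_i \in \S] \ge \E[Y_i] = s_i$ combined with the \textsc{RFiltering} guarantee $s_k \ge s_j \ge p_j$. Your distance-chasing $d(j,v_i) \le d(j,u)+d(u,i)+d(i,v_i) \le 3R$ is exactly the triangle-inequality step the paper leaves implicit, so there is nothing missing.
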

\begin{proof}
By similar arguments to the proof of Proposition \ref{prop:RKnapCenterRound}, we have $w(\S) \leq 1 + 2w_{\text{max}}$ and $|\C| \geq t$. To obtain the fairness guarantee, observe that $v_i \in \S$ with probability at least $z_i = s_i$. For any $j \in V$, let $k \in V'$ be the vertex that removed $j$ in the filtering step. We have
\begin{align*}
	\Pr[j \in \C] \geq \Pr[v_k \in \S] \geq s_k \geq s_j \geq p_j,
\end{align*}
where the penultimate inequality is due to our greedy choice of $k$ in \textsc{RFiltering}.
\end{proof}

\subsubsection{An algorithm slightly violating the budget constraint}
Fix a small parameter $\epsilon > 0$. A vertex $i$ is said to be \emph{big} iff $w_i > \epsilon$. Let $\U$ denote the collection of all possible sets of big vertices. Since a solution can contain at most $1/\epsilon$ big vertices, we have $|\U| \leq n^{O(1/\epsilon)}$. Consider the \emph{configuration} polytope $\P_\textsf{config1}$ containing points $(x,y,q)$ with the following constraints:
\begin{align*}
  \begin{cases}
      \sum_{U \in \U}q_U = 1     & \quad \\
      \sum_{i \in B_j} x^{U}_{ij} \leq q_U    & \quad \forall j \in V, U \in \U \\
      \sum_{U \in \U} \sum_{i \in B_j} x^{U}_{ij} \geq p_j    & \quad \forall j \in V \\
       x^{U}_{ij} \leq y^{U}_i   & \quad \forall i,j \in V, U \in \U\\
       \sum_{i \in V} w_i y^{U}_i \leq q_U B   & \quad  \forall U \in \U \\
       \sum_{j \in V} \sum_{i \in B_j} x^{U}_{ij} \geq q_U t   & \quad \\
       y^{U}_i = 1 & \quad \forall U \in \U, i \in U\\
       y^{U}_i = 0 & \quad \forall U \in \U, i \in V \setminus U, w_i > 1/\epsilon \\
       x^{U}_{ij}, y^{U}_i, q_U \geq 0   & \quad \forall i,j \in V, U \in \U \\
  \end{cases}
\end{align*}

We first claim that $\P_\textsf{config1}$ is a valid relaxation polytope for the problem.

\begin{proposition} The polytope $\P_\textsf{config1}$ is non-empty.
\label{prop:relaxation_knapcenter}
\end{proposition}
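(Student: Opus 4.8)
The plan is to exhibit an explicit feasible point of $\P_\textsf{config1}$ obtained from an optimal distribution for $\I$, in exactly the spirit of the proof of Proposition~\ref{prop:frkcenter}. Since $R$ is the optimal radius, there is a distribution $\D$ over subsets of $V$ such that a sample $\S\sim\D$ has $w(\S)\le B$ with probability one, has $|\C_\S|\ge t$ with probability one, and has $\Pr[j\in\C_\S]\ge p_j$ for every $j$, where $\C_\S:=\{j\in V:\min_{i\in\S}d(i,j)\le R\}$. For a set $\S$ let $U(\S):=\{i\in\S:w_i>\epsilon\}$ be its set of big vertices, and fix an assignment $\sigma_\S$ sending each $j\in\C_\S$ to a nearest center of $\S$ (necessarily in $B_j$). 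I would then set, for every $U\in\U$ and $i,j\in V$,
\[
q_U:=\Pr_{\S\sim\D}[U(\S)=U],\qquad y^U_i:=\Pr_{\S\sim\D}[i\in\S\wedge U(\S)=U],\qquad x^U_{ij}:=\Pr_{\S\sim\D}[\sigma_\S(j)=i\wedge U(\S)=U],
\]
with $x^U_{ij}:=0$ when $i\notin B_j$.

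The rest is a sequence of one-line checks. First $\sum_U q_U=1$, since every $\S$ in the support of $\D$ is feasible and hence $U(\S)$ is one of the members of $\U$. For $i\in U$ the event $\{U(\S)=U\}$ forces $i\in\S$, so $y^U_i=q_U$ (the polytope's boundary value for $i\in U$, modulo the normalization mentioned below); for $i\notin U$ with $w_i>1/\epsilon$ the event $\{i\in\S\wedge U(\S)=U\}$ is empty, because $i\in\S$ would make $i$ big and hence a member of $U(\S)=U$, so $y^U_i=0$. The constraints $\sum_{i\in B_j}x^U_{ij}\le q_U$ and $x^U_{ij}\le y^U_i$ are immediate since $\sigma_\S$ assigns $j$ to at most one center and $\sigma_\S(j)=i$ forces $i\in\S$. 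The knapsack and coverage lines follow from the almost-sure guarantees of $\D$ together with linearity of expectation: $\sum_i w_iy^U_i=\E_\S[w(\S)\mathbf{1}_{U(\S)=U}]\le B\,q_U$ and $\sum_j\sum_{i\in B_j}x^U_{ij}=\E_\S[|\C_\S|\,\mathbf{1}_{U(\S)=U}]\ge t\,q_U$. Finally, summing the definition of $x^U_{ij}$ over $U$ gives $\sum_U\sum_{i\in B_j}x^U_{ij}=\Pr[j\in\C_\S]\ge p_j$, which is the fairness line.

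I do not expect any genuine difficulty here; the one place that needs care is bookkeeping. One must reconcile the written boundary value $y^U_i=1$ for $i\in U$ with the $q_U$-weighted scaling used throughout $\P_\textsf{config1}$ — equivalently, on configurations with $q_U>0$ one passes to the conditional quantities $y^U_i/q_U$ and $x^U_{ij}/q_U$ and sets all per-configuration variables to $0$ when $q_U=0$ — and one must note that a vertex of weight $>1/\epsilon$ can never lie in a feasible solution, so the last equality constraint is consistent with the point above. Beyond that, every inequality is just a union bound plus linearity of expectation, exactly as in Proposition~\ref{prop:frkcenter}.
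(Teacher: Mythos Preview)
Your proposal is correct and essentially identical to the paper's own argument: the paper too defines $q_U$, $y^U_i$, and $x^U_{ij}$ as the joint probabilities (not conditional) of the event that the big vertices of $\S$ are exactly $U$, together with $i\in\S$ or ``$j$ is connected to $i$'' respectively, and then checks each line of $\P_\textsf{config1}$ via linearity of expectation. You are also right to flag the bookkeeping issue around $y^U_i=1$ for $i\in U$: the paper's point has $y^U_i=q_U$ there too, so that constraint is evidently meant in the $q_U$-normalized sense (consistent with the rescaling $y'_i\gets y^U_i/q_U$ in \textsc{FRKnapCenterRound1}).
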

\begin{proof}
Fix any optimal distribution $\D$. Suppose $\S$ is sampled from $\D$. For any $U \in \U$, let $\e(U)$ be the event that $U \subseteq \S$ and $\S \setminus U$ contains no big vertex, and let $q_U = \Pr[ \e(U) ]$. It is clear that $\sum_{U \in \U} q_U = 1$. Let $x^{U}_{ij} = \Pr[ \e(U) \wedge \text{ $j$ is connected to $i$}]$ and let $y^{U}_i = \Pr[ \e(U) \wedge i \in \S]$.

Now observe that
\begin{align*}
	q_U = \Pr[\e(U)] = \sum_{i \in B_j} \Pr[ j \text{ is connected to }i \wedge  \e(U) ] = \sum_{i \in B_j} x^{U}_{ij}.
\end{align*}
Similarly,
\begin{align*}
	p_j &\leq \Pr[j \text{ is connected}] = \sum_{U \in \U} \Pr[j \text{ is connected} \wedge \e(U)] \\
		&= \sum_{U \in \U} \sum_{i \in B_j} \Pr[j \text{ is connected to }i \wedge \e(U)] = \sum_{U \in \U} \sum_{i \in B_j} x^{U}_{ij}.
\end{align*}

Note that $x^{U}_{ij}/q_U$ and $y^{U}_i/q_U$ are the probabilities that $j$ is connected to $i$ and $i \in \S$ conditioned on $\e(U)$, respectively. Since the number of connected clients is at least $t$ with probability one, we have
\begin{align*}
	t &\geq \E[\text{\# connected clients} | \e(U)] = \sum_{j \in V} \sum_{i \in B_j} \Pr[j \text{ is connected to } i | \e(U)] = \sum_{j \in V} \sum_{i \in B_j} x^{U}_{ij} / q_U.
\end{align*}
Similarly, $w(\S) \leq 1$ with probability one and so
\begin{align*}
	1 \geq \E[w(\S) | U] =  \sum_{i \in V} w_i (y^{U}_i/q_U). 
\end{align*}
The other constraints can be verified easily. We conclude that $(x,y,q) \in \P_\textsf{config1}$.
\end{proof}

We use the following Algorithm~\ref{algo:frknapcenterround1} to round any $(x,y,q) \in \P_\textsf{config1}$:

\begin{algorithm}[h]
\caption{$\textsc{FRKnapCenterRound1}\left(x,y,q \right)$}
\begin{algorithmic}[1]
\STATE Randomly pick a set $U \in \U$ with probability $q_U$
\STATE Let $x'_{ij} \gets x^{U}_{ij}/q_U$ and $y'_i \gets \min\{y_i^U/q_U, 1\}$
\RETURN $\S = \textsc{BasicRFKnapCenterRound}\left(x', y' \right)$
\end{algorithmic} 
\label{algo:frknapcenterround1}
\end{algorithm}

We are now ready to prove Theorem \ref{thm:FRKnapCenter}.

\begin{proof}[Proof of Theorem \ref{thm:FRKnapCenter}]
We will show that solution $\S$ returned by \textsc{FRKnapCenterRound1} satisfies the requirements of Theorem \ref{thm:FRKnapCenter}. Let $\e(U)$ denote the event that it select $U \in \U$. Note that $(x', y')$ satisfies the following constraints:
\begin{align*}
	\sum_{j \in V} \sum_{i \in B_j} x_{ij}' &\geq t, \\
	\sum_{i \in B_j} x_{ij}' &\leq 1, \quad \forall j \in V, \\
	\sum_{i \in B_j} x_{ij}' &= \sum_{i \in B_j} x_{ij} / q_U, \quad  \forall j \in V, \\
	x_{ij}' &\leq y_i', \quad \forall i,j \in V, \\
	\sum_{i \in V} w_iy_i' &\leq 1.
\end{align*}

Moreover, $y'_i = 1$ for all $i \in U$, and $y'_i = 0$ for all big vertices $i \in V \setminus U$. Thus, the two extra fractional vertices opened by \textsc{BasicFRKnapCenterRound} have  weight at most $\epsilon$. By Proposition \ref{prop:basicrfknapcenter}, we have $w(\S) \leq 1 + 2\epsilon$. Moreover, conditioned on $U$, we have 
$$\Pr[j \in \C | \e(U)] \geq  \sum_{i \in B_j} x_{ij}' = \sum_{i \in B_j} x_{ij} / q_U.$$
Thus, by definition of $\P_\textsf{config1}$ and our construction of $\S$, we get
\begin{align*}
	\Pr[j \in \C] &= \sum_{U \in \U} \Pr[j \in \C | \e(U)] \Pr[\e(U)] \geq \sum_{U \in \U} \sum_{i \in B_j} x_{ij} \geq p_j.
\end{align*}

\end{proof}

\subsubsection{An algorithm that satisfies the knapsack constraint exactly} \label{sec:knap_exact}

Let $\epsilon > 0$ a small parameter to be determined. Let $\U$ denote the collection of all possible vertex sets of at most $\lceil 1/\epsilon \rceil$; note that $|\U| \leq n^{O(1/\epsilon)}$. Suppose $R$ is the optimal radius to our instance. Given a set $U \in \U$, we say that vertex $j \in V$ is \emph{blue} if $d(j,U) \leq 3 R$; otherwise, vertex $i$ is \emph{red}. For any $i \in V$, let $\RBall(i, U, R)$ denote the set of red vertices within radius $3R$ from $i$:
$$ \RBall(i, U, R) := \{j \in V: d(i,j) \leq 3R ~\wedge~  d(j,U) > 3 R \}
$$

Consider the \emph{configuration} polytope $\P_\textsf{config2}$ containing points $(x,y,q)$ with the following constraints:
\begin{align*}
  \begin{cases}
      \sum_{U \in \U}q_U = 1     & \quad \\
      \sum_{i \in B_j} x^{U}_{ij} \leq q_U    & \quad \forall j \in V, U \in \U \\
      \sum_{U \in \U} \sum_{i \in B_j} x^{U}_{ij} \geq p_j    & \quad \forall j \in V \\
       x^{U}_{ij} \leq y^{U}_i   & \quad \forall i,j \in V, U \in \U\\
       \sum_{i \in V} w_i y^{U}_i \leq q_U B   & \quad  \forall U \in \U \\
       \sum_{j \in V} \sum_{i \in B_j} x^{U}_{ij} \geq q_U t   & \quad \\
       y^{U}_i = 1 & \quad \forall U \in \U, i \in U\\
       y^{U}_i = 0 & \quad \forall U \in \U, i \in V \setminus U, |\RBall(i, U, R)| \geq \epsilon n \\
       x^{U}_{ij}, y^{U}_i, q_U \geq 0   & \quad \forall i,j \in V, U \in \U \\
  \end{cases}
\end{align*}

We first claim that $\P_\textsf{config2}$ is a valid relaxation polytope for the problem.

\begin{proposition} The polytope $\P_\textsf{config2}$ is non-empty.
\label{prop:valid_knap}
\end{proposition}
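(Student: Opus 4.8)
The plan is to mimic the proof of Proposition~\ref{prop:relaxation_knapcenter} almost verbatim, since the only structural difference between $\P_\textsf{config1}$ and $\P_\textsf{config2}$ is the last nontrivial constraint: instead of forcing $y^U_i = 0$ for vertices of weight $> 1/\epsilon$, we now force $y^U_i = 0$ whenever $|\RBall(i,U,R)| \geq \epsilon n$. So I would take an optimal distribution $\D$, sample $\S \sim \D$, and define $q_U$, $x^U_{ij}$, $y^U_i$ with respect to an appropriate event $\e(U)$ depending on $U$. The constraints $\sum_U q_U = 1$, $\sum_{i\in B_j} x^U_{ij} \leq q_U$, $\sum_U\sum_{i\in B_j} x^U_{ij} \geq p_j$, $x^U_{ij}\leq y^U_i$, $\sum_i w_i y^U_i \leq q_U B$, and $\sum_j\sum_{i\in B_j} x^U_{ij} \geq q_U t$ all follow by exactly the computations already carried out (conditional expectations of the coverage count and of $w(\S)$, union bound for fairness, disjointness of the events for the partition-of-unity constraint), so I would state that these "are verified as in Proposition~\ref{prop:relaxation_knapcenter}" rather than rewriting them.

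The crux is choosing the event $\e(U)$ so that the new constraint $y^U_i = 0$ for all $i \notin U$ with $|\RBall(i,U,R)| \geq \epsilon n$ holds with probability one on the event. Here $U$ must play the role of "the set of centers in $\S$ that have many red clients hanging off them," analogous to the role of big vertices in $\P_\textsf{config1}$. Concretely, I would define $\e(U)$ to be the event that $U$ is exactly the set $\{i \in \S : |\RBall(i,U',R)| \geq \epsilon n\}$ — but this is circular because the red/blue partition itself depends on $U$. To break the circularity I would instead let $U$ be the set of all centers $i \in \S$ for which the number of clients within $3R$ of $i$ but farther than $3R$ from \emph{every other chosen center of this special type} is at least $\epsilon n$; more cleanly, take $U = \{ i \in \S : |\{ j \in V : d(i,j)\le 3R,\ d(j, \S)\le 3R \text{ only via } i\}| \ge \epsilon n\}$. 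One then checks that $\RBall(i,U,R)$, computed with this $U$, is contained in the set used to define membership in $U$, so $i\in U$ whenever $|\RBall(i,U,R)|\ge \epsilon n$; equivalently, if $i\in\S\setminus U$ then $|\RBall(i,U,R)|<\epsilon n$, giving $y^U_i=0$ as required. Also $|U|\le 1/\epsilon$ with probability one since the sets counted for distinct members of $U$ are disjoint and each has size $\ge \epsilon n$, so $U\in\U$ and $q_U$ is well-defined, and $\sum_{U\in\U}q_U=1$ because the events $\e(U)$ partition the sample space.

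The main obstacle, then, is pinning down this definition of $\e(U)$ precisely enough that (i) the events genuinely partition the probability space (so each realization of $\S$ determines a unique $U$), and (ii) the implication "$i\in\S\setminus U \Rightarrow |\RBall(i,U,R)|<\epsilon n$" is actually true with the self-referential $U$ — this requires that adding the blue vertices (those within $3R$ of $U$) to the "served" set only shrinks each $\RBall$, which holds because $\RBall(i,U,R)$ only excludes more vertices as $U$ grows. Once the right $\e(U)$ is fixed, everything else is the routine conditional-expectation bookkeeping already displayed in Proposition~\ref{prop:relaxation_knapcenter}, and I would simply refer back to it. I would close by noting $(x,y,q)\in\P_\textsf{config2}$, completing the proof.
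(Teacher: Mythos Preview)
Your overall plan is the same as the paper's: draw $\S$ from an optimal distribution, associate to each realization a small set $U\subseteq\S$ so that every remaining center in $\S\setminus U$ has a small red ball, and then define $q_U,x^U,y^U$ by the usual conditional-probability bookkeeping. The difficulty is, as you correctly isolate, the construction of $U$.

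However, your proposed non-circular definition of $U$ does not work. You set
\[
U \;=\; \bigl\{\, i\in\S : \bigl|\{\,j\in V : d(i,j)\le 3R \text{ and $i$ is the \emph{unique} center of $\S$ within $3R$ of $j$}\,\}\bigr| \ge \epsilon n \,\bigr\},
\]
and then assert that $\RBall(i,U,R)$ is contained in this ``uniquely served'' set, so that $i\in\S\setminus U$ forces $|\RBall(i,U,R)|<\epsilon n$. This containment is false: membership in $\RBall(i,U,R)$ only says $d(j,U)>3R$, not that $j$ is far from every other center of $\S\setminus U$. Concretely, take $\S=\{a,b\}$ with $d(a,b)\le 3R$ and $2\epsilon n$ clients each within $3R$ of both $a$ and $b$. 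No client is uniquely served, so $U=\emptyset$; but then $\RBall(a,\emptyset,R)$ contains all $2\epsilon n$ clients, so $a\in\S\setminus U$ with $|\RBall(a,U,R)|\ge\epsilon n$, violating the very constraint you need. Your monotonicity remark (``$\RBall$ only shrinks as $U$ grows'') is true but does not rescue the argument, because the set you compare against is defined relative to all of $\S$, not relative to $U$.

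The paper avoids this by building $U$ \emph{greedily} rather than in one shot: start with $U=\emptyset$, and while some $i\in\S$ still has $|\RBall(i,U,R)|\ge\epsilon n$, add such an $i$ to $U$ (recoloring its $3R$-ball blue). Each addition turns at least $\epsilon n$ fresh red vertices blue, so the process stops after at most $\lceil 1/\epsilon\rceil$ steps, and by the termination condition every $i\in\S\setminus U$ has $|\RBall(i,U,R)|<\epsilon n$. With this $U$ in hand, the rest of your write-up (deferring the shared constraints to Proposition~\ref{prop:relaxation_knapcenter}) is fine.
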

\begin{proof}
Suppose $\S$ is a solution drawn from the optimal distribution $\D$. Consider the  following randomized procedure to generate a random subset $U_\S$ of $\S$:
\begin{algorithm}[h]
\caption{Generate random $U_\S$:}
\begin{algorithmic}[1]
\STATE Set $U_\S \leftarrow \emptyset$.
\WHILE{there exists $\in S$ such that $|\RBall(i, U_\S, R)| \geq \epsilon n$}
\STATE Select the vertex $i$ of smallest index such that $|\RBall(i, U_\S, R)| \geq \epsilon n$
\STATE Update $U_\S \leftarrow U_\S \cup \{i \}$
\STATE Mark all vertices within radius $3R$ of $i$ as blue
\ENDWHILE
\end{algorithmic}
\label{algo:genus}
\end{algorithm}

Note that for all $i \in \S \setminus U_\S$, we have $ |\RBall(i, U_\S, R)| < \epsilon n$ by the condition of the while-loop. Moreover, we claim that $|U_\S| \leq \lceil 1/\epsilon \rceil$, so that $U_\S \in \U$. For, suppose $|U_\S| > 1/\epsilon$; for each $i \in U_\S$, there are at least $\epsilon n$ red vertices turned into blue by $i$ in the procedure. This implies that there are more than $(1/\epsilon) \times \epsilon n = n$ vertices, which is a contradiction.

Now for any $U \in \U$, we set $q_U := \Pr[U_\S = U]$. Let $x^{U}_{ij} = \Pr[U_S = \U \wedge \text{$j$ is connected to $i$}]$, and finally let $y^{U}_i = \Pr[ U_\S = U \wedge i \in \S]$. Then it is clear that $\sum_{U \in \U}q_U = 1$. Using similar arguments as in the proof of Proposition \ref{prop:relaxation_knapcenter}, we have the following inequalities:
\begin{align*}
\sum_{i \in B_j} x^{U}_{ij} \leq q_U,    & \quad \forall j \in V, U \in \U \\
      \sum_{U \in \U} \sum_{i \in B_j} x^{U}_{ij} \geq p_j,    & \quad \forall j \in V \\
      \sum_{i \in V} w_i y^{U}_i \leq q_U,   & \quad  \forall U \in \U \\
      \sum_{j \in V} \sum_{i \in B_j} x^{U}_{ij} \geq q_U t.   & \quad 
\end{align*}
As mentioned before, if $|\RBall(i, U_\S, R)| \geq \epsilon n$ then $i \notin \S$. Therefore,
\begin{align*}
  y^{U}_i = \Pr[U_\S = U \wedge i \in \S] = 0, & \quad \forall U \in \U, i \in V \setminus U, |\RBall(i, U, R)| \geq \epsilon n.
\end{align*}
The other constraints can be verified easily. We conclude that $(x,y,q) \in \P_\textsf{config2}$.
\end{proof}

Next, let us pick any $(x,y,q) \in \P_\textsf{config2}$ and use the following algorithm to round it. 

\begin{algorithm}[h]
\caption{$\textsc{FRKnapCenterRound2}\left(x,y,q \right)$}
\begin{algorithmic}[1]
\STATE Randomly pick a set $U \in \U$ with probability $q_U$
\STATE Let $x'_{ij} \gets x^{U}_{ij}/q_U$ and $y'_i \gets \min\{y_i^U/q_U, 1\}$
\STATE $\S' \gets \textsc{BasicRFKnapCenterRound}\left(x', y' \right)$
\STATE Let $i_1, i_2$ be vertices in $\S' \setminus U$ having largest weights.
\RETURN $\S = \S' \setminus \{i_1, i_2\}$
\end{algorithmic} 
\end{algorithm}

\bigskip \noindent \textbf{Analysis.} Let us fix any $\gamma > 0$ and set $\epsilon := \frac{\gamma^2}{2}$. Also, let $\e(U)$ denote the event that $U \in \U$ is picked in the algorithm. Again, observe that $(x', y')$ satisfies the following inequalities:

\begin{align*}
	\sum_{j \in V} \sum_{i \in B_j} x_{ij}' &\geq t, \\
	\sum_{i \in B_j} x_{ij}' &\leq 1, \quad \forall j \in V, \\
	\sum_{i \in B_j} x_{ij}' &= \sum_{i \in B_j} x_{ij} / q_U, \quad  \forall j \in V, \\
	x_{ij}' &\leq y_i', \quad \forall i,j \in V, \\
	\sum_{i \in V} w_iy_i' &\leq 1.
\end{align*}

Recall that the algorithm \textsc{BasicFRKnapCenterRound} will return a solution $\S'$ consisting of a set $\S''$ with $w(\S'') \leq 1$ plus (at most) two extra ``fractional'' centers $i^*$ and $i^{**}$. Since $y'_{i^{**}}, y'_{i^*}$ are  fractional, we have that $i^*, i^{**} \notin U$. Thus, by removing the two centers having highest weights in $\S' \setminus U$, we ensure that $w(\S) \leq 1$ with probability one.

Now we shall prove the coverage guarantee. By Proposition \ref{prop:basicrfknapcenter}, $\S'$ covers at least $t$ vertices within radius $3R$. If a vertex is blue, it can always be connected to some center in $U$ and so it is not affected by the removal of $i_1, i_2$. Because each of $i_1$ and $i_2$ can cover at most $\epsilon n$ other red vertices, we have 
$$|\C| \geq t - 2\epsilon n = 1 - \gamma^2 n.$$

For any $j \in V$, let $X_j$ be the indicator random variable for the event that $d(j, \S') \leq 3 R$ but $d(j, \S \setminus \{i_1, i_2 \}) > 3 R$. We say that $j$ is a bad vertex iff $\E[X_j] \geq \gamma$, other, vertex $j$ is good. Note that $\sum_{j \in V} X_j \leq 2\epsilon n$ with probability one. Thus, there can be at most $2\epsilon n / \gamma$ bad vertices. Letting $T$ denote the set of good vertices, we have
$$|T| \geq n - 2\epsilon n/\gamma = (1-\gamma)n.$$ 
By Proposition \ref{prop:basicrfknapcenter}, $\Pr[j \text{ is covered by }\S'] \geq p_j$. For any $j \in T$, we have
\begin{align*}
	\Pr[j \in \C] &= \Pr[j \text{ is covered by }\S' \wedge X_j = 0] \\ 
		&= \Pr[j \text{ is covered by }\S'] - \Pr[j \text{ is covered by }\S' \wedge X_j = 1] \\ 
		&\geq \Pr[j \text{ is covered by }\S'] - \Pr[X_j = 1] \geq p_j - \gamma.
\end{align*}
This concludes the first part of Theorem \ref{thm:FRMatCenter} for the \textsf{FRKnapCenter} problem.

\section{The Matroid Center problems with outliers}
In this section, we will first give a tight $3$-approximation algorithm for the \textsf{RMatCenter} problem, improving upon the $7$-approximation algorithm by Chen et. al. \cite{jianli_kc}. Then we study the \textsf{FRMatCenter} problem and prove the second part of Theorem \ref{thm:FRMatCenter}.

\subsection{The robust matroid center problem}
Suppose $\I = (V, d, \M, t)$ is an instance of the \textsf{RMatCenter} problem with optimal radius $R$. Let $r_\M$ denote the rank function of $\M$. Consider the polytope $\P_\textsf{RMatCenter}$ containing points $(x, y)$ satisfying constraints (\ref{ineq:cov})--(\ref{ineq:nonneg}), and the matroid rank constraints:
\begin{align}
	 y(U) \leq r_\M(U), \quad \forall U \subseteq V. \label{ineq:matroid}
\end{align}

Since $R$ is the optimal radius, it is not difficult to check that $\P_\textsf{RMatCenter} \neq \emptyset$. Let us pick any fractional solution $(x,y) \in \P_\textsf{RMatCenter}$. The next step is to round $(x,y)$ into an integral solution. Our $3$-approximation algorithm is as follows.

\begin{algorithm}[h]
\caption{$\textsc{RMatCenterRound}\left(x,y \right)$}
\begin{algorithmic}[1]
\STATE $(V',\vec{c}) \gets \textsc{RFiltering}\left(x,y \right).$
\STATE Let $\P' := \left\{z \in [0,1]^V: z(U) \leq r_\M(U) ~\forall U \subseteq V ~\wedge~ z(F_i) \leq 1 ~~\forall i \in V' \right\}$
\STATE Find a basic solution $Y \in \P'$ which maximizes the linear function $f: [0,1]^V \to \mathbb{R}$ defined as
	$$ f(z) := \sum_{j \in V'} c_j \sum_{i \in F_j} z_i ~~\text{for }z \in [0,1]^V.$$
\RETURN $\S = \{i \in V: Y_i = 1\}$.
\end{algorithmic} 
\label{algo:rmcenter}
\end{algorithm}

\bigskip \noindent \textbf{Analysis.} Again, by construction, the clusters $F_i$ are pairwise disjoint for $i \in V'$. Note that $\P'$ is the matroid intersection polytope between $\M$ and another partition matroid polytope saying that at most one item per set $F_i$ for $i \in V'$ can be chosen. Moreover, $y \in \P'$ implies that $\P' \neq \emptyset$. Thus, $\P'$ has integral extreme points and optimizing over $\P'$ can be done in polynomial time. Note that the solution $\S$ is feasible as it satisfies the matroid constraint. The correctness of $\textsc{RMatCenterRound}$ follows immediately by the following two propositions. 

\begin{proposition} There are at least $f(Y)$ vertices in $V$ that are at distance at most $3R$ from some open center in $\S$.
\label{prop:fbound}
\end{proposition}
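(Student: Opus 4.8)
The plan is to show that the set $\S = \{i \in V : Y_i = 1\}$ covers, within radius $3R$, at least $f(Y)$ distinct vertices, where $f(Y) = \sum_{j \in V'} c_j \sum_{i \in F_j} Y_i$. First I would observe that since $Y$ is a basic (hence integral) solution of the matroid intersection polytope $\P'$, and $\P'$ includes the partition-matroid constraints $z(F_i) \le 1$ for each $i \in V'$, each cluster $F_j$ with $j \in V'$ contains \emph{at most one} chosen center; that is, $\sum_{i \in F_j} Y_i \in \{0,1\}$. Consequently $f(Y) = \sum_{j \in V' : \S \cap F_j \neq \emptyset} c_j$, a sum of the $c_j$ values over those clusters that ``capture'' a center of $\S$.

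Next I would recall the semantics of $c_j$ coming from \textsc{RFiltering}: when cluster $F_j$ was selected into $V'$, it marked $c_j$ clusters (including itself), and by the triangle inequality every cluster center $k$ marked by $F_j$ satisfies $d(j,k) \le 2R$ (their clusters intersect, and cluster radii are $\le R$). Actually the cleaner accounting is: $c_j$ counts distinct \emph{clients} — for each marked cluster $F_k$ there is its center $k$, and these centers are all within distance $2R$ of $j$; moreover the clusters whose centers lie in $V'$ are pairwise disjoint, so the sets of clients counted by distinct $c_j$'s ($j \in V'$) are disjoint. Thus $\sum_{j \in V'} c_j$ counts $t' \ge t$ distinct clients overall, and for any subset $W \subseteq V'$, $\sum_{j \in W} c_j$ counts $\big|\bigcup_{j \in W}(\text{clients marked by } F_j)\big|$ distinct clients.

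Now the key geometric step: if $F_j$ captures a center, i.e. there is $i \in F_j \cap \S$, then $d(i,j) \le R$, so every client $c$ marked by $F_j$ satisfies $d(i,c) \le d(i,j) + d(j, \cdot) \le R + 2R = 3R$ (going through $j$ then through the center of the marked cluster, each hop costing at most $R$ — so actually $d(i,c) \le d(i,j) + d(j,k) + d(k,c) \le R + 2R + R$; I would instead route as $d(i,c)\le d(i,j)+d(j,c)$ where $d(j,c)\le 2R$ is the bound already used in the $2$-approx analysis, giving $3R$). Hence every client counted by $c_j$ is within $3R$ of the center $i \in \S$. Summing over all captured clusters $j$, and using disjointness of the client sets across different $j \in V'$, the total number of distinct vertices within $3R$ of $\S$ is at least $\sum_{j \in V' : \S \cap F_j \neq \emptyset} c_j = f(Y)$, which is what we want.

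The main obstacle I anticipate is pinning down the precise distance bound and making the disjointness argument airtight: one must be careful that (a) the ``clients counted by $c_j$'' are genuinely $2R$ from $j$ so that routing through an $\S$-center in $F_j$ gives $3R$, and (b) these client sets really are disjoint across $j \in V'$ — this follows because \textsc{RFiltering} marks each cluster exactly once, so each cluster center (and the client it represents in the count) is attributed to a unique $j \in V'$. Everything else is bookkeeping: integrality of $Y$ from matroid intersection (already justified in the paragraph preceding the proposition), and the fact that at most one center per $F_j$ is opened, which collapses $f(Y)$ to a clean sum of $c_j$'s.
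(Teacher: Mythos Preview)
Your proposal is correct and follows essentially the same approach as the paper's proof: use integrality of $Y$ and the partition constraint $Y(F_j)\le 1$ to reduce $f(Y)$ to $\sum_{j\in V':\,\S\cap F_j\neq\emptyset} c_j$, then observe that the $c_j$ cluster centers $k$ marked by $F_j$ satisfy $d(k,j)\le 2R$ (since $F_k\cap F_j\neq\emptyset$), hence $d(k,i)\le d(k,j)+d(j,i)\le 2R+R=3R$ for the opened center $i\in F_j$. Your explicit treatment of disjointness (each cluster is marked by exactly one $j\in V'$, so the clients counted by different $c_j$'s are distinct) is actually more carefully spelled out than in the paper, which leaves this implicit.
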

\begin{proof} Recall that $\S$ is the set of vertices $i \in V$ such that $Y_i = 1$. Moreover, since $Y(F_j) \leq 1$, there can be at most one open center in $F_j$ (i.e., $|\S \cap F_j| \leq 1$) for each $j \in V'$. For any $j \in V'$,
\begin{itemize}
	\item if $Y(F_j) = 0$, then there is no open center in $F_j$ and its contribution in $f(Y)$ is zero,
	\item if $Y(F_j) = 1$, then we open some center $i \in F_j$ and the contribution of $j$ to $f(Y)$ is equal to $c_j$. (Recall that $c_j$ is the number of clusters $F_k$ with $F_j \cap F_k \neq \emptyset$.) By triangle inequality, $d(k,i) \leq d(k,j) + d(j,i) \leq 2R+R = 3R$.
\end{itemize}
\end{proof}

\begin{proposition} We have $f(Y) \geq t$.
\label{prop:fbound2}
\end{proposition}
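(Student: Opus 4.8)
The plan is to exhibit an explicit feasible point of $\P'$ whose $f$-value is at least $t$, and then invoke the fact that $Y$ was chosen to maximize $f$ over $\P'$. The natural candidate is the vector $y$ itself, i.e. the fractional LP solution we started from. First I would check that $y \in \P'$: the matroid rank constraints $y(U) \le r_\M(U)$ are exactly constraint (\ref{ineq:matroid}), which $y$ satisfies as a member of $\P_\textsf{RMatCenter}$; and $y(F_i) \le 1$ for each $i \in V'$ was already observed in the Preliminaries (from constraints (\ref{ineq:connect}) and (\ref{ineq:open}), since $F_i \subseteq B_i$ and $x_{\ell i} \le y_\ell$ give $y(F_i) \ge \sum_{\ell \in F_i} x_{\ell i} = s_i$, but more to the point $y(F_i) \le 1$). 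Also $y \in [0,1]^V$. Hence $y \in \P'$, so $f(Y) \ge f(y)$ by optimality of $Y$.

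Next I would lower-bound $f(y)$. By definition,
\[
f(y) = \sum_{j \in V'} c_j \sum_{i \in F_j} y_i \ge \sum_{j \in V'} c_j \sum_{i \in F_j} x_{ij} = \sum_{j \in V'} c_j s_j,
\]
using $y_i \ge x_{ij}$ termwise. Now I would reuse the counting argument from the $k$-center and knapsack analyses: because \textsc{RFiltering} processes clusters in decreasing order of $s_j$ and $c_j$ counts every cluster $F_k$ (including those not in $V'$) that intersects $F_j$ at the time $j$ is selected, each original cluster center $k \in V$ is "charged" to some $j \in V'$ with $s_j \ge s_k$, giving $\sum_{j \in V'} c_j s_j \ge \sum_{k \in V} s_k \ge t$, where the last inequality is the coverage constraint (\ref{ineq:cov}) rewritten as $\sum_{k \in V} s_k = \sum_{k \in V}\sum_{i \in B_k} x_{ik} \ge t$. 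Chaining these gives $f(Y) \ge f(y) \ge t$.

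I do not expect a serious obstacle here; this is the same template used twice already in the paper. The one point deserving a sentence of care is the charging inequality $\sum_{j \in V'} c_j s_j \ge \sum_{k \in V} s_k$: it holds because when $j$ is added to $V'$ it marks $c_j$ previously-unmarked clusters, every cluster gets marked exactly once over the run, and at the moment $F_k$ is marked by $F_j$ we have $s_j \ge s_k$ (as $F_j$ was still unmarked and chosen first in decreasing $s$-order). So $\sum_{j \in V'} c_j s_j = \sum_{j \in V'} \sum_{k : F_k \text{ marked by } F_j} s_j \ge \sum_{j \in V'} \sum_{k : F_k \text{ marked by } F_j} s_k = \sum_{k \in V} s_k$. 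Combined with Proposition \ref{prop:fbound}, this shows $\S$ covers at least $t$ vertices within radius $3R$, completing the $3$-approximation guarantee for \textsf{RMatCenter}.
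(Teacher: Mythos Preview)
Your overall strategy---exhibit a feasible point of $\P'$ with $f$-value at least $t$ and invoke optimality of $Y$---is exactly the paper's strategy, and your charging inequality $\sum_{j\in V'} c_j s_j \ge \sum_{k\in V} s_k \ge t$ is correct and is precisely what the paper uses. The gap is in the choice of witness: you take the original LP vector $y$, but $y$ need not lie in $\P'$. The partition constraint $y(F_j)\le 1$ does \emph{not} follow from constraints (\ref{ineq:connect}) and (\ref{ineq:open}); as your own parenthetical computation shows, those constraints give $y(F_j)=\sum_{\ell\in F_j} y_\ell \ge \sum_{\ell\in F_j} x_{\ell j}=s_j$, a lower bound rather than an upper bound. (Concretely: if three facilities $a,b,c$ lie in $B_j$ with $x_{aj}=x_{bj}=x_{cj}=1/3$ and $y_a=y_b=y_c=1$, all LP constraints can be satisfied for a matroid of rank $\ge 3$, yet $y(F_j)=3$.) The sentence in the Preliminaries asserting $y(F_j)\le 1$ is not justified, and you should not rely on it; note that the paper's analysis paragraph also casually writes ``$y\in\P'$'' but does not use this in the actual proof.

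The paper's proof fixes exactly this point by replacing $y$ with the smaller vector $y'$ defined by $y'_i:=x_{ij}$ for $i\in F_j$, $j\in V'$, and $y'_i:=0$ otherwise (well-defined since the $F_j$ for $j\in V'$ are disjoint). Then $y'\le y$ componentwise, so $y'(U)\le y(U)\le r_\M(U)$ for all $U$, and $y'(F_j)=\sum_{i\in F_j}x_{ij}=s_j\le 1$ directly from (\ref{ineq:connect}); hence $y'\in\P'$. Moreover $f(y')=\sum_{j\in V'} c_j s_j$ exactly, and your charging argument then gives $f(Y)\ge f(y')\ge t$. With this one substitution your proof coincides with the paper's.
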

\begin{proof}
For each $j \in V'$ and $i \in F_j$, define $y'_i := x_{ij}$ (this is well-defined as all clusters $F_j$ for $j \in V'$ are pairwise disjoint). Also, set $y'_i := 0$ for other vertices $i$ not belonging to any marked cluster. Then, by greedy choice and constraint (\ref{ineq:cov}), we have
\begin{align*}
	f(y') &= \sum_{j \in V'} c_j y'(F_j) = \sum_{j \in V'} c_j s_j \geq \sum_{j \in V}  s_j \geq t.
\end{align*}
By the choice of $Y$, we have $f(Y) \geq f(y') \geq t$.
\end{proof}
This analysis proves the second part of Theorem \ref{theorem:standard_models}.

\subsection{The fair robust matroid center problem}

In this section, we consider the \textsf{FRMatCenter} problem. It is not difficult to modify and randomize algorithm \textsc{RMCenterRound} so that it would return a random solution satisfying both the fairness guarantee and matroid constraint, and preserving the coverage constraint \emph{in expectation}. This can be done by randomly picking $Y$ inside $\P'$. However, if we want to obtain some concrete guarantee on the coverage constraint, we may have to (slightly) violate either the matroid constraint or the fairness guarantee. We leave it as an open question whether there exists a true approximation algorithm for this problem.

We will start with a pseudo-approximation algorithm which always returns a basis of $\M$ plus at most one extra center.  Our algorithm is quite involved. We first carefully round a fractional solution inside a matroid intersection polytope into a (random) point with a special property: the unrounded variables form a single path connecting some clusters and tight matroid rank constraints. Next, rounding this point will ensure that all but one cluster has an open center. Then opening one extra center is sufficient to cover at least $t$ clients.

Finally, using a similar preprocessing step similar to the one in Section \ref{sec:knap_exact}, we correct the solution by removing the extra center without affecting the fairness and coverage guarantees by too much. This algorithm concludes Theorem \ref{thm:FRMatCenter}.

\subsubsection{A pseudo-approximation algorithm}

Suppose $\I = (V, d, \M, t, \vec{p})$ is an instance the robust matroid center problem with the optimal radius $R$. Let $r_\M$ denote the rank function of $\M$ and $\P_\M$ be the matroid base polytope of $\M$. Consider the polytope $\P_\textsf{FRMatCenter}$  containing points $(x, y)$ satisfying constraints (\ref{ineq:cov})--(\ref{ineq:nonneg}), the fairness constraint (\ref{ineq:fair}), and the matroid constraints (\ref{ineq:matroid}).  We note that $\P_\textsf{FRMatCenter}$ is a valid relaxation.

The main algorithm is summarized in Algorithm \ref{algo:frmcenter}, which can round any \textit{vertex} point $(x, y) \in \P_\textsf{FRMatCenter}$. Basically, we will round $y$ iteratively. In each round, we construct a (multi)-bipartite graph where vertices on the left side are the disjoint sets $O_1, O_2, \ldots$ in Corollary \ref{cor:decompose_face}. Vertices on the right side are corresponding to the disjoint sets $F_1, F_2, \ldots$ returned by \textsc{RFiltering}. Now each edge of the bipartite graph, connecting $O_i$ and $F_j$, represents some unrounded variable $y_v \in (0,1)$ where $v \in O_i$ and $v \in F_j$. See Figure \ref{fig:bipartite_graph}.

\begin{figure}[h]
\centering
\includegraphics[width=0.3\textwidth]{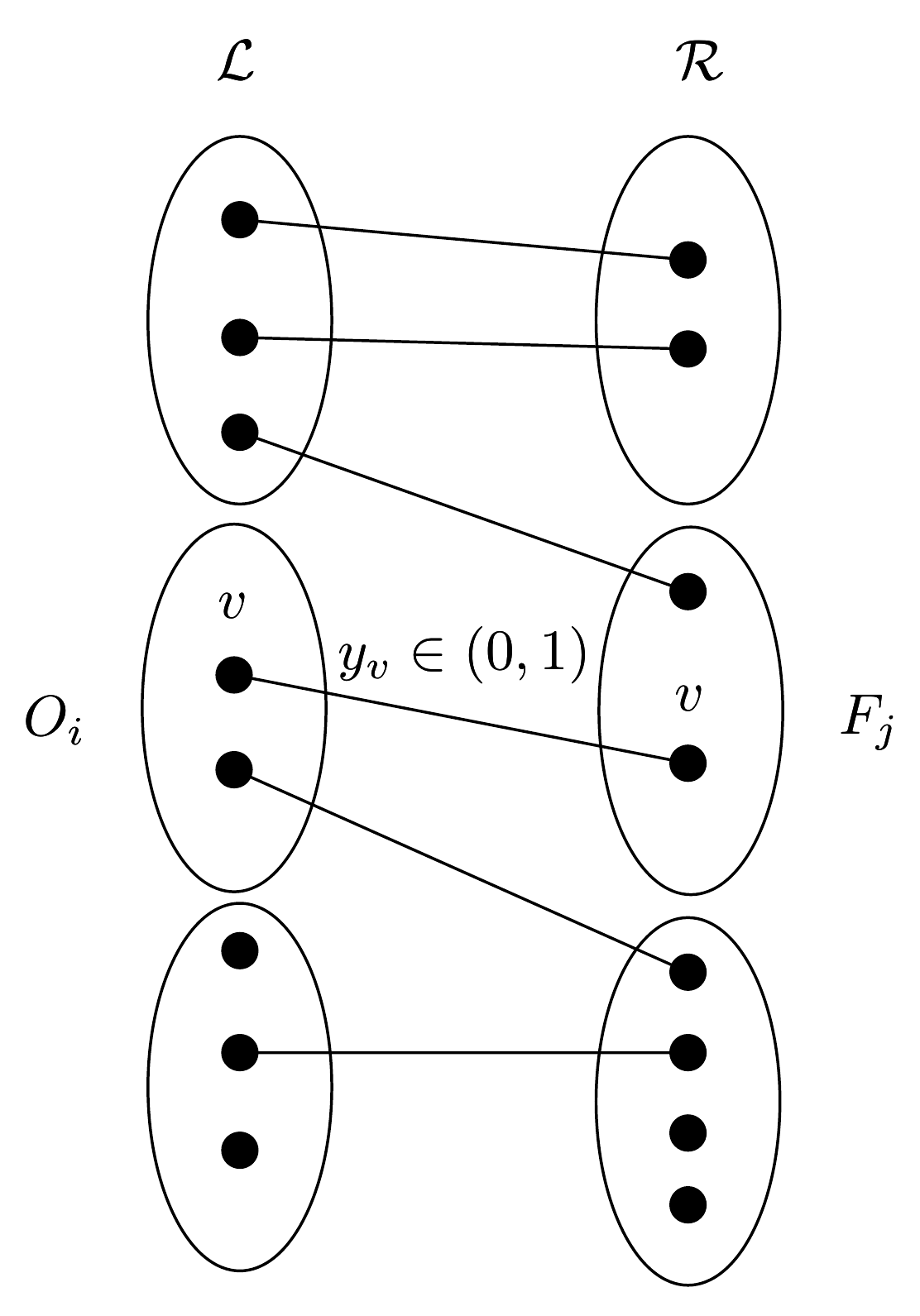}
\caption{Construction of the multi-bipartite graph $H=(\L,\R,E_H)$ in the main algorithm. }
\label{fig:bipartite_graph}
\end{figure}

Then we carefully pick a cycle (path) on this graph and round variables on the edges of this cycle (path). This is done by subroutines \textsc{RoundCycle}, \textsc{RoundSinglePath}, and \textsc{RoundTwoPaths}. See Figures \ref{fig:cycle}, \ref{fig:single_path}, and \ref{fig:two_paths}. Basically, these procedures will first choose a direction $\vec{r}$ which alternatively increases and decreases the variables on the cycle (path) so that (i) all tight matroid constraints are preserved and (ii) the number of (fractionally) covered clients is also preserved. Now we randomly move $y$ along $\vec{r}$ or $-\vec{r}$ to ensure that all the marginal probabilities are preserved.

Finally, all the remaining, fractional variables will form one path on the bipartite graph. We round these variables by the procedure \textsc{RoundFinalPath} which exploits the integrality of any face of a matroid intersection polytope. Then, to cover at least $t$ clients, we may need to open one extra facility.

\begin{algorithm}[H]
\caption{$\textsc{PseudoFRMCenterRound}\left(x,y \right)$}
\begin{algorithmic}[1]
\STATE $(V',\vec{c}) \gets \textsc{RFiltering}\left(x,y \right)$ and let $\F \gets \{F_j: j \in V'\}$

\STATE Set $y'_i \gets x_{ij}$ for all $j \in V', i \in F_j$
\STATE Set $y'_i \gets 0$ for all $i \in V \setminus \bigcup_{j \in V'}F_j$

\WHILE{$y'$ still contains some fractional values}
	\STATE Note that $y'\in \P_\M$. Compute the disjoint sets $O_1, \ldots, O_t$ and constants $b_{O_1}, \ldots, b_{O_t}$ as in Corollary \ref{cor:decompose_face}. 
	\STATE Let $O_0 \gets V \setminus \bigcup_{i=1}^t O_i$ and $F_0 \gets V \setminus \bigcup_{j \in V'}F_j  $
	\STATE Construct a multi-bipartite graph $H = (\L, \R, E_H)$ where
	\begin{itemize}[noitemsep,nolistsep] 
		\item each vertex $i \in \L$, where $\L = \{0,\ldots,t\}$, is corresponding to the set $O_i$
		\item each vertex $j \in \R$, where $\R = \{0\} \cup \{k:F_k \in \F\}$, is corresponding to the set $F_j$ 
		\item for each vertex $v \in V$ such that $y_v \in (0,1)$: if $v$ belongs to some set $O_i$ and $F_j$, add an edge $e$ with label $v$ connecting $i \in \L$ and $j \in \R$. 
	\end{itemize}
	\STATE Check the following cases (in order):
	\begin{itemize}[noitemsep,nolistsep] 
		\item Case 1: $H$ contains a cycle. Let $\vec{v} = (v_1, v_2, \ldots, v_{2\ell})$ be the sequence of edge labels on this cycle. Update $y' \gets \textsc{ RoundCycle}(y', \vec{v})$ and go to line 4.
		\item Case 2: $H$ contains a maximal path with one endpoint in $\L$ and the other in $\R$. Let $\vec{v} = (v_1, v_2, \ldots, v_{2\ell+1})$ be the sequence of edge labels on this path. Update $y' \gets \textsc{ RoundSinglePath}(y', \vec{v})$ and go to line 4.
		\item Case 3: There are at least 2 distinct maximal paths (not necessarily disjoint) having both endpoints in $\R$. Let $\vec{v}_1, \vec{v}_2$ be the sequences of edge labels on these two paths. Update $y' \gets \textsc{ RoundTwoPaths}(y', \vec{v}_1, \vec{v}_2, \vec{c})$ and go to line 4.
		\item The remaining case: all edges in $H$ form a single path with both endpoints in $\R$. Let $(v_1, v_2, \ldots, v_{2\ell})$ be the sequence of edge labels on this path. Let $Y \gets \textsc{ RoundFinalPath}(y', \vec{v})$ and exit the loop.
	\end{itemize}
\ENDWHILE
\RETURN $\S = \{i \in V: Y_i = 1\}$.
\end{algorithmic} 
\label{algo:frmcenter}
\end{algorithm}

\begin{proposition} The polytyope $\P_\textsf{FRMatCenter}$ is non-empty.
\end{proposition}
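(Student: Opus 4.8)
The plan is to mimic the proof of Proposition~\ref{prop:frkcenter} almost verbatim, since the only structural difference between $\P_\textsf{FRkCenter}$ and $\P_\textsf{FRMatCenter}$ is that the cardinality constraint~(\ref{ineq:cardinality}) is replaced by the matroid rank constraints~(\ref{ineq:matroid}). Concretely, I would start from an optimal distribution $\D$ for the \textsf{FRMatCenter} instance (which exists by definition of the optimal radius $R$), draw $\S$ from $\D$, and let $\C$ be the set of clients within radius $R$ of some center in $\S$. I then set $y_i := \Pr[i \in \S]$ for every $i \in V$.

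The key steps are as follows. First, verify the matroid constraints: for any $U \subseteq V$, since $\S$ is an independent set of $\M$ with probability one, we have $|\S \cap U| \leq r_\M(U)$ always, hence $y(U) = \E[|\S \cap U|] \leq r_\M(U)$, giving~(\ref{ineq:matroid}); in particular this is exactly the statement that the vector of marginals of a distribution over independent sets lies in the (independence, hence also base-relaxation) matroid polytope. Second, construct the assignment variables $x$ by the same greedy ``water-filling'' used before: for each $j$, initialize $z_j := 0$, then for each $i \in B_j$ set $x_{ij} := \min\{y_i, 1 - z_j\}$ and update $z_j \mathrel{+}= x_{ij}$; this makes~(\ref{ineq:connect}) and~(\ref{ineq:open}) automatic. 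Third, for the fairness constraint~(\ref{ineq:fair}), use the union bound $p_j \leq \Pr[j \in \C] \leq \sum_{i \in B_j} y_i$ together with the fact that the greedy construction achieves $\sum_{i \in B_j} x_{ij} = \min\{1, \sum_{i \in B_j} y_i\} \geq \min\{1, \Pr[j \in \C]\} = \Pr[j \in \C] \geq p_j$. Finally, for the coverage constraint~(\ref{ineq:cov}), note $\E[|\C|] = \sum_j \Pr[j \in \C] \leq \sum_j \sum_{i \in B_j} x_{ij}$, and since $|\C| \geq t$ with probability one we get $\sum_j \sum_{i \in B_j} x_{ij} \geq \E[|\C|] \geq t$. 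Conclude $(x,y) \in \P_\textsf{FRMatCenter}$.

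There is essentially no hard part here: the argument is a routine transfer of the \textsf{FRkCenter} relaxation proof, with the single observation that marginals of a random independent set satisfy the matroid rank inequalities. The only point requiring a line of care is checking that the greedy construction of $x$ yields $\sum_{i \in B_j} x_{ij} = \min\{1, \sum_{i \in B_j} y_i\}$ exactly (not just $\geq$), which follows because water-filling stops only when either the budget $1$ is saturated or all of $B_j$ has been exhausted. Since the paper already omitted the analogous proof for $\P_\textsf{FRKnapCenter}$ as ``very similar'' to Proposition~\ref{prop:frkcenter}, I would likewise keep this proof terse, explicitly spelling out only the matroid-polytope containment and referring to Proposition~\ref{prop:frkcenter} for the construction of $x$ and the verification of~(\ref{ineq:cov}), (\ref{ineq:connect}), (\ref{ineq:open}), and~(\ref{ineq:fair}).
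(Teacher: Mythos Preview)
Your proposal is correct and matches the paper's approach exactly: the paper itself omits the proof, stating only that it is similar to Proposition~\ref{prop:frkcenter}, and your write-up carries out precisely that transfer, with the one new observation being that marginals of a random independent set satisfy the matroid rank inequalities~(\ref{ineq:matroid}).
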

\begin{proof}
The proof is similar to Proposition \ref{prop:frkcenter}, and is omitted.
\end{proof}

We now explain how to implement the rounding steps in this algorithm. These are all based on the subroutine \textsc{RoundSinglePoint}, which moves the vector $y$ as far as possible along the direction of vector $\vec r$ until it a new face of the polytope begins tight. We define it formally as Algorithm~\ref{roundsingle}; note that it can be implemented efficiently by solving an appropriate LP.

\begin{algorithm}[H]
\caption{$\textsc{RoundSinglePoint}\left(y, \vec{r} \right)$}
\begin{algorithmic}[1]
\STATE Select $\delta^* \geq 0$ to be maximal such that the vector $z + \delta^* r$ is in $\P_{\M}$.
\RETURN $(y + \delta^* \vec{r}, \delta^*)$.
\end{algorithmic} 
\label{roundsingle}
\end{algorithm}

\begin{algorithm}[H]
\caption{$\textsc{RoundCycle}\left(y', \vec{v} \right)$}
\begin{algorithmic}[1]
\STATE Initialize $\vec{r} = \vec{0}$, then set $r_{v_j} = (-1)^j$ for $j = 1,2, \ldots, |\vec{v}|$
\STATE $(y_1, \delta_1) \gets $\textsc{RoundSinglePoint}$(y', \vec{r})$
\RETURN $y_1$
\end{algorithmic} 
\end{algorithm}

\begin{figure}[H]
\centering
\includegraphics[width=0.5\textwidth]{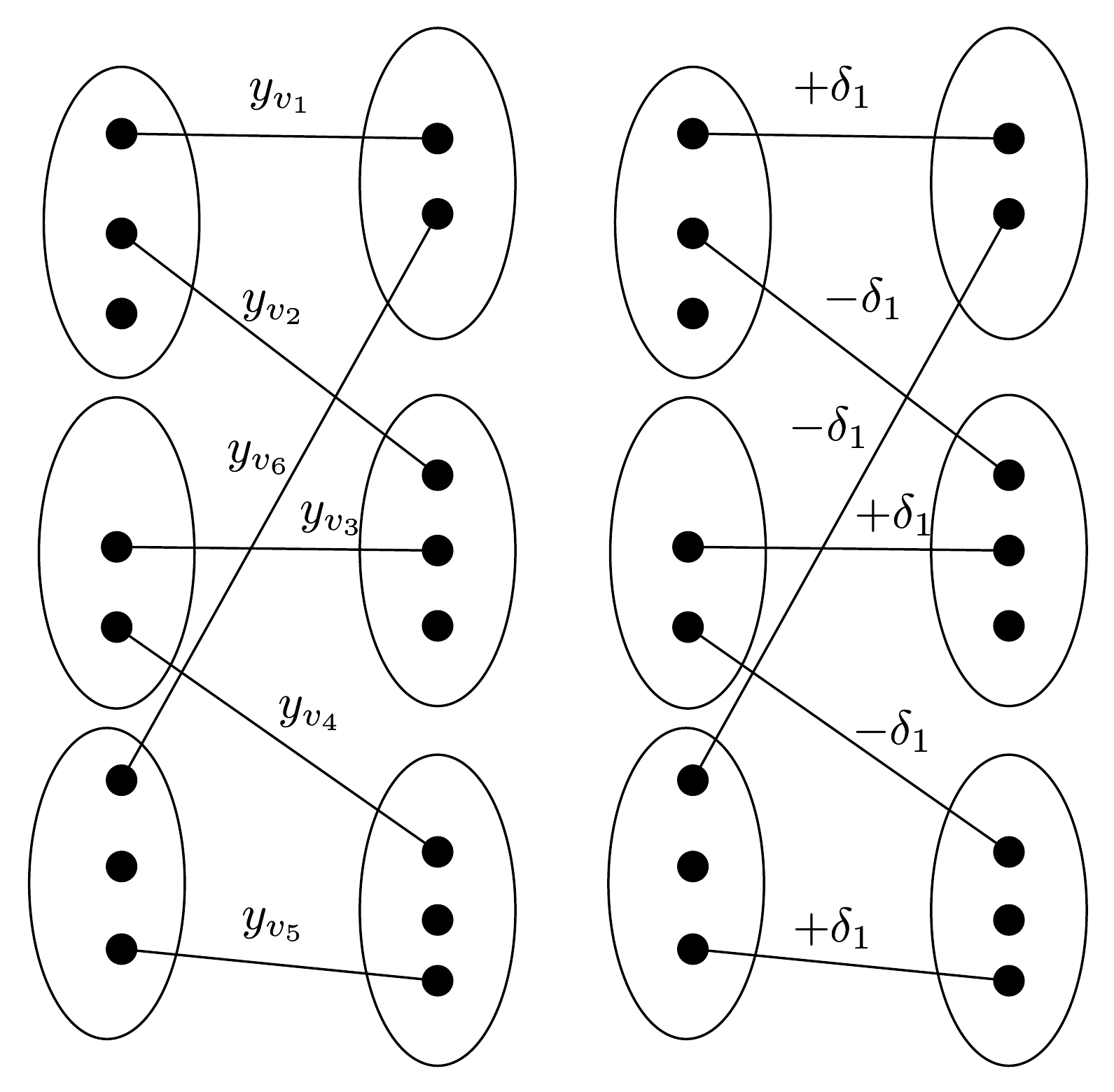}
\caption{The left part shows a cycle. The right part shows how the variables on the cycle are being changed by \textsc{RoundCycle}.}
\label{fig:cycle}
\end{figure}

\begin{algorithm}[H]
\caption{$\textsc{RoundSinglePath}\left(y', \vec{v} \right)$}
\begin{algorithmic}[1]
\STATE Initialize $\vec{r} = \vec{0}$, then set $r_{v_j} = (-1)^{j+1}$ for $j = 1,2, \ldots, |\vec{v}|$
\STATE $(y_1, \delta_1) \gets$\textsc{RoundSinglePoint}$(y', \vec{r})$
\RETURN $y_1$
\end{algorithmic} 
\label{algo:round_single_path}
\end{algorithm}

\begin{figure}[H]
\centering
\includegraphics[width=0.5\textwidth]{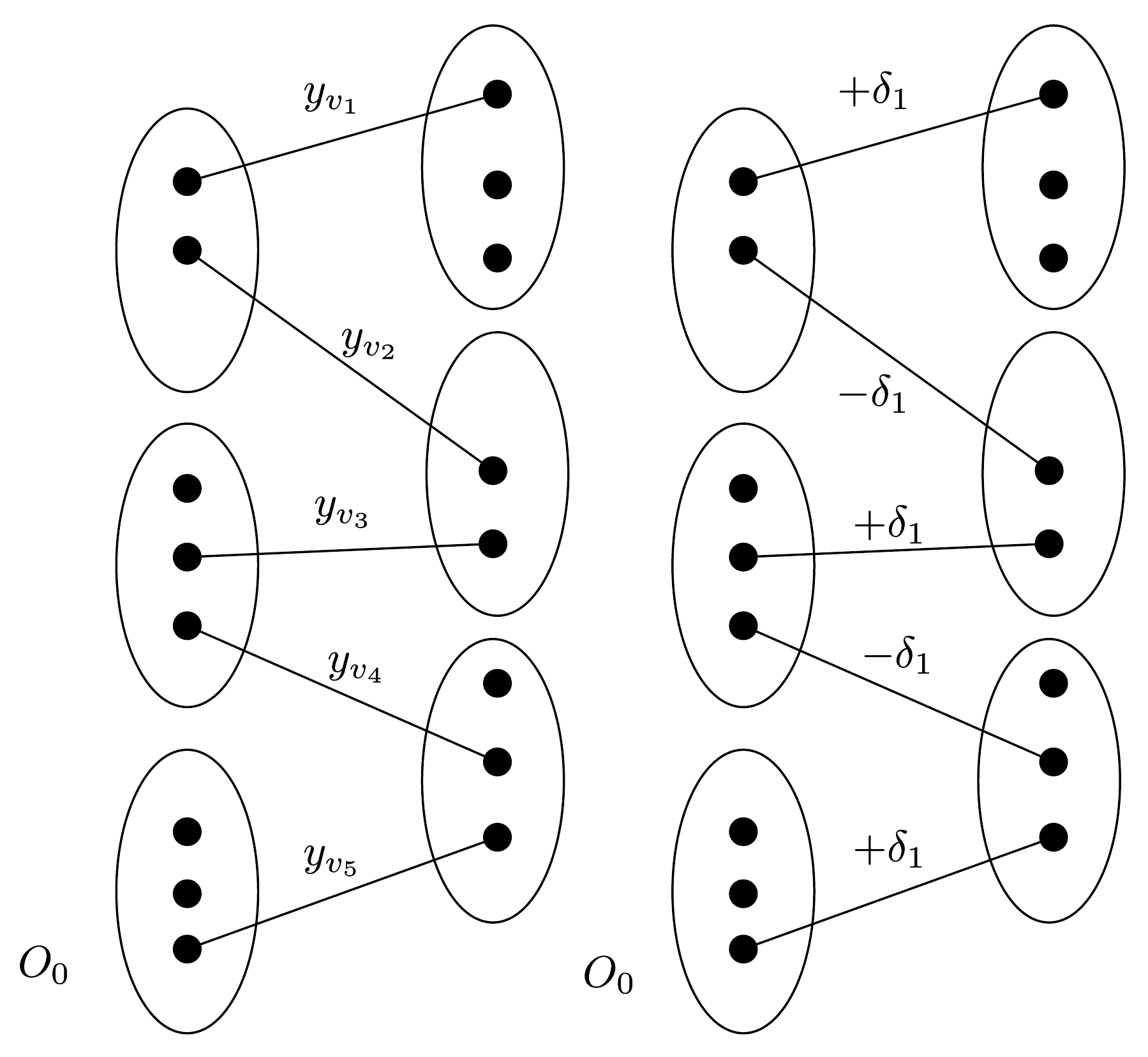}
        \caption{The left part shows a single path. The right part shows how the variables on the path are being changed by \textsc{RoundSinglePath}.}
        \label{fig:single_path}
\end{figure}

\begin{algorithm}[H]
\caption{$\textsc{RoundTwoPaths}\left(y', \vec{v}, \vec{v}', \vec{c} \right)$}
\begin{algorithmic}[1]
\STATE WLOG, suppose $j_1, j_2 \in \R$ are endpoints of $v_1, v_{2\ell}$ of the path $\vec{v}$ respectively and $c_{j_1} \geq c_{j_2}$
\STATE WLOG, suppose $j_1', j_2' \in \R$ are endpoints of $v_1', v_{2\ell'}'$ of the path $\vec{v}'$ respectively and $c_{j_1'} \geq c_{j_2'}$
\STATE $\Delta_1 \gets c_{j_1} - c_{j_2};~~~ \Delta_2  \gets c_{j_1'} - c_{j_2'};~~~ \vec{r} \gets \vec{0}$ 
\STATE $V_1^+ \gets \{v_1, v_3, \ldots, v_{2\ell-1} \}; V_1^- \gets \{v_2, v_4, \ldots, v_{2\ell}\}$
\STATE $V_2^+ \gets \{v_2', v_4', \ldots, v_{2\ell'}'\}; V_2^- \gets \{v_1', v_3', \ldots, v_{2\ell'-1}' \}$
\STATE \textbf{for each} $v \in V_1^+$: $r_v \gets r_v + 1$; \textbf{for each} $v \in V_1^-$: $r_v \gets r_v - 1$
\STATE \textbf{for each} $v \in V_2^+$: $r_v \gets r_v + \Delta_1/\Delta_2$; \textbf{for each} $v \in V_2^-$: $r_v \gets r_v - \Delta_1/\Delta_2$
\STATE $(y_1, \delta_1) \gets $\textsc{RoundSinglePoint}$(y', \vec{r})$
\STATE $(y_2, \delta_2) \gets $\textsc{RoundSinglePoint}$(y', -\vec{r})$
\STATE With probability $\delta_1 / (\delta_1+\delta_2)$: \textbf{return} $y_2$
\STATE With remaining probability $\delta_2 / (\delta_1+\delta_2)$: \textbf{return} $y_1$
\end{algorithmic} 
\label{algo:round_two_paths}
\end{algorithm}

\begin{figure}[H]
\centering
\includegraphics[width=1\textwidth]{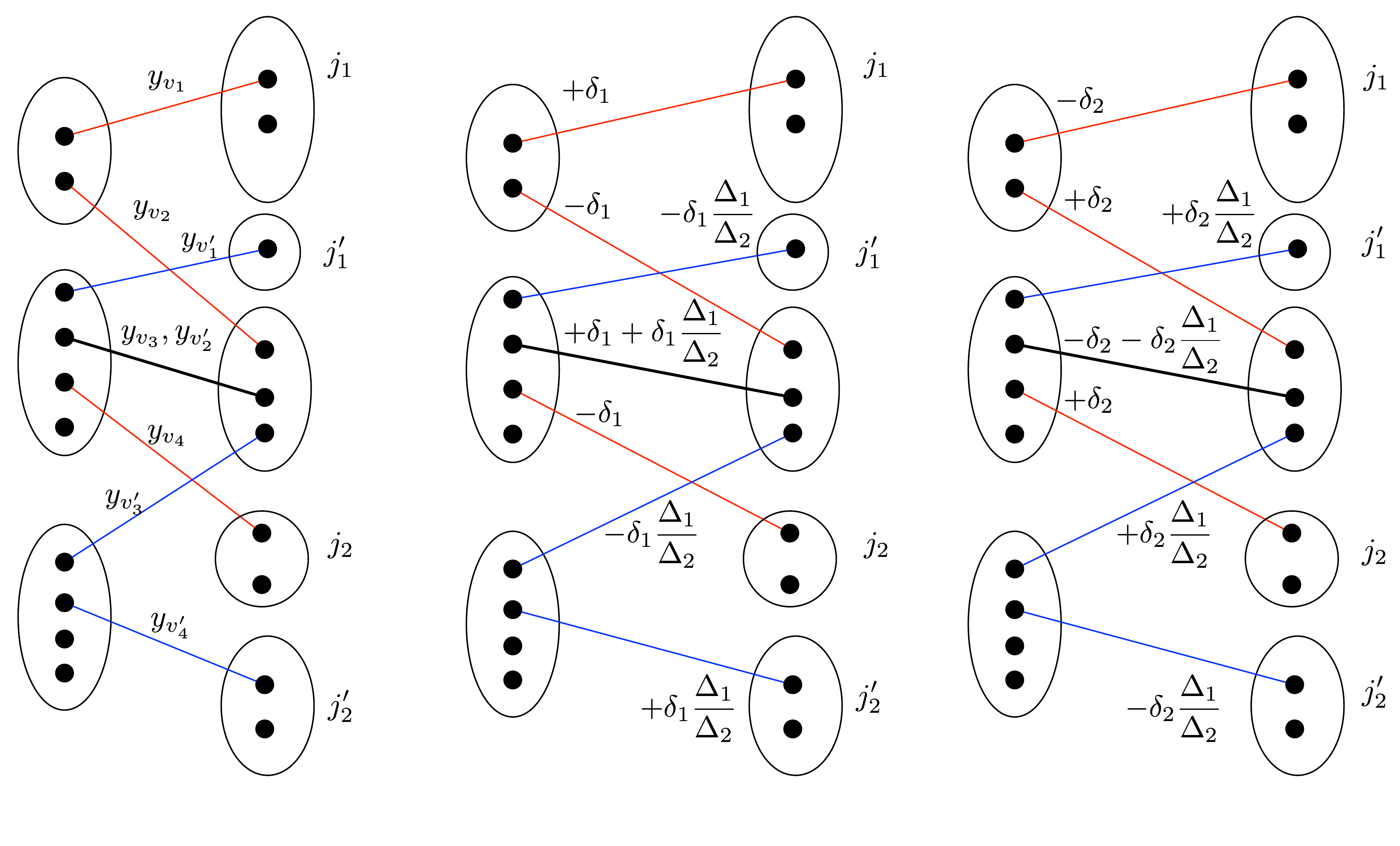}
        \caption{The left part shows an example of two distinct maximal paths chosen in Case 3. The black edge is common in both paths. The middle and right parts are two possibilities of rounding $y$. With probability $\delta_1/(\delta_1+\delta_2)$, the strategy in the right part is adopted. Otherwise, the strategy in the middle part is chosen.}
        \label{fig:two_paths}
\end{figure}

\begin{algorithm}[H]
\caption{$\textsc{RoundFinalPath}\left(y, \vec{v} \right)$}
\begin{algorithmic}[1]
\STATE $\P_1 \gets  \left\{z \in [0,1]^V: z(U) \leq r_\M(U) ~\forall U \subseteq V \wedge z(O_i) = b_{O_i}  ~\forall i \in \L\setminus\{0\} \wedge z_i=0 ~\forall i:y_i=0 \right\}$
\STATE $\P_2 \gets \{z \in [0,1]^V:  z(F_j) = y(F_j) ~\forall j \in V'\setminus J \wedge z(F_j) \leq 1 ~\forall j \in J \}$, where $J \subseteq \R$ is the set of vertices in $\R$ on the path $\vec{v}$.

\STATE Pick an arbitrary extreme point $\hat{y}$ of $\P' = \P_1 \cap \P_2$
\STATE \textbf{for each} $j \in \R$ and $j$ is on the path $\vec{v}$: if $\hat{y}(F_j) = 0$, pick an arbitrary $u \in F_j$ and set $\hat{y}_u \gets 1$.
\RETURN $\hat{y}$
\end{algorithmic} 
\label{algo:round_final_path}
\end{algorithm}

\subsubsection{Analysis of \textsc{PseudoFRMCenterRound}}

\begin{proposition} In all but the last iteration, the while-loop (lines $4$ to $8$) of \textsc{PseudoFRMCenterRound} preserves the following invariant: if $y'$ lies in the face $D$ of $\P_\M$ (w.r.t all tight matroid rank constraints) at the beginning of the iteration, then $y' \in D$ at the end of this iteration.
\label{prop:invariant}
\end{proposition}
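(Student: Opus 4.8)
The plan is to reduce the invariant to two elementary facts about the direction vector $\vec r$ used inside \textsc{RoundCycle}, \textsc{RoundSinglePath}, and \textsc{RoundTwoPaths}, and then quote Corollary~\ref{cor:decompose_face}. By that corollary, the face $D$ of $\P_\M$ cut out by the constraints tight at $y'$ is exactly
$$D = \{\, z : z(O_i) = b_{O_i}\ \forall i \in \L\setminus\{0\};\ \ z_v = 0\ \forall v \in J;\ \ z \in \P_\M \,\},$$
where $J = \{v : y'_v = 0\}$ and the $O_i, b_{O_i}$ are precisely the disjoint sets and integer constants produced in line~5 of this iteration. Each of the three subroutines returns a point of the form $y' + \delta\vec r$ with $y' + \delta\vec r \in \P_\M$: this is the guarantee of \textsc{RoundSinglePoint}, and in \textsc{RoundTwoPaths} it holds for both candidate outputs $y' + \delta_1\vec r$ and $y' - \delta_2\vec r$. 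Hence, writing $\tilde y$ for the returned point, it suffices to prove that $(i)$ $r_v = 0$ whenever $y'_v \notin (0,1)$, and $(ii)$ $\sum_{v \in O_i} r_v = 0$ for every $i \in \L\setminus\{0\}$; for then $\tilde y_v = y'_v + \delta r_v = 0$ for $v \in J$ and $\tilde y(O_i) = y'(O_i) + \delta\sum_{v\in O_i} r_v = b_{O_i}$, so $\tilde y \in D$.

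Fact $(i)$ is immediate: in all three subroutines $\vec r$ is supported on the set of edge labels of $H$, and an edge labelled $v$ is created only when $y'_v \in (0,1)$. I would also record the bookkeeping fact that $\mathrm{supp}(y') \subseteq \bigcup_{j \in V'} F_j$ is maintained throughout the while-loop — it holds after lines~2--3 and is preserved since $\mathrm{supp}(\vec r) \subseteq \mathrm{supp}(y')$ — so every fractional coordinate of $y'$ lies in exactly one set $F_j$ ($j \in V'$) and in exactly one set $O_i$ ($i \in \L$), i.e. the fractional coordinates of $y'$ are in bijection with the edges of $H$.

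Fact $(ii)$ is the heart of the argument, and the key structural observation is that \emph{every left vertex $i \in \L\setminus\{0\}$ has degree $0$ or $\ge 2$ in $H$}: if $O_i$ had a unique fractional coordinate $v^*$, then $y'(O_i) = b_{O_i}\in\mathbb{Z}$ together with $y'_v \in \{0,1\}$ for the remaining $v \in O_i$ would force $y'_{v^*}\in\mathbb{Z}$, contradicting $y'_{v^*}\in(0,1)$. With this in hand I would argue case by case. In Case~1 the object is a simple cycle; each left vertex it passes through is incident to exactly two of its edges, and the signs $r_{v_j} = (-1)^j$ make those contributions cancel, so $\sum_{v\in O_i} r_v = 0$ (this holds even when the ``cycle'' is a pair of parallel edges). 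In Case~2, $H$ is acyclic, so the endpoints of a maximal path are leaves; since no left vertex $i\ge1$ can be a leaf, the unique $\L$-endpoint must be the vertex $0$, whose set $O_0$ carries no equality constraint in $D$, while every other left vertex on the path is internal, incident to exactly two path-edges of opposite sign, so again the contributions cancel. In Case~3 both maximal paths have leaf endpoints lying in $\R$, hence no $O_i$ with $i\ge1$ is a path endpoint; such an $O_i$ can appear only as an internal vertex of path~$1$ and/or of path~$2$, and the alternating patterns $V_1^{+},V_1^{-}$ and $V_2^{+},V_2^{-}$ (scaled by $1$ and $\Delta_1/\Delta_2$) make its net contribution vanish on each path separately, hence overall. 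The last iteration uses \textsc{RoundFinalPath} and is explicitly excluded from the claim.

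The step I expect to be the main obstacle is Case~3: because the two chosen maximal paths need not be edge-disjoint — they may share the black edge of Figure~\ref{fig:two_paths} — one must be careful that ``an internal vertex of a simple path contributes zero'' is applied correctly to a left vertex sitting on both paths, and that a shared left vertex never plays the role of an endpoint. Note that for this proposition the scaling factor $\Delta_1/\Delta_2$ is irrelevant: it is chosen to preserve the covered-client count $\vec c\cdot y'$, not to maintain membership in $D$, so here it may be treated as an opaque positive constant. The only other mildly delicate point is the support-containment bookkeeping underlying the exact edge/coordinate correspondence in $H$; everything else is routine.
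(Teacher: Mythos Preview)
Your proposal is correct and follows essentially the same line as the paper's own proof: reduce membership in $D$ via Corollary~\ref{cor:decompose_face} to (a) not touching coordinates outside the fractional support, (b) staying in $\P_\M$ via \textsc{RoundSinglePoint}, and (c) showing $\vec r(O_i)=0$ for each $i\ge 1$, the last point handled case-by-case using the observation that no $O_i$ with $i\ge 1$ can have degree one in $H$. Your write-up is in fact a bit more careful than the paper's in two places --- you make the edge/fractional-coordinate bijection explicit, and you flag that in Case~3 the two paths may share edges and argue each path's contribution cancels separately --- but neither of these changes the underlying strategy.
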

\begin{proof}
Observe that $y' \in \P_\M$ at the beginning of the first iteration due to the definition of $y'$. Fix any iteration. Let $y''$ be the updated $y'$ at the end of the iteration. By Corollary \ref{cor:decompose_face}, it suffices to show that
$$ y'' \in \left\{x \in \mathbb{R}^{n}: x(S) = b_S~~\forall S \in \O; ~~~ x_i = 0 ~~\forall i \in J; ~~~ x \in \P_\M  \right\}, $$
where $J \subseteq V$ is the set of all vertices $i$ with $y'_i = 0$. Note that $y''$ is the output of one of the three subroutines \textsc{RoundCycle}, \textsc{RoundSinglePath}, and \textsc{RoundTwoPaths}. Since we only round floating variables strictly greater than zero, we have  $y''_i = 0$ for all $i \in J$. Also, the procedure \textsc{RoundSinglePoint} guarantees that $y'' \in \P_\M$.
\begin{itemize}
	\item When calling the procedure \textsc{RoundCycle}, observe that each vertex $j \in \L$ on the cycle is adjacent to exactly two edges. By construction, we always increase the variable on one edge and decrease the variable on the other edge at the same rate. See Figure \ref{fig:cycle}. Therefore, $y''(O_j) = y'(O_j) = b_{O_j}$ for all $j \in \O$.
	\item When calling the procedure \textsc{RoundSinglePath}, recall that our path is maximal and has one endpoint in $\L$ and the other in $\R$. We claim that the left endpoint of this path should be corresponding to the set $O_0$. Otherwise, suppose it is some set $O_j$ with $j>0$. We have the tight constraint $y'(O_j) = b_{O_j} \in \mathbb{Z}^+$. Then the degree of the vertex $j$ must be at least $2$ as there must be at least two fractional variables in this set. This contradicts to the fact that our path is maximal. See Figure \ref{fig:single_path}. By the same argument as before, we have $y''(O_j) = y'(O_j) = b_{O_j}$ for all $j \in \O$.
	\item In the procedure \textsc{RoundTwoPaths}, we round the variables on two paths which have both endpoints in $\R$. Thus, any vertex $j$ should be adjacent to either $2$ or $4$ edges. Again, by construction, the net change in $y'(O_j)$ is equal to zero. See Figure \ref{fig:two_paths}.
\end{itemize}
Finally, the claim follows by induction.
\end{proof}

\begin{proposition} \textsc{PseudoFRMCenterRound} terminates in polynomial time.
\end{proposition}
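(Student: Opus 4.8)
The plan is to argue that each iteration of the while-loop strictly decreases the number of fractional coordinates of $y'$ (or, failing that, the number of edges in the bipartite graph $H$), so that after polynomially many iterations the loop exits via the final case. First I would fix an iteration and look at the multi-bipartite graph $H=(\L,\R,E_H)$: its edges are exactly the coordinates $v$ with $y'_v\in(0,1)$, so $|E_H|$ equals the number of fractional coordinates. I would then go through the four cases of line 8 and bound the work done in each. Each of \textsc{RoundCycle}, \textsc{RoundSinglePath}, and \textsc{RoundTwoPaths} invokes \textsc{RoundSinglePoint}, which moves $y'$ along a direction $\vec r$ supported on the chosen cycle/path until a new constraint of $\P_\M$ becomes tight; since $\vec r$ has nonzero entries precisely on those edge-labelled coordinates, and since $y'$ starts strictly inside $[0,1]$ on those coordinates, the maximal step $\delta^*$ is finite and \emph{at least one} such coordinate hits $0$ or $1$. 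Hence at least one edge of $H$ disappears each iteration, and $|E_H|$ strictly decreases.

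Next I would handle the branching/termination structure. As long as $H$ has any fractional coordinate, it has at least one edge; if it has a cycle we are in Case 1, otherwise $H$ is a forest. A forest whose edges are all present must contain a maximal path, and its endpoints are degree-$\le 1$ vertices of $H$, each lying in $\L$ or $\R$; the three remaining cases (one endpoint in $\L$ and one in $\R$; at least two maximal paths with both endpoints in $\R$; exactly one path with both endpoints in $\R$) are exhaustive for a nonempty forest — I would spell out that a forest with $\ge 2$ edges either has two distinct maximal paths or is itself a single path, and a single path either has an $\L$-endpoint (Case 2) or both endpoints in $\R$ (the final case). Combined with the previous paragraph, this shows every iteration is one of the four cases, each either strictly reduces $|E_H|$ (Cases 1–3) or exits the loop (final case). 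Since $|E_H|\le n$ initially, the loop runs at most $n$ times. Finally I would note the per-iteration cost is polynomial: Corollary~\ref{cor:decompose_face} together with Proposition~\ref{prop:decompose_face} computes the disjoint sets $O_i$ and the $b_{O_i}$ via polynomially many submodular minimizations; building $H$, finding a cycle or maximal paths is linear-time graph search; \textsc{RoundSinglePoint} is a single LP over $\P_\M$ (or can be done combinatorially); and \textsc{RoundFinalPath} picks one extreme point of a matroid-intersection polytope, which is polynomial-time. Multiplying the $O(n)$ iteration bound by the polynomial per-iteration cost gives the claim.

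The main obstacle I anticipate is not the running-time arithmetic but verifying that the four cases in line 8 genuinely cover every configuration of $H$ that can arise, and in particular that whenever $y'$ still has a fractional coordinate the algorithm does not fall through all four cases — i.e. that a nonempty forest with no $\L$–$\R$ maximal path and fewer than two maximal $\R$–$\R$ paths really is a single $\R$–$\R$ path. One has to use Proposition~\ref{prop:invariant}: the tight constraints $y'(O_i)=b_{O_i}$ with $b_{O_i}\in\mathbb Z$ force every left vertex $i\in\L\setminus\{0\}$ incident to a fractional coordinate to have $H$-degree $\ge 2$ (a lone fractional coordinate in $O_i$ could not sum to an integer against an otherwise-integral assignment), so such a vertex cannot be a path endpoint; this is exactly the argument already used inside the proof of Proposition~\ref{prop:invariant} for \textsc{RoundSinglePath}, and I would reuse it to rule out the degenerate forest shapes. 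The other delicate point is confirming $\delta^*>0$ strictly in \textsc{RoundSinglePoint} — which holds because at the start of the iteration all edge-labelled coordinates lie in the \emph{open} interval $(0,1)$ and, by Proposition~\ref{prop:invariant}, $y'$ lies in the relative interior of the current face with respect to the non-tight constraints, so a small move along $\vec r$ stays feasible.
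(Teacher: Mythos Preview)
Your argument has a genuine gap at its central step. You claim that each call to \textsc{RoundSinglePoint} drives at least one fractional coordinate to $0$ or $1$, so that $|E_H|$ strictly decreases every iteration. But \textsc{RoundSinglePoint} pushes $y'$ along $\vec r$ only until the boundary of $\P_\M$ is reached, and $\P_\M$ has many facets besides the box constraints: a rank constraint $y'(S)\le r_\M(S)$ with $|S|\ge 2$ may become tight while every individual coordinate of $y'$ remains strictly fractional. In that case the iteration ends with exactly the same number of fractional coordinates as before, and your potential $|E_H|$ does not move. Nothing in your write-up rules this out, and in fact it is precisely the scenario the paper's proof is built to handle.

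The paper's proof uses a different potential: it observes that in each iteration either a new coordinate is rounded \emph{or} a new rank constraint becomes tight whose characteristic vector is linearly independent of the previously tight ones (this is exactly what ``maximal $\delta^*$'' buys you). Since Proposition~\ref{prop:invariant} guarantees that tight constraints are never lost, the quantity ``(number of rounded coordinates) $+$ (rank of the tight-constraint system)'' strictly increases each iteration, and both summands are bounded by $|V|$. Your approach could be repaired by adopting this two-part potential; the rest of your outline---the per-iteration polynomial cost, the use of Proposition~\ref{prop:invariant} to argue that vertices in $\L\setminus\{0\}$ have degree $\ge 2$, and the case exhaustiveness---is fine.
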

\begin{proof}
Note that, in each iteration, each floating variable $y'_v \in (0,1)$ corresponds to exactly one edge in the bipartite graph. This is because, by construction, the sets $O_0, \ldots, O_t$ form a partition of $V$ and the sets in $\F$ and $F_0$ also form a partition of $V$. Thus, as long as there are fractional values in $y'$, our graph will have some cycle or path.

Now we will show that the while-loop (lines $4$ to $8$) terminates after $O(|V|)$ iterations. For any set $S$, let $\chi(S)$ denote the characteristic vector of $S$. That is, $\chi(v) = 1$ for $v \in S$ and $\chi(v) = 0$ otherwise. Let us fix any iteration and let $\T = \{\chi(S): S  \subseteq V ~\wedge~ y'(S) = r_\M(S)\}$ be the set of all tight constraints. In this iteration, we will move $y'$ along some direction $\vec{r}$ as far as possible (by procedure \textsc{RoundSinglePoint}). It means that the new point $y'' = y' + \delta^* \vec{r}$ will either have at least one more rounded variable or hit a new tight constraint $y''(S_0) = r_\M(S_0)$ (while $y'(S_0) < r_\M(S_0)$) for some $S_0 \subseteq V$. Indeed $\chi(S_0)$ is linearly independent of all vectors in $\T$. 

Proposition \ref{prop:invariant} says that all the tight constraints are preserved in the rounding process. Therefore, in the next iteration, we either have at least one more rounded variable or the rank of $\T$ is increased by at least $1$. This implies the algorithm terminates after at most $|V|$ iterations.
\end{proof}

\begin{proposition} In all  iterations, the while-loop (lines $4$ to $8$) of \textsc{PseudoFRMCenterRound} satisfies the invariant that $y'(F_j) \leq 1$ for all $F_j \in \F$.
\label{prop:invariant2}
\end{proposition}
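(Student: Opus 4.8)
The plan is to prove this by induction on the iterations of the while-loop. For the base case, note that after lines~2--3 we have $y'_i=x_{ij}$ for $i\in F_j$ with $j\in V'$, and $y'_i=0$ everywhere else; since $F_j\subseteq B_j$ and $x_{ij}=0$ for $i\in B_j\setminus F_j$, this yields $y'(F_j)=\sum_{i\in B_j}x_{ij}\le 1$ by constraint~(\ref{ineq:connect}). For the inductive step I would show that each of the four rounding subroutines that can be invoked in the body of the loop (\textsc{RoundCycle}, \textsc{RoundSinglePath}, \textsc{RoundTwoPaths}, \textsc{RoundFinalPath}) keeps $y'(F_j)\le1$ for every $F_j\in\F$, assuming it at the start of the iteration.

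The unifying observation is that in each step the displacement direction $\vec r$ is supported only on coordinates that are edge labels of $H$, and the sets $\{F_j:j\in V'\}\cup\{F_0\}$ partition $V$; hence a coordinate lying in $F_j$ moves only if it labels an $H$-edge incident to the vertex $j\in\R$, and the change in $y'(F_j)$ equals $\delta^\ast$ times the sum of the signs that $\vec r$ assigns to the rounded path/cycle edges at $j$. I would then evaluate this sum. For \textsc{RoundCycle}, every $\R$-vertex on the (simple, even) cycle meets exactly two consecutive cycle edges, which receive opposite signs $(-1)^k$ and $(-1)^{k+1}$, so $y'(F_j)$ is unchanged (and it is trivially unchanged for clusters whose center is off the cycle). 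The same opposite-sign pairing occurs at every \emph{interior} $\R$-vertex of a rounded path in \textsc{RoundSinglePath} and \textsc{RoundTwoPaths}, so those cluster sums are unchanged as well.

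This leaves the $\R$-endpoints of the rounded paths. Here I would use that \textsc{RoundSinglePath} and \textsc{RoundTwoPaths} are reached only after Case~1 has failed, so $H$ is acyclic at that moment, and in a forest every maximal path joins two leaves. Thus any $\R$-vertex $j$ touched by a rounded path is either a non-leaf --- hence an interior vertex of \emph{every} path through it, so by the previous paragraph $y'(F_j)$ does not change --- or a leaf, in which case $F_j$ has exactly one fractional coordinate $u$, namely its unique incident edge. For a leaf, the inductive hypothesis $y'(F_j)\le1$ combined with $y'_u>0$ forces the remaining coordinates of $F_j$ to be nonnegative integers summing to less than one, hence all zero, so $y'(F_j)=y'_u$; and since \textsc{RoundSinglePoint} keeps the vector in $\P_\M\subseteq[0,1]^V$ (via the singleton constraint $z_u\le r_\M(\{u\})\le1$), the updated value $y'_u$ --- and therefore $y'(F_j)$ --- lies in $[0,1]$, no matter how many of the rounded paths use $u$ or with what sign, and for either of the two outcomes returned by \textsc{RoundTwoPaths}. (If an $\R$-endpoint is the cluster $F_0\notin\F$, there is nothing to check.) Finally, \textsc{RoundFinalPath} returns an extreme point $\hat y$ of $\P_1\cap\P_2$, where $\P_2$ imposes $z(F_j)\le1$ for $j\in J$ and $z(F_j)=y'(F_j)\le1$ for $j\notin J$; its closing adjustment only ever raises a cluster sum from $0$ to $1$, so the bound persists, and $Y=\hat y$ satisfies $Y(F_j)\le1$.

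The step I expect to be the crux is the $\R$-endpoint case of \textsc{RoundTwoPaths}: a priori the two maximal paths can share an $\R$-endpoint and even feed it through two \emph{distinct} incident edges with positive coefficients ($1$ from one path, $\Delta_1/\Delta_2$ from the other), so a naive bound on the increase would overshoot $1$. What must be argued carefully is precisely that acyclicity of $H$ at that point in the execution forbids this --- such an endpoint is a leaf, so only a single coordinate of its cluster is ever perturbed --- after which the box constraints built into $\P_\M$ finish the job; everything else is bookkeeping with the alternating sign pattern of $\vec r$.
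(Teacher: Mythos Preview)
Your proof is correct and follows the same line as the paper's: base case via constraint~(\ref{ineq:connect}), interior $\R$-vertices by sign cancellation (the paper defers this to the proof of Proposition~\ref{prop:invariant}), and $\R$-endpoints via the observation that maximality of the path forces such an endpoint to be a leaf, so the single fractional coordinate in that cluster stays in $[0,1]$. The paper's version is terser --- it handles the endpoint case in one parenthetical sentence and does not separately spell out the shared-endpoint subtlety in \textsc{RoundTwoPaths} --- but the argument is the same.
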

\begin{proof}
By constraints \ref{ineq:connect} and \ref{ineq:open}, this property is true at the beginning of the first iteration. By a very similar argument as in the proof of Proposition \ref{prop:invariant}, this is also true during all but the last iteration. (Note that if $j$ is an endpoint of a path, then $j$ must be adjacent to exactly one fractional value $y_v'$, which could be rounded to one, while other variables $\{y'_{v'}: v' \in F_j, v' \neq v\}$ are already rounded to zero as our path is maximal.) Finally, it is not hard to check that procedure \textsc{RoundFinalPath} also does not violate this invariant.
\end{proof}

\begin{proposition} \textsc{PseudoFRMCenterRound} returns a solution $\S$ which is some independent set of $\M$ plus (at most) one extra vertex in $V$.
\label{prop:feasibility}
\end{proposition}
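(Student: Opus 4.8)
The goal is to show that the output $\S$ of \textsc{PseudoFRMCenterRound} is an independent set of $\M$ together with at most one extra vertex. My plan is to trace through exactly which set $Y$ is returned and argue about its structure.

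First I would recall that the while-loop exits only through the "remaining case," where \textsc{RoundFinalPath} is called and produces the final vector $Y = \hat{y}$. So the whole claim reduces to understanding $\hat{y}$. By Proposition \ref{prop:invariant}, just before the last iteration $y'$ lies in the face $D$ of $\P_\M$ determined by all tight matroid rank constraints, and by Corollary \ref{cor:decompose_face} this face is described exactly by the constraints $z(O_i) = b_{O_i}$ for $i \in \L \setminus \{0\}$, $z_i = 0$ for $i$ with $y'_i = 0$, together with membership in $\P_\M$ — which is precisely the polytope $\P_1$ in \textsc{RoundFinalPath}. Meanwhile $\P_2$ is a partition-matroid-type polytope: it fixes $z(F_j) = y'(F_j)$ (an integer, since for $j \notin J$ all fractional edges have been rounded away and $y'(F_j) \le 1$ by Proposition \ref{prop:invariant2}, so $y'(F_j) \in \{0,1\}$) for clusters off the path, and imposes $z(F_j) \le 1$ for the clusters $J$ on the path. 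Hence $\P' = \P_1 \cap \P_2$ is an intersection of two matroid polytopes and so has integral extreme points; the chosen $\hat{y}$ before line 4 is therefore a $0/1$ vector satisfying all the matroid rank constraints $\hat{y}(U) \le r_\M(U)$, i.e. its support is an independent set of $\M$.

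Next I would handle line 4 of \textsc{RoundFinalPath}, which is the only place an "extra" vertex can be introduced: for each $j \in \R$ on the path $\vec{v}$ with $\hat{y}(F_j) = 0$, one vertex $u \in F_j$ is set to $1$. The key point is that the edges of $H$ in this last case form a \emph{single} path, and a path has exactly two endpoints in $\R$ (the remaining case stipulates both endpoints lie in $\R$). Every \emph{internal} $\R$-vertex on the path has degree $2$, meaning $F_j$ contains at least two fractional variables of $y'$, so $\hat{y}(F_j) \le 1$ forces $\hat{y}(F_j) = 1$ already (it cannot be $0$) — thus line 4 does nothing for internal vertices. So the only candidates for a fix-up are the (at most) two $\R$-endpoints of the path. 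I would then argue that in fact at most one of these two endpoints can need fixing: the total "budget" $\sum_j \hat{y}(F_j)$ over path clusters equals the sum $\sum_j y'(F_j)$ inherited from before the final iteration minus nothing — more carefully, since $y' \in \P_2$ as well, $\hat y$ and $y'$ agree on $z(F_j)$ for off-path clusters and the path is a connected alternating structure, so rounding along it can create at most one cluster with $\hat{y}(F_j) = 0$. (Alternatively: the two endpoints are joined by a single alternating path through $\L$-vertices with integer constraints, and an Eulerian/parity argument on the path shows the number of empty path-clusters after taking an extreme point is at most one.) Adding one vertex $u$ with $\hat y_u \gets 1$ may now violate a matroid rank constraint, but it augments the independent set by a single element, which is exactly the "at most one extra vertex" allowance in the statement.

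The main obstacle I anticipate is the combinatorial bookkeeping in the last paragraph: cleanly proving that at most one path-cluster $F_j$ ends up with $\hat{y}(F_j) = 0$. The clean way is to note that before \textsc{RoundFinalPath} the path's fractional mass, distributed over its $\R$-vertices, together with the fact that internal $\R$-vertices are "protected" (degree $2$, hence forced to $1$), means only the two endpoints are in play; and because $\hat y \in \P_1$ preserves all the integer $\L$-constraints $z(O_i) = b_{O_i}$ while $\hat y \in \P_2$ forces each on-path $F_j$ to have $\hat y(F_j) \in \{0,1\}$, a counting of mass entering versus leaving the path shows the deficit is at most $1$. I would present this as: the number of on-path clusters with $\hat{y}(F_j) = 1$ is at least (number of on-path clusters) $- 1$, hence line 4 fires at most once, giving $\S = (\text{support of } \hat y \text{ restricted to matroid-feasible part}) \cup \{u\}$ with $|\{u\}| \le 1$, which completes the proof.
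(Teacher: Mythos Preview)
Your overall structure matches the paper's proof closely: you correctly identify that $\P_1$ is a face of the matroid base polytope (hence itself a matroid polytope), that $\P_2$ is a partition-matroid polytope, that $\P' = \P_1 \cap \P_2$ therefore has integral extreme points, and that the only danger is line~4 of \textsc{RoundFinalPath}. Where you diverge from the paper is precisely at the step you flag as the ``main obstacle,'' and there your argument has a genuine gap.

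Your claim that an internal $\R$-vertex $j$ on the path must have $\hat{y}(F_j)=1$ because ``$F_j$ contains at least two fractional variables of $y'$, so $\hat{y}(F_j)\le 1$ forces $\hat{y}(F_j)=1$'' is a non sequitur: nothing prevents an integral extreme point from setting \emph{both} of those variables to $0$, which would give $\hat{y}(F_j)=0$. Your fallback arguments (a budget conservation, an Eulerian/parity count, a ``mass entering versus leaving'' heuristic) are all gestured at but not actually carried out, and the first of them is false as stated since $\P_2$ imposes only inequalities on the on-path clusters.

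The paper's argument uses the \emph{left} side of the bipartite graph, which you never exploit. Each internal $\L$-vertex $i$ on the path has exactly two incident path-edges, say $v_{2k-1},v_{2k}$, and these are the only fractional variables in $O_i$. Since $\hat{y}\in\P_1$ enforces the equality $\hat{y}(O_i)=b_{O_i}\in\mathbb{Z}$ and all other coordinates of $O_i$ are already integral, one gets $\hat{y}_{v_{2k-1}}+\hat{y}_{v_{2k}}=1$ for every $k=1,\dots,\ell$. Thus exactly $\ell$ of the $2\ell$ path variables equal $1$; combined with the constraint $\hat{y}(F_j)\le 1$ for on-path clusters, these $\ell$ ones are spread over $\ell+1$ clusters with at most one per cluster, so exactly one cluster is left with $\hat{y}(F_j)=0$. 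That is the clean counting argument you were looking for; once you have it, the rest of your write-up goes through.
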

\begin{proof}
Let us focus on the procedure \textsc{RoundFinalPath}. Recall that the polytope $\P'$ in \textsc{RoundFinalPath} is the intersection of the following two polytopes: 
$$\P_1 =  \left\{z \in [0,1]^V: z(U) \leq r_\M(U) ~\forall U \subseteq V \wedge z(O_i) = b_{O_i}  ~\forall i \in \L\setminus\{0\} \wedge z_i=0 ~\forall i:y_i=0 \right\},$$
and
$$\P_2 = \{z \in [0,1]^V:  z(F_j) = y(F_j) ~\forall j \in V'\setminus J \wedge z(F_j) \leq 1 ~\forall j \in J \},$$ 
where $J \subseteq \R$ is the set of vertices in $\R$ on the path $\vec{v}$.

By construction, $\P_1$ is the face of the matroid base polytope $\P_\M$ corresponding to all tight constraints of $y$. It is well-known that $\P_1$ itself is also a matroid base polytope. By Propositions \ref{prop:invariant} and \ref{prop:invariant2}, we have $y \in \P_1$ and $y \in \P_2$. Thus, $y \in \P$ which implies that $\P \neq \emptyset$. Moreover, $\P_2$ is a partition matroid polytope. (Observe that $z(F_j) = y(F_j) \in \{0,1\} ~\forall j \in V' \setminus J$ since all fractional variables are on the path $\vec{v}$.) Therefore, $\P = \P_1 \cap \P_2$ has integral extreme points and the point $\hat{y}$ chosen in line $3$ is integral.

Finally, recall that $\vec{v} = (v_1, v_2, \ldots, v_{2\ell})$ is a simple path with both endpoints in $\R$. The constraints of $\P_1$ and integrality of $b_{O_i}$'s ensure that $\hat{y}_{v_1}+\hat{y}_{v_2}=1, \hat{y}_{v_3}+\hat{y}_{v_4}=1, \ldots, \hat{y}_{v_{2\ell-1}}+\hat{y}_{v_{2\ell}}=1$. In other words, every vertex $i \in \L$ on the path will be ``matched'' with exactly one vertex in $\R$. Thus, there can be at most one vertex $j \in \R$ on the path such that $\hat{y}(F_j) = 0$ in line $4$. Opening $u \in F_j$ adds one extra facility to our solution.
\end{proof}

Recall that $\C$ is the (random) set of all clients within radius $3R$ from some center in $\S$, where $R$ is the optimal radius. The following two propositions will conclude our analysis.

\begin{proposition} $|\C| \geq t$ with probability one.
\end{proposition}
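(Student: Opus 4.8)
The plan is to track a single linear functional — the fractional coverage count — through the entire rounding process and show it never decreases, starting from a value at least $t$. Define $g(z) := \sum_{j \in V'} c_j \, z(F_j)$ for $z \in [0,1]^V$; this is exactly the quantity whose integral value counts (with multiplicity bounded by disjointness of the $F_j$) the clients within radius $3R$ of the chosen centers. First I would observe, exactly as in Proposition~\ref{prop:fbound2}, that the initial $y'$ set in lines 2--3 of \textsc{PseudoFRMCenterRound} satisfies $g(y') = \sum_{j \in V'} c_j s_j \geq \sum_{j \in V} s_j \geq t$, using the greedy ordering in \textsc{RFiltering} and constraint~(\ref{ineq:cov}).

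Next I would check that each of the three rounding subroutines invoked in the while-loop leaves $g(y')$ unchanged. For \textsc{RoundCycle}: every right-vertex $j \in \R$ on the cycle is incident to exactly two cycle edges, one being incremented and one decremented at equal rates, so $y'(F_j)$ is invariant, hence $g$ is invariant (for $j=0$, $c_0$ should be read as $0$ so $F_0$ contributes nothing). For \textsc{RoundSinglePath}: the $\R$-endpoint is the only right-vertex of odd incidence, and the argument in Proposition~\ref{prop:invariant} shows the $\L$-endpoint must be $O_0$; hence every $j \in \R$ with $c_j > 0$ still has two incident edges moving oppositely, so again $g$ is preserved. For \textsc{RoundTwoPaths}: here the two $\R$-endpoints of each path have odd incidence and the net change in $y'(F_j)$ for an endpoint $j$ equals $\pm\delta$ times the coefficient on its incident edge; the scaling factors $\Delta_1/\Delta_2$ and the choice $V_1^+, V_1^-, V_2^+, V_2^-$ are designed precisely so that the combined change is $\delta\bigl((c_{j_1}-c_{j_2}) - \tfrac{\Delta_1}{\Delta_2}(c_{j_1'}-c_{j_2'})\bigr) = \delta(\Delta_1 - \Delta_1) = 0$, with all interior right-vertices having two oppositely-moving edges. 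Since the random choice in lines 10--11 only selects between $y'+\delta_1\vec r$ and $y'-\delta_2\vec r$, and $g$ is linear with $g(\vec r) = 0$, the value $g(y')$ is unchanged with probability one.

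Then I would handle the terminal step, \textsc{RoundFinalPath}. The integral point $\hat y$ is an extreme point of $\P' = \P_1 \cap \P_2$, and $\P_2$ enforces $z(F_j) = y'(F_j)$ for $j \in V' \setminus J$; for these $j$ the fractional coverage is untouched. For $j \in J$ (on the path $\vec v$), Proposition~\ref{prop:feasibility} shows every left-vertex on the path gets matched to exactly one $\R$-vertex, so at most one $j \in J$ has $\hat y(F_j) = 0$, and line 4 forces even that one to have an open center, i.e. $\hat y(F_j) = 1$ after the fix; meanwhile the remaining $j \in J$ already satisfy $\hat y(F_j) = 1$. Thus $g(\hat y) \geq g(y')|_{\text{before \textsc{RoundFinalPath}}}\geq t$ — opening the extra facility can only increase $g$. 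Finally, since the $F_j$ for $j \in V'$ are pairwise disjoint and each $c_j$ counts distinct clients within radius $2R$ of cluster center $j$, triangle inequality (as in Proposition~\ref{prop:fbound}) gives that the number of distinct clients within radius $3R$ of $\S$ is at least $g(\hat y) = g(Y) \geq t$, so $|\C| \geq t$ with probability one.

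The main obstacle is the \textsc{RoundTwoPaths} case: one must verify carefully that the coefficient bookkeeping (the roles of $V_1^\pm, V_2^\pm$ and the ratio $\Delta_1/\Delta_2$, together with $c_{j_1}\ge c_{j_2}$, $c_{j_1'}\ge c_{j_2'}$) really does produce a direction $\vec r$ with $g(\vec r)=0$ and that $\Delta_2 \neq 0$ (which needs the two path-endpoints in $\R$ to have distinct $c$-values, or a separate argument when $\Delta_2 = 0$). The cycle and single-path cases, and the final-path step, are then essentially the same invariant-preservation argument already used for the matroid constraints in Proposition~\ref{prop:invariant}.
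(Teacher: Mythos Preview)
Your overall strategy---track the linear functional $g(z)=\sum_{j\in V'}c_j\,z(F_j)$ and show it never drops below $t$ through every stage of the rounding---is exactly the paper's approach, and your treatment of \textsc{RoundCycle}, \textsc{RoundTwoPaths}, and \textsc{RoundFinalPath} matches the paper's argument.

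There is, however, a genuine gap in your \textsc{RoundSinglePath} case. You write that ``the argument in Proposition~\ref{prop:invariant} shows the $\L$-endpoint must be $O_0$; hence every $j\in\R$ with $c_j>0$ still has two incident edges moving oppositely.'' The first clause is correct, but the ``hence'' is a non-sequitur: knowing the $\L$-endpoint is $O_0$ tells you nothing about which $F_j$ is the $\R$-endpoint. The symmetric argument you seem to be invoking---that the $\R$-endpoint must be $F_0$ because otherwise an integrality constraint would force degree $\geq 2$---fails, since $y'(F_j)$ for $j\in V'$ is not constrained to be an integer (only $\leq 1$). In fact the $\R$-endpoint is \emph{never} $F_0$, because every $i\in F_0$ has $y'_i=0$ from line~3 and so never appears as an edge of $H$. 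Thus the $\R$-endpoint is some $F_j$ with $j\in V'$ and $c_j>0$, and it is incident to exactly one edge of the path. What actually saves you (and what the paper uses) is the sign pattern: with $|\vec v|=2\ell+1$ odd and $r_{v_k}=(-1)^{k+1}$, both endpoint edges carry sign $+1$, so the single edge at the $\R$-endpoint is \emph{increased} by $\delta^*\geq 0$. Hence $y'(F_j)$ weakly increases for that $j$, and $g$ weakly increases rather than being exactly preserved.

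Your flag on the $\Delta_2=0$ case in \textsc{RoundTwoPaths} is legitimate and the paper does not address it; it is handled by observing that if $\Delta_2=0$ one may either swap the roles of the two paths (if $\Delta_1>0$) or, when $\Delta_1=\Delta_2=0$, drop the coupling entirely since moving along either path alone already keeps $g$ invariant.
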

\begin{proof}
Let $f$ denote the function defined in Algorithm \textsc{RMCenterRound} (i.e., $f(z) = \sum_{j \in V'} \sum_{i \in F_i}z_i$ for any $z \in [0,1]^V$.) Using a similar argument as in the proof of Proposition \ref{prop:fbound}, one can easily verify that there are at least $f(Y)$ vertices in $V$ that are within radius $3R$ from some open center in $\S$. Next, it suffices to show that $f(Y) \geq t$.

By definition of $y'$ in lines $2$ and $3$, we have $f(y') \geq t$ (see the proof of Proposition \ref{prop:fbound2}.) We now claim that $f(y')$ is not decreasing after each iteration of the rounding scheme. We check the following cases:
\begin{itemize}
	\item Case $y'$ is rounded by \textsc{RoundCycle}: observe that $y'(F_j)$ is preserved for all $j \in \R$ since $j$ is adjacent to two edges and we increase/decrease the corresponding variables by the same amount. Thus, $f(y')$ is unchanged.
	\item Case $y'$ is rounded by \textsc{RoundSinglePath}: if $j \in \R$ is not the endpoint of the path then $j$ is adjacent to two edges on the path and $y'(F_j)$ is unchanged. If $j$ is the endpoint, then we increase the variable on the adjacent edge; and hence, $y'(F_j)$ will increase. See Figure \ref{fig:single_path}.
	
	\item Case $y'$ is rounded by \textsc{RoundTwoPaths}: again, for any $j \in \R \setminus \{j_1, j_2, j_1', j_2'\}$,  the value of $y'(F_j)$ remains unchanged in the process. We now verify the change in $f$ caused by the four endpoints $j_1, j_2, j_1'$, and $j_2'$. Suppose $y_1$ is returned, the contribution of these points in $f(y_1)$ is
	\begin{align*}
		& c_{j_1}y_1(F_{j_1}) + c_{j_2}y_1(F_{j_2})+ 
		c_{j_1'}y_1(F_{j_1'}) + c_{j_2'}y_1(F_{j_2'}) \\
		& =  c_{j_1}(y'(F_{j_1})+\delta_1) + c_{j_2}(y'(F_{j_2})-\delta_1) +
		c_{j_1'}\left(y(F_{j_1'})-\delta_1 \frac{\Delta_1}{\Delta_2}\right) + c_{j_2'}\left(y(F_{j_2'})+\delta_1\frac{\Delta_1}{\Delta_2}\right) \\
		&=  c_{j_1}y'(F_{j_1}) + c_{j_2}y'(F_{j_2})+ 
		c_{j_1'}y'(F_{j_1'}) + c_{j_2'}y'(F_{j_2'}) + \delta_1(c_{j_1}-c_{j_2})+ \delta_1\frac{\Delta_1}{\Delta_2}(c_{j_2'}-c_{j_1'}) \\
		&= c_{j_1}y'(F_{j_1}) + c_{j_2}y'(F_{j_2})+ 
		c_{j_1'}y'(F_{j_1'}) + c_{j_2'}y'(F_{j_2'}) .
	\end{align*}
Hence, $f(y_1) = f(y')$. Similarly, one can verify that $f(y_2) = f(y')$.
	
	\item Case $y'$ is rounded by \textsc{RoundFinalPath}: we have shown in the proof of Proposition \ref{prop:feasibility} that $y'(F_j) = 1$ for all $j \in J$ where $J$ is the set of vertices in $\R$ on the path $\vec{v}$. This fact and the other constraints of $\P_2$ ensure that $y'(F_j)$ is not decreasing for all $j \in V'$.

\end{itemize}
\end{proof}

\begin{proposition} $\Pr[j \in \C] \geq p_j$ for all $j \in V$.
\end{proposition}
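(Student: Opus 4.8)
The plan is to reduce the fairness guarantee for a client $j$ to the event that the cluster $F_k$ which ``dominates'' $j$ in \textsc{RFiltering} ends up containing an open center, and then to show this event has probability at least $s_k$ by treating $y'(F_k)$ as a submartingale across the iterations of the while-loop of \textsc{PseudoFRMCenterRound}.

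First I would dispose of the trivial case $p_j = 0$; otherwise constraint~(\ref{ineq:fair}) gives $s_j \geq p_j > 0$, so $F_j \neq \emptyset$ and $F_j$ is processed by \textsc{RFiltering}. Hence there is a cluster center $k \in V'$ with $F_k \cap F_j \neq \emptyset$ and $s_k \geq s_j$: take $k = j$ if $j \in V'$, and otherwise let $k$ be the center that marked $F_j$, which was handled earlier in \textsc{RFiltering} and therefore has the larger $s$-value. Next comes a triangle-inequality step: if some $i \in F_k$ is opened (that is, $Y_i = 1$), then picking a witness $v \in F_k \cap F_j$ gives $d(i,j) \leq d(i,k) + d(k,v) + d(v,j) \leq 3R$, so $j \in \C$. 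Since the vector $Y$ returned by the algorithm is integral (Proposition~\ref{prop:feasibility}) and satisfies $Y(F_k) \leq 1$ (Proposition~\ref{prop:invariant2}), we have $Y(F_k) \in \{0,1\}$, whence $\Pr[j \in \C] \geq \Pr[Y(F_k) = 1] = \E[Y(F_k)]$.

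The heart of the argument is the claim $\E[Y(F_k)] \geq s_k$. At initialization (lines~2--3 of \textsc{PseudoFRMCenterRound}) we have $y'(F_k) = \sum_{i \in F_k} x_{ik} = s_k$, and I would follow this quantity through the while-loop. \textsc{RoundCycle} leaves $y'(F_k)$ unchanged: the $\R$-vertex corresponding to $F_k$ is, if it lies on the cycle at all, incident to exactly two consecutive cycle edges whose labels are raised and lowered by equal amounts. \textsc{RoundSinglePath} also leaves $y'(F_k)$ unchanged unless $F_k$ is the $\R$-endpoint of the path, in which case its single incident label is increased, so $y'(F_k)$ only grows (and stays at most $1$ by Proposition~\ref{prop:invariant2}). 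For \textsc{RoundTwoPaths}, the two candidate outcomes $y' + \delta_1 \vec r$ and $y' - \delta_2 \vec r$ are returned with probabilities $\delta_2/(\delta_1+\delta_2)$ and $\delta_1/(\delta_1+\delta_2)$, so the expected change in $y'(F_k) = \sum_{v \in F_k} y'_v$ is $\frac{\delta_2\delta_1 - \delta_1\delta_2}{\delta_1+\delta_2}\sum_{v \in F_k} r_v = 0$; note that, unlike the coverage proof, we only need this identity in expectation, so the $\Delta_1/\Delta_2$ scaling plays no role. Finally, \textsc{RoundFinalPath} does not decrease $y'(F_j)$ for any $j \in V'$, exactly as established in the proof of the coverage bound $|\C| \geq t$. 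Summing these per-iteration facts, $\E[Y(F_k)] \geq s_k$, and therefore $\Pr[j \in \C] \geq \E[Y(F_k)] \geq s_k \geq s_j \geq p_j$.

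I expect the only genuine work to be the per-iteration bookkeeping of the previous paragraph: in \textsc{RoundTwoPaths} the two chosen paths may share edges, so the $\R$-vertex for $F_k$ can be incident to up to four moved labels, and one must check that the coordinatewise expectation identity still applies verbatim (it does, since it depends only on the return probabilities $\delta_1/(\delta_1+\delta_2)$, $\delta_2/(\delta_1+\delta_2)$ and not on the structure of $\vec r$); and one must recall from the coverage analysis why \textsc{RoundFinalPath} never decreases $y'(F_k)$. Given these local statements, assembling them into the submartingale inequality and chaining with \textsc{RFiltering} and the triangle inequality is routine.
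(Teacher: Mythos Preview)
Your proposal is correct and follows essentially the same approach as the paper: reduce to the dominating cluster $F_k$ via \textsc{RFiltering} and the triangle inequality, then track $y'(F_k)$ through the four rounding subroutines to show $\E[Y(F_k)] \geq s_k$. Your treatment of \textsc{RoundTwoPaths} via linearity of expectation over the coordinates of $\vec r$ is in fact slightly cleaner than the paper's endpoint-by-endpoint case analysis, but the argument is otherwise the same.
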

\begin{proof} Let $y'$ be the vector defined as in lines $2$ and $3$ of \textsc{PseudoFRMCenterRound}. It suffices to show that, for all $j \in V'$, $\Pr[Y(F_j) = 1] \geq y'(F_j)$. (Note that $y'(F_j) \geq p_j$ by constraint (\ref{ineq:fair}).) This is because, for any vertex $k \in V \setminus V'$, the algorithm \textsc{RFiltering} guarantees that there exists $j \in V'$ such that $F_k \cap F_j \neq \emptyset$, and $y'(F_j) = \sum_{i \in B_j} x_{ij} \geq  \sum_{i \in V: d(i,k) \leq R} x_{ik} = y'(F_k)$. Notice that the event $Y(F_j)=1$ means there is some open center $F_j$ and the distance from $k$ to this center should be at most $3R$. Thus,  
$$\Pr[k \in \C] \geq \Pr[Y(F_j)=1] \geq y'(F_j) \geq y'(F_k) \geq p_k,$$ 
by constraint (\ref{ineq:fair}).

Fix any $j \in V'$. Recall that $Y$ is obtained by rounding $y'$ and, by Proposition \ref{prop:invariant2} and the proof of \ref{prop:feasibility}, we have $Y(F_j) \in \{0,1\}$ and $\Pr[Y(F_j)=1] = \E[Y(F_j)]$. We now show that the expected value of $y'(F_j)$ does not decrease after each iteration of the while-loop.
\begin{itemize}
	\item Case $y'$ is rounded by \textsc{RoundCycle}: $y'(F_j)$ is unchanged as before.
\item Case $y'$ is rounded by \textsc{RoundSinglePath}: if $j$ is not the endpoint of $\vec{v}$ then $y'(F_j)$ is unchanged. Otherwise, $y'(F_j)$ is increase by some $\delta_1 > 0$ with probability one.
	\item Case $y'$ is rounded by \textsc{RoundTwoPaths}: again, if $j \notin \{j_1, j_2, j_1', j_2'\}$ then $y'(F_j)$ is unchanged. Now suppose $j = j_1$. With probability $\delta_1/(\delta_1+\delta_2)$, $y'(F_{j_1})$ is increase by $\delta_2$, and, with the remaining probability, it is decreased by $\delta_1$. Thus, the expected change in $y'(F_{j_1})$ is 
	$$\frac{\delta_1}{\delta_1+\delta_2}(\delta_2) + \frac{\delta_2}{\delta_1+\delta_2}(-\delta_1) = 0. $$
Similarly, one can verify that the expected values of $y'(F_{j_2}), y'(F_{j_1'})$, and $ y'(F_{j_2'})$ remain the same.

	\item Case $y'$ is rounded by \textsc{RoundFinalPath}: we have showed in the proof of Proposition \ref{prop:feasibility} that if $j$ is on the path $\vec{v}$, then $Y(F_j) = 1$. Otherwise, the constraints of $\P_2$ ensure that $Y(F_j) = y'(F_j)$.
\end{itemize}

\end{proof}

So far we have proved the following theorem.
\begin{theorem}
\textsc{PseudoFRMCenterRound} will return a random solution $\S$ such that 
\begin{itemize}
	\item $\S$ is the union of some basis of $\M$ with (at most) one extra vertex,
	\item $|\C| \geq t$ with probability one,
	\item $\Pr[j \in \C] \geq p_j$ for all $j \in V$.
\end{itemize}
\label{thm:pseudo_matround}
\end{theorem}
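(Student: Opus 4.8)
## Proof proposal for Theorem \ref{thm:pseudo_matround}

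The plan is to assemble Theorem \ref{thm:pseudo_matround} directly from the four propositions already established about \textsc{PseudoFRMCenterRound}, together with the termination result. The three bullet points correspond exactly to what has been proved, so the ``proof'' is mostly a matter of collecting the pieces and checking that the one-extra-vertex structure claimed in Proposition \ref{prop:feasibility} really gives a basis plus one element (as opposed to an arbitrary independent set plus one).

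First I would invoke the termination proposition to guarantee that the algorithm halts in polynomial time with an integral vector $Y$, and that the while-loop exits through \textsc{RoundFinalPath}. Next I would cite Proposition \ref{prop:feasibility}: the set $\S = \{i : Y_i = 1\}$ is an independent set of $\M$ plus at most one extra vertex $u$. To upgrade ``independent set'' to ``basis,'' I would argue from the invariant in Proposition \ref{prop:invariant} that $y'$ always remains in the face $D$ of the matroid \emph{base} polytope $\P_\M$ determined by the tight rank constraints; in particular $y'(V) = r_\M(V)$ is preserved throughout, and \textsc{RoundFinalPath} picks an extreme point $\hat y$ of $\P_1 \cap \P_2$ where $\P_1$ is itself the base polytope of the contracted/restricted matroid. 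Hence the integral point $\hat y$ restricted to the matroid constraints is the indicator of a basis, and the single extra vertex opened in line 4 of \textsc{RoundFinalPath} (when some $F_j$ on the path has $\hat y(F_j) = 0$) is the ``plus one.'' Then the coverage bound $|\C| \ge t$ follows verbatim from the proposition showing $f(Y) \ge t$ and the Proposition \ref{prop:fbound}-style observation that $f(Y)$ lower-bounds the number of clients within radius $3R$ of $\S$. Finally the fairness bound $\Pr[j \in \C] \ge p_j$ is exactly the content of the last proposition, via the martingale property $\E[Y(F_j)] = y'(F_j) \ge p_j$ for $j \in V'$ and the \textsc{RFiltering} domination argument $y'(F_j) \ge y'(F_k) \ge p_k$ for $k \notin V'$.

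I expect the only non-routine point to be the basis (rather than merely independent set) claim: one must be careful that passing from $\P_\M$ (base polytope) to the face $\P_1$ and then intersecting with the partition matroid $\P_2$ does not lose the full-rank property. The clean way to see it is that $\P_1$ is a face of $\P_\M$, so every point of $\P_1$ has coordinate sum $r_\M(V)$; since $\hat y \in \P_1$ is integral, it is the characteristic vector of a basis; opening one more vertex $u$ (outside that basis) then yields ``a basis of $\M$ plus at most one extra vertex,'' which is what the theorem asserts. Everything else is a direct citation, so after establishing this the proof is complete.

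\begin{proof}[Proof of Theorem \ref{thm:pseudo_matround}]
This is an immediate consequence of the preceding propositions. The termination proposition shows \textsc{PseudoFRMCenterRound} halts in polynomial time, exiting the while-loop through \textsc{RoundFinalPath} with an integral vector $Y$. By Proposition \ref{prop:invariant}, throughout the execution $y'$ remains in the face of the matroid \emph{base} polytope $\P_\M$ cut out by the currently tight rank constraints; in particular $y'(V)=r_\M(V)$ is preserved. In \textsc{RoundFinalPath} the polytope $\P_1$ is exactly this face, hence itself a matroid base polytope, and $\P_2$ is a partition matroid polytope; as shown in the proof of Proposition \ref{prop:feasibility}, the extreme point $\hat y$ of $\P_1\cap\P_2$ is integral, so $\{i:\hat y_i=1\}$ is a basis of $\M$, and line $4$ opens at most one additional vertex $u$. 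Thus $\S$ is the union of some basis of $\M$ with at most one extra vertex.

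The bound $|\C|\ge t$ with probability one is precisely the proposition showing $f(Y)\ge t$, combined with the argument (as in Proposition \ref{prop:fbound}) that there are at least $f(Y)$ vertices within radius $3R$ of $\S$. Finally, $\Pr[j\in\C]\ge p_j$ for all $j\in V$ is exactly the last proposition, which uses $\E[Y(F_j)]=y'(F_j)\ge p_j$ for $j\in V'$ together with the \textsc{RFiltering} guarantee $y'(F_j)\ge y'(F_k)\ge p_k$ for $k\notin V'$.
\end{proof}
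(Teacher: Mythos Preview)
Your proposal is correct and takes essentially the same approach as the paper: the paper's ``proof'' is literally the sentence ``So far we have proved the following theorem,'' i.e., the theorem is just a packaging of the preceding propositions on feasibility, coverage, and fairness. Your one genuine addition---observing that since $\P_1$ is a face of the matroid \emph{base} polytope $\P_\M$, the integral $\hat y$ must be the indicator of a basis rather than merely an independent set---is a legitimate clarification, since Proposition~\ref{prop:feasibility} as stated only claims ``independent set'' while the theorem asserts ``basis''; the paper's own proof of that proposition does note that $\P_1$ is a matroid base polytope, but never explicitly draws the conclusion you do.
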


\subsubsection{An algorithm satisfying the matroid constraint exactly}

Using a similar technique as in Section \ref{sec:knap_exact}, we will develop an approximation algorithm for the \textsf{FRMatCenter} problem which always returns a feasible solution. Let $\epsilon > 0$ a small parameter to be determined. Let $\U$ denote the collection of all possible sets of verticies with size at most $\lceil 1/\epsilon \rceil$ such that $U$ is an independent set of $\M$. Again, we have $|\U| \leq n^{O(1/\epsilon)}$. Suppose $R$ is the optimal radius to our instance. For any $i \in V$, recall that $\RBall(i, U, R)$ is the set of red vertices within radius $3R$ from $i$.

Consider the \emph{configuration} polytope $\P_\textsf{config3}$ containing points $(x,y,q)$ with the following constraints:
\begin{align*}
  \begin{cases}
      \sum_{U \in \U}q_U = 1     & \quad \\
      \sum_{i \in B_j} x^{U}_{ij} \leq q_U    & \quad \forall j \in V, U \in \U \\
      \sum_{U \in \U} \sum_{i \in B_j} x^{U}_{ij} \geq p_j    & \quad \forall j \in V \\
       x^{U}_{ij} \leq y^{U}_i   & \quad \forall i,j \in V, U \in \U\\
       \sum_{i \in W}  y^{U}_i \leq q_U r_\M(W)    & \quad  \forall U \in \U, W \subseteq V \\
       \sum_{j \in V} \sum_{i \in B_j} x^{U}_{ij} \geq q_U t   & \quad \\
       y^{U}_i = 1 & \quad \forall U \in \U, i \in U\\
       y^{U}_i = 0 & \quad \forall U \in \U, i \in V \setminus U, |\RBall(i, U, R)| \geq \epsilon n \\
       x^{U}_{ij}, y^{U}_i, q_U \geq 0   & \quad \forall i,j \in V, U \in \U \\
  \end{cases}
\end{align*}

We first claim that $\P_\textsf{config3}$ is a valid relaxation for the problem.

\begin{proposition} The polytope $\P_\textsf{config3}$ is non-empty.
\end{proposition}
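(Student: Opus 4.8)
The plan is to follow the template established in the proofs of Propositions~\ref{prop:relaxation_knapcenter} and~\ref{prop:valid_knap}, exhibiting an explicit feasible point obtained from the optimal distribution $\D$. Fix an optimal distribution $\D$ and let $\S$ be drawn from $\D$; since $\S$ is a basis of $\M$ with probability one, it is in particular an independent set. I would run Algorithm~\ref{algo:genus} (the same greedy ``blue/red'' peeling procedure as in Section~\ref{sec:knap_exact}) on $\S$ to produce the random subset $U_\S \subseteq \S$. The same counting argument as before shows $|U_\S| \le \lceil 1/\epsilon \rceil$, and crucially $U_\S \subseteq \S$ is independent in $\M$, so $U_\S \in \U$ with the new definition of $\U$. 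Then set $q_U := \Pr[U_\S = U]$, $x^U_{ij} := \Pr[U_\S = U \wedge \text{$j$ is connected to $i$}]$, and $y^U_i := \Pr[U_\S = U \wedge i \in \S]$.

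Next I would verify the constraints one family at a time. The constraint $\sum_{U} q_U = 1$ is immediate; the bounds $\sum_{i \in B_j} x^U_{ij} \le q_U$, $\sum_{U}\sum_{i \in B_j} x^U_{ij} \ge p_j$, $\sum_{j}\sum_{i \in B_j} x^U_{ij} \ge q_U t$, $x^U_{ij} \le y^U_i$, and the forced values $y^U_i = 1$ for $i \in U$ and $y^U_i = 0$ when $|\RBall(i,U,R)| \ge \epsilon n$ and $i \notin U$ all carry over verbatim from the proof of Proposition~\ref{prop:valid_knap}, using the definition of $U_\S$ via Algorithm~\ref{algo:genus} and the fact that any $i$ with large red ball around it is excluded from $\S$ (hence from every term defining $y^U_i$).

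The only genuinely new constraint is the matroid rank inequality $\sum_{i \in W} y^U_i \le q_U\, r_\M(W)$ for all $W \subseteq V$ and $U \in \U$, which replaces the knapsack inequality $\sum_i w_i y^U_i \le q_U B$ of $\P_\textsf{config2}$. For this I would condition on the event $U_\S = U$ (assuming $q_U > 0$, the case $q_U = 0$ being trivial): since $\S$ is independent in $\M$ with probability one, we have $|\S \cap W| \le r_\M(W)$ always, hence
\[
\sum_{i \in W} \frac{y^U_i}{q_U} \;=\; \E\big[\,|\S \cap W| \,\big|\, U_\S = U\,\big] \;\le\; r_\M(W),
\]
which is exactly the desired inequality after multiplying through by $q_U$. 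Combining all of this shows $(x,y,q) \in \P_\textsf{config3}$, so the polytope is non-empty.

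I do not expect a real obstacle here: the argument is essentially identical to Proposition~\ref{prop:valid_knap}, with the knapsack bookkeeping swapped for the observation that independence of $\S$ gives the rank constraint in expectation. The only point requiring a line of care is re-checking that $U_\S$, as produced by Algorithm~\ref{algo:genus}, is simultaneously small \emph{and} independent, so that it is a legitimate element of the new index set $\U$; both follow because $U_\S$ is a subset of the independent set $\S$ and the peeling procedure adds at most $\lceil 1/\epsilon\rceil$ elements.
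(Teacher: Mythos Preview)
Your proposal is correct and matches the paper's proof essentially line for line: the paper also draws $\S$ from the optimal distribution, builds $U_\S$ via the same peeling procedure, observes $U_\S\subseteq\S$ is independent of size at most $\lceil 1/\epsilon\rceil$, defines $(q,x,y)$ identically, defers the shared constraints to Propositions~\ref{prop:relaxation_knapcenter} and~\ref{prop:valid_knap}, and handles the new rank constraint by the same conditional-expectation computation $\E[|\S\cap W|\mid U_\S=U]\le r_\M(W)$. One minor slip: the optimal $\S$ is only guaranteed to be an \emph{independent set} of $\M$, not a basis, but since you immediately use only independence this does not affect the argument.
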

\begin{proof}
Suppose $\S$ is a solution drawn from the optimal distribution $\D$. We compute a subset $U_\S$ of $\S$ using a similar procedure as in the proof of Proposition \ref{prop:valid_knap}. Recall that $ |\RBall(i, U_\S, R)| < \epsilon n$ for all $i \in \S \setminus U_\S$ and $|U_\S| \leq \lceil 1/\epsilon \rceil$. Since $U_\S \subseteq \S$, $U_\S$ is also an independent set of $\M$, implying that $U_\S \in \U$.

Now for any $U \in \U$, we set $q_U := \Pr[U_\S = U]$. Let $x^{U}_{ij}$ be probability of the joint event: $U_\S = U$ and $j$ is connected to $i$. Finally, let $y^{U}_i$ be the probability of the joint event: $U_\S = U$ and $i \in \S$. Then it is clear that $\sum_{U \in \U}q_U = 1$. Using similar arguments to the proofs of Propositions \ref{prop:valid_knap} and \ref{prop:relaxation_knapcenter}, we have the following inequalities:
\begin{align}
\sum_{i \in B_j} x^{U}_{ij} &\leq q_U,     \quad \forall j \in V, U \in \U \\
      \sum_{U \in \U} \sum_{i \in B_j} x^{U}_{ij} &\geq p_j,     \quad \forall j \in V \\
      \sum_{j \in V} \sum_{i \in B_j} x^{U}_{ij} &\geq q_U t,    \quad \\
        y^{U}_i &=  0  \quad \forall U \in \U, i \in V \setminus U, |\RBall(i, U, R)| \geq \epsilon n.
\end{align}
Recall that $y^U_i/q_U$ is the $\Pr[ i \in S \mid U = U_\S]$. Since $\S$ is independent with probability one, we have $  |\S \cap W | \leq r_\M(W)$ for all $W \subseteq V$. Therefore,
\begin{align*}
	r_\M(W) &\geq \E[|\S \cap W | ~|~ U = U_\S] = \sum_{i \in W}\Pr[i \in \S ~|~ U = U_\S] = \sum_{i \in W}y^U_i/q_U,
\end{align*}
for all $W \subseteq V$.

The other constraints can be verified easily. We conclude that $(x,y,q) \in \P_\textsf{config3}$.
\end{proof}

Next, let us pick any $(x,y,q) \in \P_\textsf{config3}$ and use the following algorithm to round it. 

\begin{algorithm}[h]
\caption{$\textsc{FRMCenterRound}\left(x,y,q \right)$}
\begin{algorithmic}[1]
\STATE Randomly pick a set $U \in \U$ with probability $q_U$
\STATE Let $x'_{ij} \gets x^{U}_{ij}/q_U$ and $y'_i \gets \min\{y_i^U/q_U, 1\}$
\STATE $\S' \gets \textsc{PseudoFRMCenterRound}\left(x', y' \right)$
\STATE Let $i^*$ be the ``extra'' vertex in $\S'$.
\RETURN $\S = \S' \setminus \{i\}$
\end{algorithmic} 
\label{algo:rfknapcenter}
\end{algorithm}

\bigskip \noindent \textbf{Analysis.} We are now ready to prove the second part of Theorem \ref{thm:FRMatCenter}. Let us fix any $\gamma > 0$ and set $\epsilon := \gamma^2$. Also, let $\e(U)$ be the event that Algorithm~\ref{algo:rfknapcenter} selects $U \in \U$. Note that $(x', y')$ satisfies the following inequalities:
\begin{align*}
	\sum_{j \in V} \sum_{i \in B_j} x_{ij}' &\geq t, \\
	\sum_{i \in B_j} x_{ij}' &\leq 1, \quad \forall j \in V, \\
	\sum_{i \in B_j} x_{ij}' &= \sum_{i \in B_j} x_{ij} / q_U, \quad  \forall j \in V, \\
	x_{ij}' &\leq y_i', \quad \forall i,j \in V, \\
	\sum_{i \in W}  y_i' &\leq  r_\M(W),    ~\quad \forall W \subseteq V.
\end{align*}

Moreover, every $i \in U$ satisfies $y'_i = 1$; every $i \in V \setminus U$ with $\RBall(i, U, R) \geq \epsilon n$ satisfies $y'_i = 0$.

Recall that the algorithm \textsc{PseudoFRMCenterRound} will return a solution $\S'$, which is the union of a basis of $\M$ with an extra center $i^*$. The vertex $i^*$ has $y_{i^*} \in (0,1)$, and in particular $i^* \notin U$. By removing $i^*$ from $\S'$, we ensure that the resulting set is a basis of $\M$ with probability one.

Now we shall prove the coverage guarantee. By Theorem \ref{thm:pseudo_matround}, $\S'$ covers at least $t$ vertices within radius $3R$. Any blue vertex can be connected to some center in $U$ and hence is not affected by the removal of $i^*$. Because $i^*$ covers at most $\epsilon n$ other red vertices, we have $$|\C| \geq t - \epsilon n = 1 - \gamma^2 n.$$

For any $j \in V$, let $X_j$ be the indicator random variable for the event that $d(j, \S') \leq 3R$ but $d(j, \S' \setminus \{i^* \} ) > 3 R$. We say that $j$ is a bad vertex iff $\E[X_j] \geq \gamma$, otherwise, $j$ is good. Again, $\sum_{j \in V} X_j \leq \epsilon n$ with probability one. Thus, there can be at most $\epsilon n / \gamma$ bad vertices. Letting $T$ be the set of good vertices, we have
$$|T| \geq n - \epsilon n/\gamma = (1-\gamma)n.$$ 
By Theorem \ref{thm:pseudo_matround}, $\Pr[j \text{ is covered by }\S'] \geq p_j$. So, for any $j \in T$, we have
\begin{align*}
	\Pr[j \in \C] \geq \Pr[j \text{ is covered by }\S'] - \Pr[X_j = 1] \geq p_j - \gamma.
\end{align*}

\clearpage
\bibliographystyle{plain}
\bibliography{Reference}

\end{document}